\documentclass[11pt]{article}

\usepackage[margin=1in]{geometry}
\usepackage{amsmath,amssymb,amsthm,mathtools,bm}
\usepackage{microtype}
\usepackage{array}
\usepackage{enumitem}
\usepackage{xcolor}
\usepackage{tikz}
\usepackage{aliascnt}
\usepackage[hidelinks]{hyperref}

\newtheorem{theorem}{Theorem}[section]
\newaliascnt{lemma}{theorem}
\newtheorem{lemma}[lemma]{Lemma}
\aliascntresetthe{lemma}
\newaliascnt{proposition}{theorem}
\newtheorem{proposition}[proposition]{Proposition}
\aliascntresetthe{proposition}
\newaliascnt{corollary}{theorem}
\newtheorem{corollary}[corollary]{Corollary}
\aliascntresetthe{corollary}
\theoremstyle{definition}
\newaliascnt{definition}{theorem}

\aliascntresetthe{definition}
\newaliascnt{example}{theorem}

\aliascntresetthe{example}
\theoremstyle{remark}
\newaliascnt{remark}{theorem}

\aliascntresetthe{remark}

\usepackage[nameinlink,capitalize,noabbrev]{cleveref}

\newcommand{\Prob}{\mathbb P}
\newcommand{\one}{\mathbf 1}

\newcommand{\supp}{\operatorname{supp}}

\newcommand{\TT}{\mathrm{TT}}

\newcommand{\cE}{\mathcal E}

\title{Beyond Cut Balance:\\
Spectral Sparsification of the Nonlinear Directed Laplacian}
\author{Yuichi Yoshida\thanks{Supported by JSPS KAKENHI Grant Number 26K21777, 26K21940, 25K24540.}\\
National Institute of Informatics\\
\texttt{yyoshida@nii.ac.jp}
}

\begin{document}
\maketitle

\begin{abstract}
Digraphs with constant cut balance admit nearly linear directed cut
sparsifiers. This condition requires the total arc weights in the two
directions of every cut to be within a constant factor of each other.
We ask whether this condition also permits nearly linear spectral
sparsification with respect to the energy of the nonlinear directed Laplacian.
For a weighted digraph $G=(V,E,w)$, let
\[
    Q_G^+(x)=\sum_{(u,v)\in E}w_{uv}(x_u-x_v)_+^2,
    \qquad (t)_+:=\max\{t,0\}.
\]
This energy agrees with the outgoing-cut function on binary vectors.  A
spectral sparsifier is a nonnegatively reweighted subgraph that preserves
$Q_G^+(x)$ within a factor of $1\pm\varepsilon$ simultaneously for all
$x\in\mathbb R^V$.

We show that cut balance alone does not yield nearly linear spectral
sparsifiers: for constant error, the worst-case support size for simple
unweighted Eulerian digraphs is $\widetilde\Theta(n^{3/2})$, although Eulerian
digraphs are perfectly cut-balanced and admit nearly linear directed cut
sparsifiers.  In contrast, we prove that every $n$-vertex tournament has a
spectral sparsifier with $\widetilde O(n/\varepsilon^3)$ arcs, without any
assumption on its cut balance.  This includes the transitive tournament, whose cut balance
is unbounded.  Thus perfect balance does not guarantee nearly linear spectral
sparsification, while unbounded imbalance does not preclude it.

Finally, we use convex duality to show that preserving $Q_G^+$ also preserves,
for every feasible demand vector, the optimum quadratic cost of a nonnegative
flow.  Hence the guarantee contains information beyond directed cut values.
\end{abstract}
\thispagestyle{empty}
\clearpage
\tableofcontents
\thispagestyle{empty}
\clearpage
\setcounter{page}{1}
\section{Introduction}

General digraphs may require a quadratic number of arcs for directed cut
sparsification~\cite{CohenEtAl2017}, but cut balance identifies a broad class
for which nearly linear sparsification is possible.  For a weighted digraph
$G$ and a vertex set $S$, let $\delta_G^+(S)$ and $\delta_G^-(S)$ denote the arcs leaving and
entering $S$, respectively.  The \emph{cut balance} of
$G$~\cite{EneEtAl2016} is
\[
    \beta(G):=\inf\left\{\beta\ge1:
       w(\delta_G^+(S))\le \beta w(\delta_G^-(S))
       \text{ for every }S\subseteq V(G)\right\},
\]
with $\beta(G)=\infty$ if no finite value satisfies these inequalities.
Applying the inequality also to $V(G)\setminus S$ compares the two directions
of every cut.  Building on the cut sparsifiers of Ikeda and
Tanigawa~\cite{IkedaTanigawa2018}, Cen, Cheng, Panigrahi, and Sun proved that
a digraph with
$\beta(G)\le\beta$ has a weighted subgraph with
$\widetilde O(\beta n/\varepsilon^2)$ arcs that preserves all outgoing cuts
within a factor of $1\pm\varepsilon$~\cite{CenEtAl2021}.  In particular, an
Eulerian digraph has $\beta(G)=1$ and therefore admits a nearly linear cut
sparsifier.

Cen et al. asked whether directed cut sparsification extends to a spectral
notion that still preserves directed cuts.  They proposed preserving
\cite[Section~7]{CenEtAl2021}
\begin{equation}
    \label{eq:intro-directed-energy}
    Q_G^+(x)
    :=
    \sum_{(u,v)\in E(G)}w_{uv}(x_u-x_v)_+^2,
    \qquad (t)_+:=\max\{t,0\},
\end{equation}
for every real potential $x\in\mathbb R^{V(G)}$.  An arc $u\to v$ contributes
only when the potential drops in its direction.  For a vertex set $S$,
\[
    Q_G^+(\mathbf 1_S)=w(\delta_G^+(S)),
\]
so \eqref{eq:intro-directed-energy} extends the outgoing-cut function from
binary indicators to arbitrary real potentials.  The same energy is
associated with the nonlinear directed Laplacian introduced by
Yoshida~\cite{Yoshida2016}.

For comparison, let $L_G$ be the Laplacian matrix of a weighted
undirected graph $G$.  Its energy is the quadratic form
\[
    x^\top L_Gx=\sum_{\{u,v\}\in E(G)}w_{uv}(x_u-x_v)^2,
\]
and every $n$-vertex weighted graph has a
$(1\pm\varepsilon)$-spectral sparsifier with
$O(n/\varepsilon^2)$ edges~\cite{BatsonSpielmanSrivastava2012}.
The squared difference $(x_u-x_v)^2$ is invariant under exchanging $u$ and
$v$, so it cannot distinguish the two orientations of an edge.  In
\eqref{eq:intro-directed-energy}, the term $(x_u-x_v)_+^2$ instead contributes
only when the potential decreases along the arc $u\to v$.

There is also a distinct matrix-based theory of directed Laplacian
sparsification~\cite{CohenEtAl2017,ChuEtAl2018,SachdevaThudiZhao2024,JambulapatiEtAl2025}.
For Eulerian digraphs, it gives nearly linear-size approximations of a
nonsymmetric Laplacian matrix relative to its symmetrization.  These
approximations in particular preserve directed cuts, but they do not require
pointwise preservation of $Q_G^+$.  Throughout this paper, a
\emph{$(1\pm\varepsilon)$-spectral sparsifier} of $G$ is a nonnegatively
reweighted subgraph $H\subseteq G$ such that
\begin{equation}
    \label{eq:intro-sparsifier-guarantee}
    (1-\varepsilon)Q_G^+(x)
    \le Q_H^+(x)
    \le(1+\varepsilon)Q_G^+(x)
    \qquad\forall x\in\mathbb R^{V(G)}.
\end{equation}
Unlike in the undirected case, the set of contributing arcs changes with the
ordering of the vertex potentials: an arc $u\to v$ contributes only when
$x_u>x_v$.  A sparsifier must therefore approximate the energy for all these
orderings simultaneously.  In particular, preserving $Q_G^+$ imposes
constraints from multivalued potentials beyond those detected by cut
indicators.

No subquadratic guarantee is possible for general digraphs, even if one tests
only cut indicators.  Let $A$ and $B$ be two sets of $k$ vertices and include
every arc from $A$ to $B$.  For each $a\in A$ and $b\in B$, the outgoing cut
of $\{a\}\cup(B\setminus\{b\})$ consists only of $a\to b$.  Every
multiplicative sparsifier with error smaller than one must therefore retain
all $k^2=\Theta(n^2)$ arcs~\cite{CohenEtAl2017}.  This example makes it
necessary to identify graph classes with additional structure.

We show that cut balance alone cannot yield a nearly linear answer to this
question and that cut balance is not necessary for nearly linear
sparsification.  We prove these two statements through simple Eulerian
digraphs and tournaments, respectively.

\subsection{Our results}

This subsection states the worst-case sparsifier size for simple Eulerian
digraphs, the upper bound for tournaments, and a quadratic-flow consequence
of preserving $Q_G^+$.

Unless stated otherwise, the digraphs in our graph-class theorems are
unweighted; their sparsifiers may be weighted.
A digraph is simple if it has no loops and at most one arc for each ordered
pair of distinct vertices; both $u\to v$ and $v\to u$ are allowed.
Throughout the paper, $\widetilde O$, $\widetilde\Omega$, and
$\widetilde\Theta$ suppress factors polylogarithmic in the positive parameters
displayed in their arguments.  A subscript lists parameters treated as fixed.

For a directed graph $G$, let $\operatorname{spn}_\varepsilon(G)$ denote the
minimum number of arcs in a $(1\pm\varepsilon)$-spectral sparsifier of $G$.
The formal definition appears in \cref{sec:general}.

\paragraph{Eulerian digraphs: a separation between cuts and potentials.}
An Eulerian digraph has equal total arc weight in the two directions of every
cut.  Nevertheless, we show that this exact balance does not give a nearly
linear worst-case bound for preserving all real potentials.

\begin{theorem}[Worst-case sparsifier size for simple Eulerian digraphs]
\label{thm:eulerian-n32-transition}
Fix $0<\varepsilon_0\le1/2$.  For all sufficiently large $n$ and all integers
$3\le m\le n(n-1)$, every simple Eulerian digraph on $n$ vertices with at
most $m$ arcs has a $(1\pm\varepsilon_0)$-spectral sparsifier with
$\widetilde O_{\varepsilon_0}(\min\{m,n^{3/2}\})$ arcs.  Conversely, there
exists a simple Eulerian digraph on $n$ vertices with at most $m$ arcs such
that every $(1\pm\varepsilon_0)$-spectral sparsifier has
$\widetilde\Omega_{\varepsilon_0}(\min\{m,n^{3/2}\})$ arcs.  Isolated
vertices are allowed.
\end{theorem}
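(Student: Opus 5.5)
The proof has two halves: an upper bound of $\widetilde O(\min\{m,n^{3/2}\})$ and a matching lower-bound construction.

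\textbf{Upper bound.} If $m\le n^{3/2}$, take $H=G$ itself. Otherwise I would split the arc set into two parts:
\begin{itemize}
\item the \emph{bidirected part}, consisting of pairs $\{u\to v,\,v\to u\}$ both present;
\item the \emph{one-way part}, consisting of the remaining arcs.
\end{itemize}
On a bidirected pair, $(x_u-x_v)_+^2+(x_v-x_u)_+^2=(x_u-x_v)^2$. So the bidirected part contributes exactly the undirected Laplacian energy of the underlying graph. I would sparsify it by Batson--Spielman--Srivastava, keeping both orientations of each retained edge with a common weight. This costs $O(n/\varepsilon^2)$ arcs, and the resulting error is at most $\varepsilon$ times the symmetric energy, hence at most $\varepsilon Q_G^+$.

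The one-way part $D$ must be handled differently. The key claim I would aim for is that $D$ can be sparsified within error $\varepsilon_0 Q_G^+(x)$, with the bidirected energy available as slack, using $\widetilde O(n^{3/2})$ arcs. One route is to sample arcs with probability proportional to a directed importance score $\sup_x w_e(x_u-x_v)_+^2/Q_G^+(x)$, and bound the sum of these scores for an Eulerian digraph by $\widetilde O(n^{3/2})$. I would then apply a chaining or net argument over orderings of $x$, together with a generic importance-sampling theorem of the kind presumably stated in \cref{sec:general}.

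To bound the score sum, I would use Eulerianness. Summing over all arcs, $Q_G^+(x)=Q_G^+(-x)$ for Eulerian $G$ fails in general, but it holds to within a constant on cuts. Exploiting this, I would show that each arc $u\to v$ whose endpoint degrees are small has large score, and that high-degree vertices number at most $m/d$. The goal is to balance these into the threshold $d=\sqrt n$. Concretely, arcs between two vertices of degree $\ge\sqrt n$ live on at most $O(m/\sqrt n)$ vertices, and I would recurse on them. Arcs touching a low-degree vertex number at most $n\sqrt n$ and can simply be kept.

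I expect this step to be the main obstacle: making ``dense among high-degree vertices'' sparsifiable, presumably because Eulerian balance there forces enough undirected-like energy to dominate. If the recursion does not close, I would instead decompose $G$ into $\widetilde O(1)$ expanding pieces, where Eulerian expansion gives a constant-factor comparison between $Q_G^+$ and the symmetric Laplacian.

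\textbf{Lower bound.} I would construct a simple Eulerian digraph containing $\Theta(\min\{m,n^{3/2}\})$ arcs that are each \emph{essential}. Here an arc $e=a\to b$ is essential if some potential $x$ has $Q_G^+(x)$ dominated, up to a constant, by the single term of $e$. Then no $(1\pm\varepsilon_0)$-sparsifier can drop $e$, since dropping it would lose a constant fraction of $Q_G^+(x)$.

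For the construction, take a $C_4$-free bipartite-like incidence structure with $N\approx n^{3/2}$ arcs, such as points versus lines of a projective plane of order $q\approx\sqrt n$. Orient the incidences $A\to B$, and restore Eulerian balance by routing return arcs through long directed paths or bidirected arcs. For a given arc $a\to b$, use a multivalued potential: $x_a=1$, $x_b=0$, and intermediate values on the return structure, so that every other descending arc has drop $O(1/\sqrt n)$ while the number of such arcs is $O(n)$. This gives energy $O(1)$ besides $e$. The $C_4$-freeness is what keeps the other descending arcs few.

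For smaller $m$, use a smaller plane of order $q\approx\sqrt m^{\,2/3}$ on about $m^{2/3}$ vertices, and add isolated vertices to reach $n$ vertices.
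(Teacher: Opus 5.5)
\textbf{Upper bound.} Splitting off bidirected pairs does not address the difficulty. The extremal instances (cyclic blow-ups with $k\ge4$ layers) contain no $2$-cycles, so everything rests on the step you flag as the obstacle, and none of your routes closes it.
\begin{itemize}
\item For $m>n^{3/2}$, the bound $2m/\sqrt n$ on the number of vertices of degree at least $\sqrt n$ exceeds $n$, so the degree recursion makes no progress.
\item Deleting the arcs at low-degree vertices destroys the Eulerian property you want to reuse. Expander-decomposition pieces are not Eulerian either.
\item A bound on $\sum_e\sup_x q_e^+(x)/Q_G^+(x)$ would not suffice by itself. Generic sensitivity sampling pays an extra dimension factor of order $n$. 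The paper's own sampling criterion is a Gaussian-width condition, and the paper stresses that per-arc maxima do not say which arcs are large simultaneously.
\end{itemize}
The missing idea is to control the \emph{global} ratio $\mu(A_F)=\sup_{A_Fx\ne0}\|A_Fx\|_2^2/\|(A_Fx)_+\|_2^2$ on most arcs. Your bidirected identity is its length-$2$ case. If the arcs of $F$ are partitioned into directed cycles of length at most $L$, the drops around each cycle sum to zero, so Cauchy--Schwarz gives $\mu(A_F)\le L$. The squared-ReLU row sampling of Munteanu and Omlor then gives $\widetilde O(nL/\varepsilon^2)$ arcs. A simple Eulerian digraph with $m_t$ arcs has a directed cycle of length at most $6n^2/m_t$ (Huang, Ma, Shapira, Sudakov, and Yuster), and removing a cycle keeps the residual Eulerian. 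So the paper peels cycles until $M$ arcs remain, keeps that residual unchanged, and gets $M+\widetilde O(n/\varepsilon^2+n^3/(M\varepsilon^2))$ arcs, balanced at $M\asymp n^{3/2}/\varepsilon$.

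\textbf{Lower bound.} Isolating single arcs with $O(1)$ remaining energy cannot work in your projective-plane construction, whatever the return structure. Eulerianity gives $\sum_{(u,v)\in E}(x_u-x_v)=0$, so the total descending drop equals the total ascending drop. Take $x_a=1$, $x_b=0$, and suppose every other arc drops by only $o(1)$, as you intend.
\begin{itemize}
\item The other $q$ lines through $a$ have values $1-o(1)$, and the other $q$ points of $b$ have values $o(1)$.
\item Each of the $q^2$ lines $\ell$ avoiding $a$ receives an arc from a point of $b$. That arc rises by about $1/2$ if $x_\ell\ge1/2$.
\item If $x_\ell<1/2$, each of the $q$ points of $\ell$ off $b$ has value at most $1/2+o(1)$. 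Its arc to the line joining it to $a$ then rises by about $1/2$.
\end{itemize}
Each such point lies on exactly $q$ lines avoiding $a$, so double counting gives at least $q^2$ distinct arcs rising by $1/2-o(1)$. Hence the descending drop off $e$ is $\Omega(q^2)=\Omega(n)$, not the $O(\sqrt n)$ carried by $O(n)$ arcs of drop $O(1/\sqrt n)$. By Cauchy--Schwarz the energy off $e$ is $\Omega(n^2/|E(G)|)$, which is $\Omega(\sqrt n)$ when $|E(G)|=O(n^{3/2})$. So a single arc is only an $O(n^{-1/2})$ share of its witness and need not be kept. Bidirected return arcs are worse still: they make the graph bidirected, which your own upper-bound step sparsifies with $O(n/\varepsilon^2)$ arcs.

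The paper meets the same obstruction: its witnesses also give each single arc only a $\Theta(1/b)$ share. It circumvents this by omitting arcs in groups. For the set $Z_u$ of dropped out-arcs of a vertex $u$, the potentials $x^{u,\varnothing}$ and $x^{u,Z_u}$ differ only on $Z_u$. The sparsifier therefore assigns them equal energy, while $G$ assigns $\Gamma$ and $\Gamma+|Z_u|$, forcing $|Z_u|\le2\varepsilon\Gamma/(1-\varepsilon)$. A baseline $\Gamma$ of the order of the out-degree is thus affordable. The cyclic blow-up with $k\approx8b$ layers of size $b$ spreads the return drop so that $\Gamma<b/8$, giving $m\asymp b^3\asymp n^{3/2}$.
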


The lower bound concerns only the number of retained arcs;
the sparsifier itself need not be Eulerian.  Combining our lower-bound
construction with the cut-sparsification theorem of Cen et
al.~\cite[Theorem~3]{CenEtAl2021} gives the following separation on the same
family of input digraphs.

\begin{corollary}[Cut--spectral separation for Eulerian digraphs]
\label{cor:eulerian-cut-spectral-separation}
For every fixed $0<\varepsilon\le1/2$, there are infinitely many $n$ and
simple unweighted strongly connected Eulerian digraphs $G$ on $n$ vertices
with the following properties.  The digraph $G$ has a nonnegatively
reweighted $(1\pm\varepsilon)$-cut sparsifier with
$\widetilde O_\varepsilon(n)$ arcs, whereas every
$(1\pm\varepsilon)$-spectral sparsifier of $G$ has
$\Omega(n^{3/2})$ arcs.
\end{corollary}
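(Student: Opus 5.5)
The plan is to read the corollary off the lower-bound half of \cref{thm:eulerian-n32-transition}, together with the cut-sparsification theorem of Cen et al.\ for digraphs of bounded cut balance. The main task is to make sure the lower-bound instances have the extra properties the corollary demands: unweighted, simple, strongly connected, Eulerian, and with the explicit bound $\Omega(n^{3/2})$ rather than a $\widetilde\Omega$ bound.

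\textbf{Step 1: choose the instances.} Take $m=n(n-1)$, so that $\min\{m,n^{3/2}\}=n^{3/2}$. The converse part of \cref{thm:eulerian-n32-transition} then provides, for all large $n$, a simple Eulerian digraph $G_n$ whose spectral sparsifiers all have $\widetilde\Omega_\varepsilon(n^{3/2})$ arcs.

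\textbf{Step 2: sharpen to $\Omega(n^{3/2})$ and obtain strong connectivity.} I would not use the theorem as a black box. Instead I would use the explicit construction behind it; I expect this to be a union of arc-disjoint directed cycles, or a blow-up of a structure built from short cycles, on roughly $n$ vertices. For such a construction I would check two things.
\begin{itemize}[nosep]
\item The lower-bound argument gives $\Omega(n^{3/2})$ with no logarithmic loss. Typically there is a family of $\Theta(n^{3/2})$ arcs, each of which is \emph{critical}: for each such arc $e$ there is a potential $x^{(e)}$ for which $e$ carries a constant fraction of $Q_G^+(x^{(e)})$, and no other arc can compensate for it.
\item Isolated vertices can be avoided, and the digraph can be made strongly connected.
\end{itemize}
For the second point, every connected Eulerian digraph is strongly connected. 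If the construction has several components, I would join them by a single extra directed cycle through one vertex of each component. This cycle uses new arcs, so simplicity is kept provided it avoids existing ordered pairs. It preserves the Eulerian property and adds only $O(n)$ arcs. The critical-arc certificates should survive: each $x^{(e)}$ can be chosen so that the added cycle contributes at most $O(1)$ energy, and $O(1)$ is negligible against the certificate's own energy scale. Alternatively, I would pass to the largest component, which still has $\Omega(n^{3/2})$ critical arcs relative to its own vertex count once the constant is adjusted. I expect this bookkeeping step, namely verifying that the lower-bound certificates tolerate the connecting cycle and give a clean $\Omega(n^{3/2})$, to be the main obstacle. If the construction is a single connected gadget, the obstacle disappears.

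\textbf{Step 3: the cut sparsifier.} An Eulerian digraph satisfies $w(\delta^+(S))=w(\delta^-(S))$ for every $S$, so $\beta(G)=1$. Theorem~3 of~\cite{CenEtAl2021} then gives a nonnegatively reweighted subgraph with $\widetilde O(n/\varepsilon^2)=\widetilde O_\varepsilon(n)$ arcs that preserves every outgoing cut within a factor $1\pm\varepsilon$. Combining Steps 1--3 over infinitely many $n$ proves the corollary.
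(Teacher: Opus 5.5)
Your outline matches the paper's: the hard Eulerian instance combined with the $\beta(G)=1$ case of \cite[Theorem~3]{CenEtAl2021}. Step~3 is correct. The gap is Step~2, which carries everything the corollary needs beyond \cref{thm:eulerian-n32-transition}. You leave it as a list of properties to check on a construction you never specify. Used as a black box, \cref{thm:eulerian-n32-transition} gives only $\widetilde\Omega_\varepsilon(n^{3/2})$ and says nothing about connectivity; its proof pads the instance with isolated vertices. The missing step is to name the instance: $G=G^{\mathrm{cyc}}_{k,b}$ with $k=8b+2$ and $n=kb$, for $b=1,2,\ldots$. This single gadget is already simple, unweighted, Eulerian and strongly connected. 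The padding only serves to reach an arbitrary prescribed $n$, so you need neither a connecting cycle nor a choice of component. Then \cref{thm:unweighted-eulerian-lower-bound} shows that every $(1\pm\varepsilon)$-spectral sparsifier has at least $\frac34kb^2=\frac34nb\ge\frac{3}{4\sqrt{10}}n^{3/2}$ arcs, using $n\le10b^2$. There is no logarithmic loss, and the constant is uniform in $\varepsilon\le1/2$.

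The criterion you propose for the log-free bound, $\Theta(n^{3/2})$ individually critical arcs, is also false for this instance. For $b\ge2$, one unit of nonnegative flow can be routed from $u$ to $v$ around the layers, avoiding the arc $u\to v$, at quadratic cost $O(1/b+k/b^2)=O(1/b)$. By \cref{prop:flow-cost-duality} (applied to $G$ minus the arc $u\to v$) and two-homogeneity, every potential with a unit drop on $u\to v$ has energy $\Omega(b)$ on the other arcs. So no arc ever carries more than an $O(1/b)$ fraction of the energy.

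The paper's bound is instead a per-vertex count. Let $Z_u$ be the set of out-neighbors $v$ of $u$ for which $u\to v$ is omitted from $H$. The witnesses $x^{u,\varnothing}$ and $x^{u,Z_u}$ have arc contributions that differ only on those omitted arcs, so $H$ gives them equal energy. Meanwhile $G$ gives them energies $\Gamma_{k,b}$ and $\Gamma_{k,b}+|Z_u|$, with $\Gamma_{k,b}<b/8$. Hence $|Z_u|\le\frac{2\varepsilon}{1-\varepsilon}\Gamma_{k,b}<b/4$ at every vertex.

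Your connecting-cycle repair would in fact also work for the padded instance: extend each witness by the constant value at the attachment vertex, and the new cycle contributes zero. It is simply unnecessary.
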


\paragraph{Tournaments: nearly linear upper bounds without cut balance.}
A tournament contains exactly one orientation of every pair of distinct
vertices.  Our upper bound requires no assumption of regularity, strong
connectivity, or cut balance.

\begin{theorem}[Sparsification of tournaments]
\label{thm:tournament-sparsification}
For every tournament $T$ on $n$ vertices and every $0<\varepsilon<1/2$,
\[
    \operatorname{spn}_\varepsilon(T)
    \le
    \widetilde O\left(
       \min\left\{n^2,\frac{n}{\varepsilon^3}\right\}
    \right).
\]
\end{theorem}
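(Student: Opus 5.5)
We propose to prove \cref{thm:tournament-sparsification} by importance sampling with a carefully chosen sampling distribution, followed by a chaining/union-bound argument over a discretized family of potentials. The $n^2$ term is trivial, since $T$ itself has $\binom n2$ arcs, so the content is the $\widetilde O(n/\varepsilon^3)$ bound.

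\textbf{Sampling weights.} Each arc of $T$ is kept independently with probability $p_{uv}$ and reweighted by $1/p_{uv}$, so that $\mathbb E\,Q_H^+(x)=Q_T^+(x)$ for every $x$. The key structural input is that in a tournament every arc $u\to v$ is ``backed up'' by many short directed paths. For all but few vertices $z$, either $u\to z\to v$ or the reverse orientation holds, and pairs $\{z,v\}$, $\{u,z\}$ exist in one orientation or the other. We will prove a \emph{directed importance bound}: for every arc $e=(u,v)$ and every $x$,
\[
(x_u-x_v)_+^2\le \rho_e\,Q_T^+(x),\qquad \sum_e\rho_e=\widetilde O(n).
\]
We then take $p_e\approx\min\{1,\rho_e\log n/\varepsilon^{2}\}$ (one further $1/\varepsilon$ to be spent below).

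To bound $\rho_e$, sort the vertices by potential. For an arc $u\to v$ with $x_u>x_v$, every third vertex $z$ contributes one of the arcs $u\to z$, $z\to v$, $z\to u$, $v\to z$. Whenever $x_z$ lies outside $[x_v,x_u]$, one of the two arcs among $\{u,z\},\{z,v\}$ pointing "downhill" from the higher to the lower endpoint spans at least the gap $x_u-x_v$, provided its orientation agrees. The difficulty is when the orientations all go "uphill". This is exactly the transitive-tournament obstruction, in which the arc $u\to v$ goes against the ordering of most other arcs. We handle this with a degree/rank argument. Let $d(e)$ be the number of vertices $z$ such that the triangle on $\{u,v,z\}$ is not transitive consistent with $e$ being a backward arc. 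We set $\rho_e\approx 1/d(e)$ plus a term that depends on the rank gap between $u$ and $v$ in a median order. We then show $\sum_e\rho_e=\widetilde O(n)$ by summing harmonic-type terms over the $n$ positions.

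\textbf{Concentration for all $x$.} With the importance bounds, a single $x$ concentrates by Bernstein. For uniformity we use the standard reduction for $\ell_2$-type energies. By scaling, normalize $Q_T^+(x)=1$ and discretize potentials: only the relative order of the coordinates and $\varepsilon$-rounded gaps matter. We group arcs by importance level (dyadic buckets of $\rho_e$) and apply a net argument within each bucket, where a bucket at level $2^{-j}$ needs a net of size $\exp(\widetilde O(n))$ over the relevant coordinate patterns. The extra $1/\varepsilon$ factor pays for the net granularity. Because $Q^+$ is not a quadratic form, we cannot use matrix Chernoff directly. Instead we use a chaining argument in the style of generic chaining for sums of convex (piecewise quadratic) functions, or the $\ell_p$ Lewis-weight-free approach of sampling by a "sensitivity" upper bound, since $\sum_e\rho_e$ bounds the total sensitivity. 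Total sensitivity $\widetilde O(n)$ together with a VC-type or covering bound of $\widetilde O(n)$ on the function class yields $\widetilde O(n/\varepsilon^3)$ samples.

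\textbf{Main obstacle.} We expect the hardest step to be the directed importance bound with $\sum_e\rho_e=\widetilde O(n)$ for arbitrary, possibly highly imbalanced, tournaments. In the transitive tournament, a backward-looking potential can make a single long arc carry the entire energy. We would first try to prove that the sensitivity of $u\to v$ is $O(1/(\text{number of vertices strictly between } u \text{ and } v\text{ in a fixed Hamiltonian-path order}+1))$, using that each such intermediate vertex creates a two-arc detour carrying comparable energy. Summing over arcs then gives $O(n\log n)$.
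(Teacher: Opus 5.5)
\textbf{Gap 1: uniformity over all potentials.} The step that fails is going from per-arc sensitivities to a guarantee for all $x$ at once. With $p_e=\min\{1,K\rho_e\}$, Bernstein gives failure probability $e^{-c\varepsilon^2K}$ for one normalized potential. A union bound over an $e^{\widetilde O(n)}$ net therefore needs $K=\widetilde O(n/\varepsilon^2)$, and so $\sum_ep_e\le K\sum_e\rho_e=\widetilde O(n^2/\varepsilon^2)$. The standard sensitivity-sampling theorem likewise multiplies total sensitivity by dimension. Nothing in your sketch turns this product into $n/\varepsilon^3$, and the ``extra $1/\varepsilon$ for net granularity'' has no mechanism behind it.

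In fact no argument using only these two generic inputs can work. By \cref{prop:flow-cost-duality}, the sensitivity of an arc $e=(u,v)$ is at most $\mathcal F_{G-e}(\mathbf 1_u-\mathbf 1_v)$. In the cyclic blow-up $G^{\mathrm{cyc}}_{8b+2,b}$, you can spread one unit of flow from $u$ through $V_1\setminus\{v\}$, around the layers, and back through $V_0\setminus\{u\}$ to $v$ at cost $O(1/b)$. So every arc has sensitivity $O(1/b)$, and the total sensitivity is $O(n)$. The potentials form the same $n$-dimensional family, yet \cref{thm:unweighted-eulerian-lower-bound} forces $\Omega(n^{3/2})$ arcs. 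As the paper stresses, coordinatewise maxima of $r_x(e)$ do not tell you which arcs can carry large fractions for the same $x$.

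What is missing is a tournament-specific bound on the expected supremum of the sampling error. The paper proves this as the Gaussian-width bound \cref{thm:scale-response} for each endpoint-distance range of $\TT_n$. Its ingredients are the value tree on attained values, vertex windows with $\Theta(D)$ intervening vertices, the bilinear supremum $\mathcal A_W$, and the one-sided averaging \cref{lem:one-sided-averaging}. The exponent $3$ comes from balancing $\sqrt{R/\rho}$ against $1/R$ at $R\asymp\rho^{1/3}$, not from net granularity.

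\textbf{Gap 2: the sensitivity bound via a path order.} Your route to the sensitivity bound also breaks on backward arcs. Reverse the single pair $\{1,n\}$ of $\TT_n$ to $n\to1$. Then $1\to2\to\cdots\to n$ is a Hamiltonian path, and $x_i=i$ puts all the energy on $n\to1$. That arc therefore has sensitivity $1$, although $n-2$ vertices lie between its endpoints.

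A $\widetilde O(n)$ total-sensitivity bound does hold for every tournament, but it comes from the paper's structure, not from a path order. Ordering by outdegree guarantees a detour $u\to w\to v$ in $T$ for every pair on which $T$ and the transitive tournament $R$ disagree. The congestion of these detours is $O(\log n)$ by \cref{lem:tournament-reciprocal-degrees}, giving $Q_R^+\le\Lambda_nQ_T^+$. The identity $(a-b)_+^2=(a-b)^2-(b-a)_+^2$ then rewrites $Q_T^+$ as a signed sum of $R$-arc energies plus the undirected energy $K_{\mathcal F}$ of the disagreement graph. The paper samples $K_{\mathcal F}$ by effective resistances, using indicators shared with the signed sum, so that the output is a nonnegatively weighted subgraph of $T$.
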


The transitive tournament $\TT_n$ has vertex set $[n]$ and the arc $i\to j$
whenever $i<j$.  For every nonempty proper prefix of $[n]$, all crossing arcs
leave the prefix and none enter it, so $\beta(\TT_n)=\infty$.  Thus
\cref{thm:tournament-sparsification} applies even when cut balance is
unbounded.  Together with the Eulerian lower bound in
\cref{thm:eulerian-n32-transition}, this shows that perfect
cut balance does not guarantee a nearly linear sparsifier for $Q_G^+$, while
unbounded cut balance does not preclude one.  For constant error, the
Eulerian lower bound is $\widetilde\Omega(n^{3/2})$, and the complete-bipartite
DAG requires $\Omega(n^2)$ arcs, whereas every tournament admits a
$\widetilde O(n)$-arc sparsifier.

\paragraph{Quadratic costs of nonnegative flows.}
As a consequence of spectral approximation, our sparsifiers also preserve
the minimum quadratic cost of routing every feasible demand with nonnegative
arc flows.  Let
$B_G\in\mathbb R^{V\times E}$ be the vertex--arc incidence matrix whose column
for $e=(u,v)$ is $\mathbf 1_u-\mathbf 1_v$.  For $b\in\mathbb R^V$, define
\begin{equation}
    \label{eq:intro-flow-cost}
    \mathcal F_G(b)
    :=
    \min_{\substack{f\in\mathbb R_{\ge0}^E\\B_Gf=b}}
       \sum_{e\in E}\frac{f_e^2}{w_e},
\end{equation}
and set $\mathcal F_G(b)=+\infty$ if no nonnegative flow routes $b$.
This flow problem is dual to an optimization problem involving $Q_G^+$,
a relation studied by Fujii, Soma, and Yoshida~\cite{FujiiSomaYoshida2021}.
Applying this duality gives the following consequence of spectral approximation.

\begin{corollary}[Preservation of quadratic nonnegative-flow costs]
\label{cor:flow-cost-preservation}
Let $0<\varepsilon<1$, let $G$ be a weighted digraph with positive arc
weights, and let $H$ be a $(1\pm\varepsilon)$-spectral sparsifier of $G$.
If $b$ is feasible in $G$, then it is feasible in $H$ and
\[
    \frac{\mathcal F_G(b)}{1+\varepsilon}
    \le \mathcal F_H(b)
    \le \frac{\mathcal F_G(b)}{1-\varepsilon}.
\]
\end{corollary}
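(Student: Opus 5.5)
We prove \cref{cor:flow-cost-preservation} by expressing $\mathcal F_G$ as a convex conjugate of $\tfrac12 Q_G^+$. For fixed $b$, the dual of the flow problem \eqref{eq:intro-flow-cost} is
\[
    \mathcal F_G(b)=\sup_{x\in\mathbb R^V}\bigl(2\langle b,x\rangle-Q_G^+(x)\bigr),
\]
and the supremum is $+\infty$ exactly when $b$ is infeasible. Once this formula is available for every digraph with positive weights, the corollary follows by comparing suprema.

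\textbf{Step 1: the Lagrangian.} Introduce a multiplier $x$ for the constraint $B_Gf=b$ and consider
\[
    \inf_{f\ge0}\Bigl(\sum_e f_e^2/w_e-2\langle x,B_Gf-b\rangle\Bigr).
\]
For an arc $e=(u,v)$ we have $(B_G^\top x)_e=x_u-x_v$. Each coordinate therefore decouples into the problem $\min_{f_e\ge0}\bigl(f_e^2/w_e-2f_e(x_u-x_v)\bigr)$, whose value is $-w_e(x_u-x_v)_+^2$, attained at $f_e=w_e(x_u-x_v)_+$. So the dual function is $2\langle b,x\rangle-Q_G^+(x)$, and weak duality gives $\mathcal F_G(b)\ge\sup_x\bigl(2\langle b,x\rangle-Q_G^+(x)\bigr)$.

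\textbf{Step 2: strong duality and the infeasible case.} This is the main obstacle, and it uses two different arguments.
\begin{itemize}
\item If $b$ is feasible, the problem is a convex quadratic program with linear constraints and a nonempty feasible set. Its objective is bounded below, so it attains its minimum (Frank--Wolfe, or the fact that polyhedral QPs attain their infima). Linear constraints satisfy Slater's condition automatically, so strong duality holds and equality follows.
\item If $b$ is infeasible, Farkas' lemma applies to the cone $\{B_Gf: f\ge0\}$. It gives a vector $y$ with $y_u-y_v\le0$ for every arc $(u,v)$ and $\langle b,y\rangle>0$. For such $y$ we have $Q_G^+(ty)=0$ for all $t>0$, so $\sup_x\bigl(2\langle b,x\rangle-Q_G^+(x)\bigr)\ge\sup_{t>0}2t\langle b,y\rangle=+\infty$, matching the convention $\mathcal F_G(b)=+\infty$.
\end{itemize}
Alternatively, one can cite the duality of Fujii, Soma, and Yoshida~\cite{FujiiSomaYoshida2021} here.

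\textbf{Step 3: feasibility transfers to $H$.} Since $H$ is a reweighted subgraph of $G$, we may restrict it to the arcs of positive weight so that $\mathcal F_H$ is defined. Every arc retained in $H$ is an arc of $G$. If $b$ were infeasible in $H$, Step 2 would give $\sup_x\bigl(2\langle b,x\rangle-Q_H^+(x)\bigr)=+\infty$. The sparsifier guarantee gives $Q_H^+\ge(1-\varepsilon)Q_G^+$, so $\sup_x\bigl(2\langle b,x\rangle-(1-\varepsilon)Q_G^+(x)\bigr)=+\infty$ as well. We now show this contradicts the feasibility of $b$ in $G$. By the scaling $x=z/c$ with $c>0$,
\[
    \sup_x\bigl(2\langle b,x\rangle-c\,Q_G^+(x)\bigr)=\tfrac1c\,\mathcal F_G(b),
\]
and the right-hand side is finite because $b$ is feasible in $G$. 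Hence $b$ is feasible in $H$.

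\textbf{Step 4: comparing the two costs.} The scaling identity from Step 3 holds equally for $H$. From $(1-\varepsilon)Q_G^+\le Q_H^+\le(1+\varepsilon)Q_G^+$ we get
\[
    \sup_x\bigl(2\langle b,x\rangle-(1+\varepsilon)Q_G^+(x)\bigr)\le\mathcal F_H(b)\le\sup_x\bigl(2\langle b,x\rangle-(1-\varepsilon)Q_G^+(x)\bigr).
\]
Evaluating both sides with the scaling identity yields
\[
    \frac{\mathcal F_G(b)}{1+\varepsilon}\le\mathcal F_H(b)\le\frac{\mathcal F_G(b)}{1-\varepsilon},
\]
which is the claimed bound.
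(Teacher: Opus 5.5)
Your proposal is correct and follows essentially the same route as the paper. The paper proves the dual formula $\mathcal F_G(b)=\sup_x\{2\langle b,x\rangle-Q_G^+(x)\}$ exactly as you do: a per-arc Lagrangian computation, quadratic-programming strong duality when $b$ is feasible, and a separation argument when it is not. It then compares the rescaled suprema using the two-homogeneity of $Q_G^+$, as in your Steps 3 and 4. Your two extra touches are harmless refinements: you discard the zero-weight arcs of $H$ so that the duality formula applies to it, and you derive feasibility in $H$ by contradiction, whereas the paper reads it off from the finiteness of the upper bound.
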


The corollary concerns optimum objective values for all demands; it does not
assert that the corresponding optimum flows are close or that the sparsifier
can be constructed within the time needed to solve a flow problem.  Its proof
in \cref{subsec:flow-costs} follows from convex duality.

\subsection{Technical overview}

This subsection explains the Eulerian lower bound using multivalued
potentials and the matching upper bound using short directed cycles and row
sampling.  We then show why this sampling theorem gives no bound even for
the transitive tournament and explain how directed paths constrain arc
energies enough to obtain nearly linear sparsifiers for all tournaments.

\subsubsection{Eulerian digraphs: matching lower and upper bounds}

We explain why the worst-case sparsifier size for simple Eulerian digraphs
is $\widetilde\Theta(n^{3/2})$ for constant error.

The complete bipartite DAG is hard to sparsify because every arc can be
isolated by a potential.  Indeed, for an arc $e=a\to b$, the cut potential
$\mathbf 1_{\{a\}\cup(B\setminus\{b\})}$ gives positive energy to $e$ and zero
energy to every other arc.  If a multiplicative sparsifier omits $e$, its
energy on this potential is zero whereas the energy of the original graph is
positive.  This argument applies separately to all $\Theta(n^2)$ arcs.  Thus
the obstruction is not merely that one arc can contribute a large fraction
of the energy; it is that quadratically many arcs can be isolated one at a
time.

\paragraph{Why cut indicators do not detect the Eulerian lower bound.}
The complete bipartite DAG is far from Eulerian: all its arcs leave one side
and enter the other.  Adding return arcs can make an isolating potential
contribute energy outside the original bipartite subgraph.  The lower-bound
construction spreads this additional energy over many small potential drops.

Partition the vertices into $k\ge4$ layers $V_0,\ldots,V_{k-1}$ of size $b$
and include every arc from $V_i$ to $V_{i+1\bmod k}$.  Every vertex has $b$
incoming and $b$ outgoing arcs, so the resulting digraph
$G^{\mathrm{cyc}}_{k,b}$ is Eulerian.  It has $n=kb$ vertices and $m=kb^2$
arcs.

Fix a vertex $u\in V_0$ and a subset $T\subseteq V_1$.  Define the potential
$x^{u,T}$ by setting $x^{u,T}_u=1$, setting $x^{u,T}_v=0$ for
$v\in(V_0\setminus\{u\})\cup T$, and setting $x^{u,T}_v=1$ for
$v\in V_1\setminus T$, and setting
$x^{u,T}_v=(k-r)/(k-2)$ for $v\in V_r$ and $2\le r\le k-1$.  Thus precisely
the arcs from $u$ to $T$ have a unit
drop between $V_0$ and $V_1$.  All vertices in $V_2$ have value one, and the
values decrease linearly through the remaining layers toward value zero on
$V_0\setminus\{u\}$.  The unit decrease needed to return around the cycle of
layers is therefore divided into $k-2$ equal parts.  Because the energy
squares each decrease, the energy outside the arcs from $u$ to $T$ is
independent of $T$ and equals
\[
    \Gamma_{k,b}
    :=\frac{b^2}{k-2}-\frac{b}{(k-2)^2}
    =\Theta(b^2/k).
\]
Consequently,
\begin{equation}
    \label{eq:overview-witness-energy}
    Q_{G^{\mathrm{cyc}}_{k,b}}^+(x^{u,T})=\Gamma_{k,b}+|T|.
\end{equation}

Suppose a sparsifier omits the arcs from $u$ to a set $Z\subseteq V_1$.
It assigns the same energy to $x^{u,\varnothing}$ and $x^{u,Z}$ because only
the omitted arcs distinguish them.  By
\eqref{eq:overview-witness-energy}, the original graph assigns energies
$\Gamma_{k,b}$ and $\Gamma_{k,b}+|Z|$.  The approximation inequalities in
\eqref{eq:intro-sparsifier-guarantee}, applied to these two potentials, force
\begin{equation}
    \label{eq:overview-omission-bound}
    |Z|\le \frac{2\varepsilon}{1-\varepsilon}\Gamma_{k,b}.
\end{equation}
By cyclic symmetry, the same argument applies to every vertex in every layer.
For $0<\varepsilon\le1/2$ and $k=8b+2$, we have
$\Gamma_{k,b}<b/8$, so every vertex retains at least $3b/4$ outgoing arcs.
Since
\[
    n=kb=\Theta(b^2),
    \qquad
    m=kb^2=\Theta(b^3)=\Theta(n^{3/2}),
\]
every constant-error sparsifier retains $\Omega(n^{3/2})$ arcs.  For
$m=O(n)$, a directed cycle padded with isolated vertices gives the
$\Omega(m)$ lower bound.  For larger values of $m$, adjusting $k$ and $b$ and
padding with isolated vertices gives the $\Omega(\min\{m,n^{3/2}\})$ lower
bound.

Eulerian digraphs admit nearly linear directed cut
sparsifiers~\cite{CenEtAl2021}.  Cut indicators take only the values zero and
one, so they cannot realize the gradual return decrease used in our lower
bound.  A spectral sparsifier must also preserve the energies of multivalued
potentials.  These can keep unit drops on selected arcs leaving one vertex
while spreading the return decrease over many small steps, reducing the
energy contributed by the remaining arcs.  A subgraph that omits many
selected arcs must then incur a large relative error on some potential.
Applying this argument at every vertex of our construction forces a constant
fraction of all arcs to remain.

\paragraph{Eulerian upper bound by removing short directed cycles.}
For the matching upper bound, we apply row sampling to arcs decomposed into
short directed cycles.  For a matrix $A$ with rows $a_i^\top$, a row-sampling
theorem of Munteanu and Omlor~\cite{MunteanuOmlor2024} preserves
$\sum_i(a_i^\top x)_+^2$ using
\begin{equation}
    \label{eq:overview-row-sampling}
    \widetilde O\left(\frac{\operatorname{rank}(A)\mu(A)}{\varepsilon^2}\right)
    \quad\text{rows},
    \qquad
    \mu(A):=
    \sup_{Ax\ne0}\frac{\|Ax\|_2^2}{\|(Ax)_+\|_2^2},
\end{equation}
provided that $\mu(A)$ is finite.  Let $A_G$ be the arc--vertex incidence
matrix whose row for $u\to v$ is
$\mathbf 1_u^\top-\mathbf 1_v^\top$; then
$Q_G^+(x)=\|(A_Gx)_+\|_2^2$.

Short directed cycles bound the ratio between the symmetric and directed
energies.  Consider a directed cycle
$v_1\to\cdots\to v_\ell\to v_1$ and write
$d_i=x_{v_i}-x_{v_{i+1}}$, with $v_{\ell+1}=v_1$.  Since
$\sum_i d_i=0$, the sum of the positive differences equals the sum of the
absolute values of the negative differences.  Cauchy--Schwarz then gives
\begin{equation}
    \label{eq:overview-cycle-energy}
    \sum_{i=1}^{\ell}d_i^2
    \le \ell\sum_{i=1}^{\ell}(d_i)_+^2.
\end{equation}
If the arcs of a digraph $F$ are partitioned into directed cycles of length
at most $L$, summing \eqref{eq:overview-cycle-energy} over those cycles
gives $\|A_Fx\|_2^2\le LQ_F^+(x)$, and hence $\mu(A_F)\le L$.

An arbitrary cycle decomposition of an Eulerian digraph may contain cycles of
length $\Theta(n)$, which would give a quadratic bound.  We obtain shorter
cycles by peeling them while the graph is dense.  Starting from a simple
Eulerian digraph, repeatedly remove a directed cycle until the residual
$R_{\mathrm{res}}$ has at most $M$ arcs, and let $F$ contain the removed arcs.
Removing a cycle
preserves the Eulerian property.  When the current residual has $m_t>M$ arcs,
the theorem of Huang, Ma, Shapira, Sudakov, and
Yuster~\cite{HuangMaShapiraSudakovYuster2013} gives a directed cycle of length
at most $6n^2/m_t$.  Thus the removed arcs are partitioned into cycles of
length
\[
    L=O(1+n^2/M).
\]

Substituting $\mu(A_F)\le L=O(1+n^2/M)$ and
$\operatorname{rank}(A_F)\le n-1$ into \eqref{eq:overview-row-sampling}
gives a sparsifier of $F$ with
\[
    \widetilde O\left(
       \frac{n}{\varepsilon^2}
       +\frac{n^3}{M\varepsilon^2}
    \right)
\]
arcs.  We retain the residual $R_{\mathrm{res}}$ without changing its weights.
The total
support is therefore
\begin{equation}
    \label{eq:overview-eulerian-tradeoff}
    \widetilde O\left(
       M+\frac{n}{\varepsilon^2}
        +\frac{n^3}{M\varepsilon^2}
    \right).
\end{equation}
The terms $M$ and $n^3/(M\varepsilon^2)$ in
\eqref{eq:overview-eulerian-tradeoff} represent the two sides of the
tradeoff: peeling longer leaves fewer residual arcs but permits longer cycles
in $F$.  Choosing $M\asymp n^{3/2}/\varepsilon$ balances these terms.  If the
input has fewer arcs, we retain all of them.  For constant error this gives
the upper bound $\widetilde O(\min\{m,n^{3/2}\})$
(\cref{thm:eulerian-n32-transition}).  In particular, the exponent $3/2$ is the
balance point between the residual size and the cost of sparsifying the
removed cycles.

\subsubsection{Tournaments: nearly linear sparsification}

We now turn to the nearly linear upper bound for tournaments.  The
row-sampling bound in \eqref{eq:overview-row-sampling} need not give a useful
sparsifier for tournaments.  For the transitive tournament $\TT_n$, the
potential $x_i=i$ satisfies $Q_{\TT_n}^+(x)=0$ while
$\|A_{\TT_n}x\|_2^2>0$.  Hence $\mu(A_{\TT_n})=\infty$, and
\eqref{eq:overview-row-sampling} gives no sparsification.  The tournament
upper bound must instead use how its directed paths constrain the arc
energies jointly.

\paragraph{A sampling criterion for all potentials.}
For an arc $e=(u,v)$, write
\[
    q_e^+(x):=w_e(x_u-x_v)_+^2.
\]
Every potential with $Q_G^+(x)>0$ determines a normalized arc-energy vector
\[
    r_x(e):=\frac{q_e^+(x)}{Q_G^+(x)},
    \qquad
    \mathcal R_G:=\{r_x:Q_G^+(x)>0\}\subseteq\Delta_E,
\]
where $\Delta_E$ is the set of nonnegative vectors indexed by the arcs whose
coordinates sum to one.
Thus $r_x(e)$ is the fraction of the total energy contributed by $e$.  A
vector $r_x$ supported on one arc records exactly the isolation phenomenon in
the complete bipartite example.

Retain each arc $e$ independently with probability $\pi_e\in(0,1]$, and let
$\xi_e$ indicate whether it is retained.  If every retained arc $e$ is
assigned weight $w_e/\pi_e$, then the worst relative error is
\begin{equation}
    \label{eq:overview-sampling-error}
    Z(\xi):=
    \sup_{r\in\mathcal R_G}
    \left|\sum_e\left(\frac{\xi_e}{\pi_e}-1\right)r_e\right|.
\end{equation}
For each fixed $x$, the sampled energy is an unbiased estimate of
$Q_G^+(x)$.  We need $Z(\xi)\le\varepsilon$ to obtain the guarantee in
\eqref{eq:intro-sparsifier-guarantee} simultaneously for all potentials.
The probabilities $\pi_e$ determine the expected number
$\sum_e\pi_e$ of retained arcs, whereas the set $\mathcal R_G$ determines
whether that sampling budget suffices.  The quantity
$\sup_{r\in\mathcal R_G}r_e$ is the largest fraction of the energy that arc
$e$ can contribute, but these coordinatewise maxima do not say which arcs can
contribute large fractions for the same potential.  Our tournament argument
uses this dependence among arc contributions.

To bound $Z(\xi)$ in \eqref{eq:overview-sampling-error}, compare two
independent samples $\xi$ and $\xi'$ and
put $\zeta_e=\mathbf 1\{\xi_e\ne\xi'_e\}$.  Conditional on $\zeta$, each
nonzero difference $\xi_e-\xi'_e$ is independently $+1$ or $-1$, with equal
probability; these are called \emph{Rademacher signs}.  A standard comparison
bounds the expected supremum with these signs by a constant times the
corresponding expectation with independent
standard Gaussian coefficients.  This is the \emph{Rademacher--Gaussian
comparison}~\cite[Eq.~(4.8)]{LedouxTalagrand1991}.  Thus the expected error
$\mathbb E Z(\xi)$ is at most a universal constant times
$\mathfrak G_G(\pi)$, where
\begin{equation}
    \label{eq:overview-gaussian-width}
    \mathfrak G_G(\pi)
    :=
    \mathbb E_{\zeta,g}
    \sup_{r\in\mathcal R_G\cup\{0\}}
    \left|\sum_e g_e\frac{\zeta_e}{\pi_e}r_e\right|,
\end{equation}
where the $g_e$ are independent standard Gaussian variables, independent of
$\zeta$.  For fixed $\zeta$, the expectation over $g$ in
\eqref{eq:overview-gaussian-width} measures how large this random linear
function can be when we choose $r\in\mathcal R_G$ after seeing the
coefficients.  This expected supremum is called the \emph{Gaussian width}.
The formal definition of
$\mathfrak G_G(\pi)$ appears in \eqref{eq:restricted-gaussian-width}.
The order of expectations allows the estimate for each fixed $\zeta$ to use
only the coordinates that remain after the Bernoulli restriction.

We show that
$\mathfrak G_G(\pi)\le c\varepsilon$, for a sufficiently small universal
constant $c>0$, suffices for the existence of a
$(1\pm\varepsilon)$-sparsifier with $O(\sum_e\pi_e)$ arcs
(\cref{thm:qgld-general}).  The tournament
upper bound therefore reduces to choosing probabilities with small total sum
and bounding the expected supremum in \eqref{eq:overview-gaussian-width}.

\paragraph{Sampling a transitive tournament.}
We first consider $\TT_n$, whose arcs are $i\to j$ for $i<j$.
Its many short directed paths suggest a simple sampling rule.  For every
intermediate vertex $i<k<j$, every potential satisfies
\begin{equation}
    \label{eq:overview-two-arc}
    (x_i-x_k)_+^2+(x_k-x_j)_+^2
    \ge \frac12(x_i-x_j)_+^2.
\end{equation}
These two-arc paths are edge-disjoint.  Summing their inequalities and
including the contribution of $i\to j$ itself gives
\begin{equation}
    (x_i-x_j)_+^2\le\frac{2}{j-i+1}Q_{\TT_n}^+(x).
    \label{eq:overview-arc-energy-bound}
\end{equation}
Thus an arc of endpoint distance $\Theta(D)$ contributes at most an
$O(1/D)$ fraction of the total energy.  The complete bipartite DAG has
no corresponding two-arc paths, so an individual arc can be isolated there.
Group the arcs by endpoint distance, using $D=1,2,4,\ldots$ and
\[
    E_D:=\{(i,j):D\le j-i<2D\}.
\]
The arc-energy bound \eqref{eq:overview-arc-energy-bound} suggests the
inclusion probability
$p_D=\min\{1,\rho/D\}$, where $\rho\ge1$ controls the sampling budget.
Since $|E_D|\le nD$, the total expected sample size is $O(n\rho\log(2n))$.

The difficulty is to justify this rule simultaneously for all potentials.
The bound \eqref{eq:overview-arc-energy-bound} controls each arc's largest
possible energy fraction, but does not say which arcs can contribute large
fractions together.
Moreover, the potential maximizing the error can be chosen after seeing
the entire sample, and its value order need not agree with the vertex order.
Our proof uses the energy constraints imposed by the short paths and the
vertex order to limit this freedom to adapt to the sampling error.

The sampling criterion in \cref{thm:qgld-general} reduces this task to
bounding an expected Gaussian supremum as in
\eqref{eq:overview-gaussian-width}.  For one range $E_D$,
set $\pi_e=p_D$ on $E_D$ and $\pi_e=1$ elsewhere.  Ranges with $p_D=1$
have zero error, so consider $p_D<1$.  The coefficient of an arc in the
Gaussian sum is $\zeta_e g_e/p_D$, where $\zeta_e$ records disagreement
between two independent Bernoulli$(p_D)$ samples, and the $g_e$ are
independent standard Gaussian variables, independent of the samples.
We first fix both $\zeta$ and $g$ and bound the sum divided by
$Q_{\TT_n}^+(x)$ uniformly over all potentials of positive energy.

We use two decompositions, based on different orders.  Vertex blocks are
consecutive in the fixed order $1<\cdots<n$ and are chosen independently
of $x$.  The value tree instead uses the sorted distinct values attained
by $x$.  In this overview, we keep all vector coordinates in the original
vertex order.

First partition the vertices into blocks of size $\Theta(D)$, with a
small enough constant that every arc in $E_D$ has $\Theta(D)$ vertices
strictly between its endpoint blocks.  For each endpoint-block pair,
write $I$ for the earlier block, $J$ for the later block, and $W$ for the
vertex interval containing both blocks and all intervening vertices
(\cref{sec:topological-windows}).

For a fixed potential, list its distinct attained values as
$y_0<\cdots<y_m$ and build a balanced binary tree whose leaves, in
increasing value order, are the gaps $[y_{r-1},y_r]$.  Each node represents
the value interval $K\subseteq\mathbb R$ spanned by its descendant leaves.
The tree organizes the cross terms in squared differences through the identity
\begin{equation}
    \label{eq:overview-square-split}
    (a-b)^2=(a-c)^2+(c-b)^2+2(a-c)(c-b)
    \qquad(b\le c\le a).
\end{equation}
This yields an exact sum of leaf-gap squares and products across split
values (\cref{lem:value-tree-square}).  Let $x^K$ be obtained by clipping
each value $x_i$ to the nearest point in $K$, keeping it at the same
vertex $i$.  At any tree depth the value intervals have disjoint interiors,
giving
\begin{equation}
    \label{eq:overview-clipped-energy}
    \sum_{K\text{ at a fixed depth}}Q_{\TT_n}^+(x^K)
    \le Q_{\TT_n}^+(x).
\end{equation}
Thus the tree has only $O(\log n)$ levels of energy to account for,
irrespective of the spacing of the attained values or their order along
the vertices.

Fix a value node $K$ and an endpoint-block pair $I,J$.  If $c$ is the
node's split value (the midpoint for a leaf gap), put
\[
    \alpha_i=(x_i^K-c)_+\quad(i\in I),
    \qquad
    \beta_j=(c-x_j^K)_+\quad(j\in J).
\]
The contribution of this piece is a constant multiple of the bilinear form
\[
    \alpha^\top X\beta
    =\sum_{i\in I,\,j\in J}X_{ij}\alpha_i\beta_j,
    \qquad
    X_{ij}=\begin{cases}
        \zeta_{ij}g_{ij}/p_D,&(i,j)\in E_D,\\
        0,&\text{otherwise}.
    \end{cases}
\]
Here $X$ contains the random coefficients, whereas $\alpha$ and $\beta$
record the endpoint deviations above and below $c$.  Our local target is
\[
    |\alpha^\top X\beta|
    \le \mathcal A\,E_{K,W}(x),
    \qquad E_{K,W}(x):=Q_{\TT_n[W]}^+(x^K),
\]
with one random factor $\mathcal A$ valid simultaneously for every
potential, value node, and endpoint-block pair.  Let $\mathcal A$ be the
smallest such factor.  The vertex intervals $W$ form a constant number of
families of pairwise disjoint vertex sets.
Together with \eqref{eq:overview-clipped-energy}, this gives
\begin{equation}
    \sum_W\sum_K E_{K,W}(x)
    \le O(\log n)Q_{\TT_n}^+(x).
    \label{eq:overview-interval-energy-sum}
\end{equation}
Consequently, a bound on $\mathbb E\mathcal A$ gives the desired uniform
relative-error bound with only a logarithmic loss.

To bound $\mathcal A$, normalize the local energy to one.  Paths through
the intervening vertices then constrain the endpoint magnitudes: the
vector with smaller support has Euclidean norm $O(D^{-1/2})$
(\cref{lem:local-completion,lem:completion-order}).
This need not hold for both vectors, so we also use the internal arcs.
In the original vertex order their constraint is
\[
    \underbrace{\sum_{\substack{i<i'\\i,i'\in I}}
        (\alpha_i-\alpha_{i'})_+^2}_{\text{decreases of }\alpha}
    +
    \underbrace{\sum_{\substack{j<j'\\j,j'\in J}}
        (\beta_{j'}-\beta_j)_+^2}_{\text{increases of }\beta}
    \le 1.
\]
These are one-sided constraints: for example, a decreasing $\beta$
contributes zero to the second sum, regardless of its variation.
We therefore need an averaging argument that uses only the controlled
direction (\cref{lem:one-sided-averaging}).

Suppose $\alpha$ has no larger support than $\beta$ and set
$h=X^\top\alpha$, so that the bilinear form is $h^\top\beta$.
Partition $J$ into $R$
consecutive vertex subblocks.  Compare negative coefficients in an
earlier subblock with positive coefficients in a later one, using the
average $\bar\beta$ on a middle subblock.  The basic identity, for
$h_i=-v\le0$ in the earlier block and $h_j=u\ge0$ in the later block, is
\[
    u\beta_j-v\beta_i
    =\underbrace{u(\beta_j-\bar\beta)+v(\bar\beta-\beta_i)}
        _{\text{controlled differences}}
      +\underbrace{(u-v)\bar\beta}_{\text{coefficient imbalance}}.
\]
The first two deviations are bounded above by averages of increases of
$\beta$, precisely the differences controlled by the internal energy.
After summing over subblocks, the last term records the imbalance
between their positive and negative coefficient totals.  The
probabilistic part of the proof bounds these imbalances uniformly over
$\alpha$.  Applying the comparison to both $h$ and $-h$ bounds the
absolute bilinear form, with both endpoint vectors allowed to depend on
the realized matrix.  If $\beta$ has smaller support, we fix $\beta$
instead and use the analogous comparison for decreases of $\alpha$.

Increasing $R$ makes the subblocks shorter.  It increases the bound on
the comparison errors and coefficient fluctuations, but leaves fewer
coordinates in the constant number of boundary subblocks without a
comparison partner.  The resulting tradeoff is
\[
    \mathbb E\mathcal A
    \le \widetilde O\!\left(
        \underbrace{\sqrt{R/\rho}}_{\text{comparisons and fluctuations}}
        +\underbrace{1/R}_{\text{boundary subblocks}}
    \right).
\]
Balancing at $R\asymp\rho^{1/3}$ gives
$\mathbb E\mathcal A\le\widetilde O(\rho^{-1/3})$
(\cref{thm:local-quenched}).  The proof supplies this estimate for
$\rho\ge C\log^{3/2}(2n)$, using the full constraints from the intervening
paths as well as the internal arcs.  Applying the bound on the sum of
clipped energies in \eqref{eq:overview-interval-energy-sum} gives
\begin{equation}
    \label{eq:overview-distance-bound}
    \mathbb E_{\zeta,g}
    \sup_{Q_{\TT_n}^+(x)>0}
    \left|
      \sum_{e=(i,j)\in E_D}
      g_e\frac{\zeta_e}{p_D}
      \frac{(x_i-x_j)_+^2}{Q_{\TT_n}^+(x)}
    \right|
    \le \log^{O(1)}(2n)\rho^{-1/3}.
\end{equation}
Finally, summing over the $O(\log n)$ distance ranges and taking
$\rho=\varepsilon^{-3}\log^{O(1)}(2n)$ makes the expected uniform error
a sufficiently small multiple of $\varepsilon$.  The sampling criterion in
\cref{thm:qgld-general} then gives a sparsifier with
$\widetilde O(n/\varepsilon^3)$ arcs.
This explains the exponent $3$ in this upper bound.  The uniform
Gaussian-width bound \eqref{eq:overview-distance-bound} will also control
the signed transitive-tournament terms used below for arbitrary tournaments.

\paragraph{From a transitive order to an arbitrary tournament.}
The preceding argument uses a fixed transitive orientation.  For an arbitrary
tournament $T$, an arbitrary vertex order would not provide enough directed
paths without overusing some arcs.  We instead order the vertices by
nondecreasing outdegree and let $R$ be the transitive tournament whose arcs
point from later to earlier vertices in this order.  If an arc $u\to v$ of
$R$ has the reverse orientation $v\to u$ in $T$, the outdegree comparison
guarantees a directed path $u\to z\to v$ in $T$.  We split one unit of flow
equally among all such paths; arcs shared by $R$ and $T$ are routed directly.
Every arc of $T$ then carries $O(\log n)$ units of flow in total.  To prove
this, we bound its load by sums of $1/(d_S^+(z)+1)$ or $1/(d_S^-(z)+1)$ in
appropriate subtournaments $S$ of $T$, where $d_S^+$ and $d_S^-$ denote
outdegree and indegree in $S$.  For any subtournament $S$, both sums are
$O(\log n)$ (\cref{lem:tournament-reciprocal-degrees}).  Applying
\eqref{eq:overview-two-arc} along the routed paths gives
\begin{equation}
    \label{eq:overview-order-domination}
    Q_R^+(x)\le O(\log n)Q_T^+(x)
    \qquad\forall x\in\mathbb R^{V(T)}.
\end{equation}

The bound \eqref{eq:overview-order-domination} lets us compare an error
measured relative to $Q_R^+$ with the target energy $Q_T^+$, but we must
still account for pairs on which the
two orientations disagree.  The identity
\begin{equation}
    \label{eq:overview-orientation-reversal}
    (a-b)_+^2=(a-b)^2-(b-a)_+^2
\end{equation}
expresses the contribution in one orientation as an undirected squared
difference minus the contribution in the other orientation.  Applying
\eqref{eq:overview-orientation-reversal} to every disagreement pair gives
\begin{equation}
    \label{eq:overview-tournament-decomposition}
    Q_T^+(x)
    =\sum_{e\in E(R)}\sigma_e q_e^R(x)+K_{\mathcal F}(x),
\end{equation}
where $q_e^R(x)=(x_u-x_v)_+^2$ for $e=(u,v)\in E(R)$, and $\sigma_e$ is
$+1$ when $T$ and $R$ agree on the pair and $-1$ otherwise.  The graph
$\mathcal F$ has one undirected edge for each disagreement pair, and
$K_{\mathcal F}(x)=\sum_{\{u,v\}\in E(\mathcal F)}(x_u-x_v)^2$.
By \eqref{eq:overview-orientation-reversal}, each term in $K_{\mathcal F}$
is the sum of the directed contributions in $T$ and $R$ on that pair.
Consequently, \eqref{eq:overview-order-domination} gives
\begin{equation}
    \label{eq:overview-correction-bound}
    K_{\mathcal F}(x)\le Q_T^+(x)+Q_R^+(x)
    \le O(\log n)Q_T^+(x).
\end{equation}

To sample the signed sum in \eqref{eq:overview-tournament-decomposition},
we extend \eqref{eq:overview-distance-bound} to allow each summand to be
multiplied by the fixed sign $\sigma_e$.  The bound is unchanged because
multiplying independent standard Gaussian variables by fixed signs preserves
their joint distribution.  Summing these bounds over the distance ranges and
applying \cref{lem:sampling-gaussian-comparison} controls the expected
uniform sampling error of the signed sum, normalized by $Q_R^+(x)$
(\cref{lem:signed-tt-sampling}).

To approximate the undirected correction $K_{\mathcal F}$ for all potentials,
we use undirected spectral sampling \cite{SpielmanSrivastava2011}.
In this method, called \emph{effective-resistance sampling}, the edge
inclusion probabilities are determined by the effective resistances of the
edges of $\mathcal F$, defined in \eqref{eq:undirected-effective-resistance}.
We use the independent sampling guarantee in
\cref{lem:independent-spectral-sampling}.

We now construct the sparsifier directly as a subgraph of $T$.  For each
unordered vertex pair, let $p$ be the larger of the endpoint-distance
sampling probability for $R$ and the probability chosen for $\mathcal F$
by effective-resistance sampling.  The probability for $\mathcal F$ is zero
when the pair is not an edge of $\mathcal F$.  Independently for each pair,
retain its arc of $T$ with probability $p$ and give it weight $1/p$ if
retained.

To analyze this subgraph, apply \eqref{eq:overview-tournament-decomposition}
using the pairs just retained.  A retained pair contributes its term in the
signed sum, divided by $p$, and, if it is an edge of $\mathcal F$, its term
in $K_{\mathcal F}$, also divided by $p$.  An omitted pair contributes zero
to both sums.  The identity \eqref{eq:overview-orientation-reversal} thus
continues to hold for each pair, so the two contributions add to exactly
the energy of the retained arc of $T$ with its positive weight $1/p$.

The endpoint-distance probabilities control the error in the signed sum
relative to $Q_R^+$, and the effective-resistance probabilities control
the error in the undirected energy relative to $K_{\mathcal F}$.
Choosing the larger probability preserves both error bounds
(\cref{lem:probability-increase,lem:independent-spectral-sampling}).
Both $Q_R^+$ and $K_{\mathcal F}$ are at most $O(\log n)Q_T^+$ by
\eqref{eq:overview-order-domination} and \eqref{eq:overview-correction-bound}.
Using a sufficiently small constant times $\varepsilon/\log(2n)$ as the
error parameter for each approximation therefore gives a
$(1\pm\varepsilon)$-spectral sparsifier of $T$ with
$\widetilde O(n/\varepsilon^3)$ arcs
(\cref{thm:tournament-sparsification}).

\subsection{Relation to previous work}

This subsection relates our sampling arguments and sparsification bounds to
previous work on undirected spectral sampling, ReLU losses, nonlinear
directed Laplacians, and directed cut sparsification.

Ene, Miller, Pachocki, and Sidford introduced cut balance in their study of
routing and maximum flow in directed graphs~\cite{EneEtAl2016}.
Ikeda and Tanigawa constructed directed cut sparsifiers whose number of arcs
depends linearly on cut balance~\cite{IkedaTanigawa2018}.
Cen et al. further studied the dependence on cut balance for both for-all
and for-each cut sparsification~\cite{CenEtAl2021}.  Their for-all theorem gives
$\widetilde O(\beta n/\varepsilon^2)$ arcs for a digraph with cut balance at
most $\beta$.  They also proposed preserving $Q_G^+$ as a spectral extension
of directed cut sparsification and asked whether balanced digraphs admit sparse
approximations for this energy~\cite[Section~7]{CenEtAl2021}.  Our Eulerian
lower bound rules out a nearly linear general guarantee even when the input is
simple, unweighted, and has cut balance one.  It does not characterize the
minimum sparsifier size for weighted or $\beta$-balanced digraphs in general.

Li, Wang, Yang, and Zhang used Lewis-weight sampling to construct sparsifiers
for the unsquared energy
$\sum_{(u,v)\in E(G)}w_{uv}(x_u-x_v)_+$~\cite{LiWangYangZhang2020}.
For a strongly connected digraph with cut balance at most $\beta$, their
sparsifier has $\widetilde O(\beta^2 n/\varepsilon^2)$ arcs and approximates
this energy within a factor of $1\pm\varepsilon$ for every real potential.
The square in $Q_G^+$ is therefore essential to our cut--spectral separation.

Effective-resistance sampling preserves an undirected quadratic form
\cite{SpielmanSrivastava2011}.  It cannot be applied directly to $Q_G^+$,
because the arcs contributing to this energy depend on the potential, but it
does apply to the undirected correction $K_{\mathcal F}$ in
\eqref{eq:overview-tournament-decomposition}.  The $p=2$ case of the
sensitivity-sampling theorem of Munteanu and Omlor for losses of the form
$t\mapsto(t)_+^p$~\cite{MunteanuOmlor2024} applies to
directed incidence matrices, but its size depends on $\mu(A)$.  Our Eulerian
upper bound supplies the missing bound on this parameter by removing short
directed cycles.  This step uses the
theorem of Huang, Ma, Shapira, Sudakov, and Yuster that a simple Eulerian
digraph with $n$ vertices and $m$ arcs contains a directed cycle of length
$O(n^2/m)$~\cite{HuangMaShapiraSudakovYuster2013}.

Yoshida introduced the nonlinear Laplacian for digraphs, whose associated
energy is $Q_G^+$, and proved a Cheeger inequality for it
\cite{Yoshida2016}.  He subsequently introduced Laplacians for submodular
transformations, unifying the nonlinear Laplacians and Cheeger inequalities
for directed graphs and hypergraphs~\cite{Yoshida2019}.
Fujii, Soma, and Yoshida gave polynomial-time algorithms for the resulting
Laplacian systems and identified their dual optimization problems, which
specialize to quadratic nonnegative-flow problems on directed
graphs~\cite{FujiiSomaYoshida2021}.  This duality underlies
\cref{cor:flow-cost-preservation}.
The energy $Q_G^+$ is also called \emph{directed variation}
in graph signal processing: it has been used to define frequencies and
Fourier transforms for directed graphs, including applications to signal
denoising~\cite{ShafipourEtAl2019}, and as an objective for learning sparse
representations of graph signals~\cite{ShafipourMateos2018}.

Soma and Yoshida defined spectral sparsification
for the corresponding directed-hypergraph energy~\cite{SomaYoshida2019}.
Kapralov, Krauthgamer, Tardos, and Yoshida obtained nearly quadratic-size
spectral sparsifiers when every hyperarc has constant
size~\cite{KapralovEtAl2021STOC}, and Oko, Sakaue, and Tanigawa obtained
nearly tight quadratic-size sparsifiers for general directed
hypergraphs~\cite{OkoSakaueTanigawa2023}.  The complete
bipartite DAG shows why a quadratic worst-case bound is unavoidable without a
restriction such as the tournament or Eulerian conditions considered here.
For comparison, Kapralov et al. proved that undirected hypergraphs admit
spectral sparsifiers with $\widetilde O(n)$ hyperedges for every fixed error
parameter, independently of the maximum hyperedge
size~\cite{KapralovEtAl2021FOCS}.
Lee~\cite{Lee2023} and, independently, Jambulapati, Liu, and
Sidford~\cite{JambulapatiLiuSidford2023} subsequently obtained sparsifiers
with $O(n\varepsilon^{-2}\log n\log r)$ hyperedges, where $r$ is the maximum
hyperedge size.  Both works use chaining to bound the supremum of the
sampling error over all potentials.

The phrase \emph{directed Laplacian sparsification} is also used for a
distinct, matrix-based line of work.  Cohen et al. introduced an asymmetric
spectral approximation for directed Laplacians~\cite{CohenEtAl2017}.  For an
Eulerian digraph, their definition controls the difference between the
nonsymmetric Laplacian matrices relative to the undirected Laplacian obtained
by symmetrization.  They constructed
$\widetilde O(n\varepsilon^{-2})$-arc sparsifiers in almost-linear time and
used them in algorithms for directed Laplacian systems and random-walk
problems.  Chu et al. obtained
$O(n\varepsilon^{-2}\log^4 n)$-arc Eulerian sparsifiers using short cycle
decompositions~\cite{ChuEtAl2018}, and subsequent work improved this approach
using discrepancy methods~\cite{SachdevaThudiZhao2024}.  Jambulapati et al.
later gave a near-linear-time algorithm that, for polynomially
bounded weights, produces Eulerian sparsifiers with
$O(n\varepsilon^{-2}\log^2 n\log^2\log n)$ arcs
\cite{JambulapatiEtAl2025}.  These results compare linear operators; they do
not require pointwise preservation of the nonlinear energy $Q_G^+$.  Their
nearly linear bounds therefore do not conflict with the
$\Omega(n^2)$ lower bound witnessed by the complete bipartite DAG.

\subsection{Organization}

We conclude the introduction by describing where each result is proved.

\Cref{sec:general} defines the directed energy, proves the quadratic-flow
interpretation in \cref{cor:flow-cost-preservation}, and develops the sampling
criterion.
\Cref{sec:simple-eulerian} then proves the upper and lower bounds for simple
Eulerian digraphs, establishing the cut--spectral separation first.
\Cref{sec:transitive-tournaments} constructs a sparsifier of a transitive
tournament by choosing inclusion probabilities from endpoint distances and
proving the required Gaussian-width bound.  \Cref{sec:arbitrary-tournaments}
extends the sampling guarantee to sums
$\sum_e\sigma_e q_e^R$ for fixed coefficients
$\sigma_e\in\{-1,+1\}$ and combines this sampling with undirected spectral
sampling to prove \cref{thm:tournament-sparsification}.
The appendix contains the Gaussian-matrix estimates used above.

\section{Directed Energy and Independent Sampling}
\label{sec:general}

This section fixes the sparsification model used throughout the paper and
derives its interpretation in terms of quadratic costs of nonnegative flows.
It then gives a condition under which independent Bernoulli sampling produces
a sparsifier.

\subsection{The energy of the nonlinear directed Laplacian and its sparsifiers}

This subsection defines the directed energy and the minimum support size of
a spectral sparsifier.  It then represents relative error as a linear
functional of normalized arc-energy vectors.

Let $G=(V,E,w)$ be a finite directed graph with positive arc weights.  For an
arc $e=(u,v)$ and a potential $x\in\mathbb R^V$, define
\[
    q_e^+(x):=w_e(x_u-x_v)_+^2,
    \qquad
    Q_G^+(x):=\sum_{e\in E}q_e^+(x).
\]
We refer to $Q_G^+$ as the energy of the nonlinear directed Laplacian.
A nonnegatively reweighted subgraph $H\subseteq G$ is a
\emph{$(1\pm\varepsilon)$-spectral sparsifier} if
\[
    (1-\varepsilon)Q_G^+(x)
    \le Q_H^+(x)
    \le(1+\varepsilon)Q_G^+(x)
    \qquad\forall x\in\mathbb R^V.
\]
For a coefficient vector $c\in\mathbb R_{\ge0}^E$, define
\[
    Q_{G,c}^+(x):=\sum_{e\in E}c_eq_e^+(x).
\]
We write
\[
    \operatorname{spn}_\varepsilon(G)
    :=
    \min\left\{
       |\supp c|:
       c\in\mathbb R_{\ge0}^E,
       |Q_{G,c}^+(x)-Q_G^+(x)|
       \le\varepsilon Q_G^+(x)
       \quad\forall x\in\mathbb R^V
    \right\}.
\]

Let
\[
    \Delta_E:=\left\{r\in\mathbb R_{\ge0}^E:\sum_{e\in E}r_e=1\right\}
\]
be the probability simplex on $E$.  Every potential $x$
with $Q_G^+(x)>0$ determines the normalized arc-energy vector
\[
    r_x(e):=\frac{q_e^+(x)}{Q_G^+(x)}.
\]
Define
\[
    \mathcal R_G:=\{r_x:Q_G^+(x)>0\}\subseteq\Delta_E,
    \qquad
    \mathcal R_G^0:=\mathcal R_G\cup\{0\}.
\]
The coefficient vector $c$ defines a $(1\pm\varepsilon)$-spectral sparsifier
precisely when
\[
    \sup_{r\in\mathcal R_G}|\langle c-\mathbf 1,r\rangle|
    \le\varepsilon.
\]

\subsection{Quadratic costs of nonnegative flows}
\label{subsec:flow-costs}

This subsection proves that a multiplicative approximation of the directed
energy preserves the optimum quadratic cost of routing every feasible demand
with nonnegative arc flows.  We use the duality between directed energies and
quadratic flow costs studied by Fujii, Soma, and
Yoshida~\cite{FujiiSomaYoshida2021}, and give a direct proof in our notation.

Let $B_G\in\mathbb R^{V\times E}$ be the vertex--arc incidence matrix whose
column for an arc $e=(u,v)$ is $\mathbf 1_u-\mathbf 1_v$.  For a demand vector
$b\in\mathbb R^V$, let $\mathcal F_G(b)$ denote the minimum quadratic cost of
a nonnegative flow routing $b$:
\[
    \mathcal F_G(b)
    =
    \min_{\substack{f\in\mathbb R_{\ge0}^E\\B_Gf=b}}
       \sum_{e\in E}\frac{f_e^2}{w_e},
\]
where the minimum is $+\infty$ when $b$ is not routable by a nonnegative flow.

\begin{proposition}[Dual formula for quadratic flow cost]
\label{prop:flow-cost-duality}
For every $b\in\mathbb R^V$,
\begin{equation}
    \label{eq:flow-cost-duality}
    \mathcal F_G(b)
    =
    \sup_{x\in\mathbb R^V}
       \left\{2\langle b,x\rangle-Q_G^+(x)\right\}.
\end{equation}
In particular, the supremum is $+\infty$ exactly when $b$ is infeasible.
\end{proposition}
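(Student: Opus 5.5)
We prove \eqref{eq:flow-cost-duality} by establishing weak duality directly and then obtaining the reverse inequality from a finite-dimensional convex duality argument. For the reverse inequality we treat the feasible and infeasible cases separately.

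\emph{Weak duality.} Fix a nonnegative flow $f$ with $B_Gf=b$ and any potential $x$. Then
\[
    \langle b,x\rangle=\sum_{e=(u,v)}f_e(x_u-x_v)\le\sum_e f_e(x_u-x_v)_+,
\]
where the inequality uses $f_e\ge0$. For each arc we have
\[
    2f_e t\le f_e^2/w_e+w_e t^2
\]
with $t=(x_u-x_v)_+$. Summing over arcs gives $2\langle b,x\rangle-Q_G^+(x)\le\sum_e f_e^2/w_e$. Taking the infimum over $f$ and the supremum over $x$ yields $\sup\le\mathcal F_G(b)$, and this holds trivially when $b$ is infeasible.

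\emph{Infeasible case.} Suppose $b\notin B_G\mathbb R^E_{\ge0}$. The set $B_G\mathbb R^E_{\ge0}$ is a finitely generated cone, hence closed. By Farkas' lemma there is an $x$ with $\langle b,x\rangle>0$ and $B_G^\top x\le0$, that is, $x_u-x_v\le0$ for every arc $(u,v)$. Such an $x$ satisfies $Q_G^+(x)=0$. Scaling $x$ by $\lambda\to\infty$ makes $2\lambda\langle b,x\rangle$ unbounded, so the supremum is $+\infty$. Combined with weak duality, this shows that the supremum is $+\infty$ exactly when $b$ is infeasible.

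\emph{Feasible case.} Now suppose $b$ is feasible. Write the primal as
\[
    \min_{f\ge0}\ \sum_e f_e^2/w_e\quad\text{subject to}\quad B_Gf=b.
\]
This is a convex quadratic program with linear constraints, and its feasible set is nonempty. Linear-constraint Slater conditions (or standard QP duality, which needs no strict feasibility for polyhedral constraints) therefore give zero duality gap and attainment. Introduce a multiplier $2x$ for the equality constraint and form the Lagrangian
\[
    \inf_{f\ge0}\Bigl\{\sum_e f_e^2/w_e-2\langle x,B_Gf-b\rangle\Bigr\}
    =2\langle b,x\rangle+\sum_e\inf_{f_e\ge0}\bigl\{f_e^2/w_e-2f_e(x_u-x_v)\bigr\}.
\]
Each one-dimensional infimum is attained at $f_e=w_e(x_u-x_v)_+$ and equals $-w_e(x_u-x_v)_+^2$. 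The dual function is therefore exactly $2\langle b,x\rangle-Q_G^+(x)$, and strong duality gives \eqref{eq:flow-cost-duality}.

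The main obstacle is justifying strong duality rigorously without a strict interior point. One option is to cite the standard result that convex quadratic programs with polyhedral feasible sets have zero duality gap whenever the primal is feasible and bounded below; the primal here is bounded below by $0$. An alternative avoids duality theory altogether. Since $Q_G^+$ is convex and differentiable with gradient $2B_G\,\mathrm{diag}(w)(B_G^\top x)_+$, one can argue that the concave dual objective attains its supremum when $b$ is feasible, using recession directions and the Farkas dichotomy above. At a maximizer $x^*$, the flow $f^*=\mathrm{diag}(w)(B_G^\top x^*)_+$ is nonnegative, routes $b$, and achieves equality in the weak-duality chain. Attainment is the delicate point in this alternative. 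To handle it, one notes that $Q_G^+$ is piecewise quadratic, so its supremum over any direction with $\langle b,d\rangle>0$ is finite precisely when $B_G^\top d\not\le0$, and a piecewise-quadratic concave function that is bounded above attains its supremum.
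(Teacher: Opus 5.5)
Your proposal is correct and follows essentially the same route as the paper: a Lagrangian with multiplier $2x$ that separates over the arcs into $w_e(x_u-x_v)_+^2$, finite-dimensional convex quadratic-programming duality in the feasible case, and Farkas/separation producing $x$ with $B_G^\top x\le0$ and $\langle b,x\rangle>0$, followed by scaling, in the infeasible case. Your explicit weak-duality computation is a harmless addition that makes the ``exactly when'' clause self-contained. The attainment-based alternative you sketch is unnecessary, since duality for convex quadratic programs with polyhedral constraints already closes the feasible case.
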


\begin{proof}
Suppose first that $b$ is feasible.  Introduce a multiplier $2x$ for the
constraint $B_Gf=b$.  The infimum of the Lagrangian over $f\ge0$ separates
over the arcs.  For $e=(u,v)$, its contribution is determined by
\[
    \sup_{f_e\ge0}
       \left\{2(x_u-x_v)f_e-\frac{f_e^2}{w_e}\right\}
    =w_e(x_u-x_v)_+^2.
\]
Finite-dimensional convex quadratic programming duality now gives
\eqref{eq:flow-cost-duality}.

If $b$ is infeasible, then $b$ does not belong to the closed polyhedral cone
$\{B_Gf:f\ge0\}$.  By separation, there is an $x\in\mathbb R^V$ such that
$\langle b,x\rangle>0$ and $B_G^\top x\le0$.  Hence $Q_G^+(x)=0$, and replacing
$x$ by a positive multiple makes the supremum in
\eqref{eq:flow-cost-duality} infinite.
\end{proof}

\begin{proof}[Proof of \cref{cor:flow-cost-preservation}]
For every $a>0$, the two-homogeneity of $Q_G^+$ and
\cref{prop:flow-cost-duality} give
\begin{equation}
    \label{eq:scaled-flow-dual}
    \sup_x\left\{2\langle b,x\rangle-aQ_G^+(x)\right\}
    =\frac{1}{a}\mathcal F_G(b).
\end{equation}
The inequality $Q_H^+\le(1+\varepsilon)Q_G^+$ and
\eqref{eq:scaled-flow-dual} imply
\[
    \mathcal F_H(b)
    \ge \frac{1}{1+\varepsilon}\mathcal F_G(b).
\]
Similarly, $Q_H^+\ge(1-\varepsilon)Q_G^+$ implies
\[
    \mathcal F_H(b)
    \le \frac{1}{1-\varepsilon}\mathcal F_G(b).
\]
Because $b$ is feasible in $G$, the last upper bound is finite.  Thus $b$ is
also feasible in $H$.
\end{proof}

\subsection{Expected Gaussian width after random restriction}

This subsection defines the Gaussian width that controls the expected uniform
error of independent Bernoulli sampling.  We also record the sampling
comparison used both here and for signed sums in the tournament argument.

Suppose that each arc $e$ is retained independently with probability $\pi_e\in(0,1]$
and reweighted by $1/\pi_e$.  Let $(\xi_e)_{e\in E}$ and
$(\xi'_e)_{e\in E}$ be two mutually independent families with
$\xi_e,\xi'_e\sim\operatorname{Bernoulli}(\pi_e)$, and define
\[
    \zeta_e:=\mathbf 1\{\xi_e\ne\xi'_e\}.
\]
We call $\zeta=(\zeta_e)_{e\in E}$ the vector of disagreement indicators.
Define
\begin{equation}
    \mathfrak G_G(\pi)
    :=\mathbb E_{\zeta,g}\sup_{r\in\mathcal R_G^0}
    \left|\sum_eg_e\frac{\zeta_e}{\pi_e}r_e\right|,
    \label{eq:restricted-gaussian-width}
\end{equation}
where $(g_e)_{e\in E}$ is a family of independent standard Gaussian random
variables, independent of $\zeta$.  Thus we first take the Gaussian width after
restricting to the coordinates with $\zeta_e=1$ and then average over the
restriction.

\begin{lemma}[Sampling comparison]
\label{lem:sampling-gaussian-comparison}
Let $E$ be a finite set, let $\mathcal F\subseteq\mathbb R^E$ be bounded, and
let $\pi\in(0,1]^E$.  Let $(\xi_e)_e$ and $(\xi'_e)_e$ be independent families
with $\xi_e,\xi'_e\sim\operatorname{Bernoulli}(\pi_e)$, put
$\zeta_e=\mathbf 1\{\xi_e\ne\xi'_e\}$, and let $(g_e)_e$ be independent
standard Gaussian variables.  Then
\begin{equation}
  \mathbb E_\xi\sup_{f\in\mathcal F}
  \left|\sum_e\left(\frac{\xi_e}{\pi_e}-1\right)f_e\right|
  \le
  C\mathbb E_{\zeta,g}\sup_{f\in\mathcal F\cup\{0\}}
  \left|\sum_e g_e\frac{\zeta_e}{\pi_e}f_e\right|.
  \label{eq:sampling-gaussian-comparison}
\end{equation}
\end{lemma}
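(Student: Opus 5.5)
We prove \eqref{eq:sampling-gaussian-comparison} by the standard chain of symmetrization, conditioning on the disagreement pattern, and Rademacher--Gaussian comparison.

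\textbf{Step 1: symmetrization.} Since $\mathbb E\,\xi'_e/\pi_e=1$, we can write
\[
\sum_e(\xi_e/\pi_e-1)f_e=\mathbb E_{\xi'}\sum_e\frac{\xi_e-\xi'_e}{\pi_e}f_e .
\]
The map $v\mapsto\sup_{f\in\mathcal F}|\langle v,f\rangle|$ is convex, so Jensen's inequality gives
\[
\mathbb E_\xi\sup_{f\in\mathcal F}\Bigl|\sum_e(\xi_e/\pi_e-1)f_e\Bigr|
\le\mathbb E_{\xi,\xi'}\sup_{f\in\mathcal F}\Bigl|\sum_e\frac{\xi_e-\xi'_e}{\pi_e}f_e\Bigr| .
\]
Boundedness of $\mathcal F$ guarantees that all these quantities are finite.

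\textbf{Step 2: conditioning on the disagreement set.} Observe that $\xi_e-\xi'_e=\zeta_e\eta_e$ with $\eta_e\in\{\pm1\}$. Conditional on $\zeta$, the variables $\eta_e$ for $e$ with $\zeta_e=1$ are independent and symmetric. This holds because the pairs $(\xi_e,\xi'_e)$ are independent across $e$, and given $\xi_e\ne\xi'_e$ the two outcomes $(1,0)$ and $(0,1)$ each have probability $\pi_e(1-\pi_e)$. Thus, conditionally on $\zeta$, the vector $(\eta_e)_{e:\zeta_e=1}$ is a Rademacher vector, and for $e$ with $\zeta_e=0$ we may extend $\eta$ by independent Rademacher signs without changing anything. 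The right-hand side of Step 1 therefore equals
\[
\mathbb E_\zeta\,\mathbb E_\eta\sup_{f\in\mathcal F}\Bigl|\sum_e\eta_e\frac{\zeta_e}{\pi_e}f_e\Bigr| .
\]

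\textbf{Step 3: Rademacher--Gaussian comparison for fixed $\zeta$.} Fix $\zeta$ and consider the bounded set $T_\zeta=\{(\zeta_ef_e/\pi_e)_e:f\in\mathcal F\}\cup\{0\}$. We apply \cite[Eq.~(4.8)]{LedouxTalagrand1991}: for any bounded $T$,
\[
\mathbb E\sup_{t\in T}|\langle\eta,t\rangle|\le\sqrt{\pi/2}\,\mathbb E\sup_{t\in T}|\langle g,t\rangle| .
\]
This inequality can also be derived directly. Write $g_e=\eta_e|g_e|$, where $\eta$ is independent of $|g|$, and apply Jensen in $|g|$ using $\mathbb E|g_e|=\sqrt{2/\pi}$. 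Adding $0$ to the index set only increases the right-hand side, and it is harmless on the left. Taking $\mathbb E_\zeta$ then yields \eqref{eq:sampling-gaussian-comparison} with $C=\sqrt{\pi/2}$. The role of including $\{0\}$ is simply to match the definition \eqref{eq:restricted-gaussian-width}, where it makes the supremum nonnegative.

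The only step requiring care is Step 2. One must check that conditioning on $\zeta$ really produces independent symmetric signs even though $\pi_e\ne1/2$; this is exactly the observation that $(1,0)$ and $(0,1)$ have equal probability. Measurability of the suprema is not an issue, since $\mathcal F$ may be replaced by a countable dense subset without changing any supremum, all of these maps being continuous in $f$.
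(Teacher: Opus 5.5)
Your proposal is correct and follows the paper's argument: symmetrization with an independent copy $\xi'$, conditioning on $\zeta$ so that the nonzero differences $\xi_e-\xi'_e$ become independent Rademacher signs, and the Rademacher--Gaussian comparison \cite[Eq.~(4.8)]{LedouxTalagrand1991}. Beyond what the paper writes out, you also verify that the conditional signs are symmetric (the outcomes $(1,0)$ and $(0,1)$ are equally likely), derive the comparison via $g_e=\eta_e|g_e|$ and Jensen, and obtain the explicit constant $C=\sqrt{\pi/2}$.
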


\begin{proof}
Symmetrization bounds the left-hand side of
\eqref{eq:sampling-gaussian-comparison} by
\[
  \mathbb E_{\xi,\xi'}\sup_{f\in\mathcal F\cup\{0\}}
  \left|\sum_e\frac{\xi_e-\xi'_e}{\pi_e}f_e\right|.
\]
Conditional on $\zeta$, write
$\xi_e-\xi'_e=\tau_e\zeta_e$, where the nonzero $\tau_e$ are independent
Rademacher variables.  The Rademacher--Gaussian comparison
inequality~\cite[Eq.~(4.8)]{LedouxTalagrand1991} bounds the conditional
expectation by the Gaussian supremum on the right-hand side of
\eqref{eq:sampling-gaussian-comparison}, up to a universal constant.
\end{proof}

\subsection{Independent sampling from an expected Gaussian-width bound}

The next theorem converts an upper bound on the expected Gaussian width
$\mathfrak G_G(\pi)$ in \eqref{eq:restricted-gaussian-width} into a spectral
sparsifier whose support size is proportional to the total inclusion probability.

\begin{theorem}[Sparsification from an expected Gaussian-width bound]
\label{thm:qgld-general}
There are universal constants $c_{\mathrm{samp}},C>0$ such that, for every
finite directed graph $G=(V,E,w)$ with positive arc weights and every
$0<\varepsilon<1$, the following holds.  If
$\pi\in(0,1]^E$ satisfies
\[
    \mathfrak G_G(\pi)\le c_{\mathrm{samp}}\varepsilon,
\]
then there exists a reweighted subgraph $H\subseteq G$ satisfying
\[
    (1-\varepsilon)Q_G^+(x)
    \le Q_H^+(x)
    \le(1+\varepsilon)Q_G^+(x)
    \qquad\forall x\in\mathbb R^V,
\]
and
\[
    |E(H)|\le C\sum_{e\in E}\pi_e.
\]
\end{theorem}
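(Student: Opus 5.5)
We prove \cref{thm:qgld-general} by taking $H$ to be the random reweighted subgraph itself and showing that, with positive probability, it is simultaneously accurate and small.

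\textbf{Sampling.} Retain each arc $e$ independently with probability $\pi_e$, and give a retained arc weight $w_e/\pi_e$. In the notation of \cref{sec:general}, this is the coefficient vector $c_e=\xi_e/\pi_e$.

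\textbf{Accuracy.} Recall from \cref{sec:general} that $c$ defines a $(1\pm\varepsilon)$-spectral sparsifier exactly when
\[
Z(\xi):=\sup_{r\in\mathcal R_G}\left|\sum_e\left(\frac{\xi_e}{\pi_e}-1\right)r_e\right|\le\varepsilon .
\]
The family $\mathcal R_G\subseteq\Delta_E$ is bounded, so \cref{lem:sampling-gaussian-comparison} applies with $\mathcal F=\mathcal R_G$. It gives $\mathbb E Z(\xi)\le C_0\,\mathfrak G_G(\pi)$, where $C_0$ is the universal constant of that lemma. Under the hypothesis $\mathfrak G_G(\pi)\le c_{\mathrm{samp}}\varepsilon$ we get $\mathbb E Z\le C_0c_{\mathrm{samp}}\varepsilon$. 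Markov's inequality then yields
\[
\Prob\bigl[Z>\varepsilon\bigr]\le C_0c_{\mathrm{samp}} .
\]

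\textbf{Size.} The number of retained arcs $N=\sum_e\xi_e$ has mean $\sum_e\pi_e$. Markov's inequality gives $\Prob[N>C\sum_e\pi_e]\le 1/C$.

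\textbf{Union bound.} Choose $c_{\mathrm{samp}}=1/(4C_0)$ and $C=4$. The two failure probabilities are then each at most $1/4$. Hence with probability at least $1/2$ both events $Z\le\varepsilon$ and $N\le C\sum_e\pi_e$ hold. Fix such an outcome and let $H$ be the retained arcs with weights $w_e/\pi_e$. These weights are positive, so $H$ is a nonnegatively reweighted subgraph of $G$. Both the energy guarantee and the size bound of \cref{thm:qgld-general} hold for this $H$.

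\textbf{Potential subtleties.} There are two points to check.
\begin{itemize}
\item The supremum defining $Z$ runs over an uncountable family, so measurability must be verified. It is harmless: $\xi$ takes finitely many values, so $Z$ is a function on a finite probability space.
\item The lemma's right-hand side uses $\mathcal R_G\cup\{0\}$, while $\mathfrak G_G(\pi)$ is defined with $\mathcal R_G^0$. These sets coincide, so no adjustment is needed.
\end{itemize}
With these points settled, the proof is an application of the comparison lemma followed by two Markov bounds.
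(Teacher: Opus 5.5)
Your proposal is correct and follows essentially the same route as the paper: independent sampling with inverse-probability weights, \cref{lem:sampling-gaussian-comparison} with $\mathcal F=\mathcal R_G$ to bound $\mathbb E Z(\xi)$, then Markov's inequality for both $Z$ and the retained-arc count, and a union bound. The only difference is the choice of constants: the paper makes the failure probabilities $1/8$ and $1/4$, while you make both $1/4$.
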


\begin{proof}
Fix $\pi$ as in the theorem, and let
$(\xi_e)_{e\in E}$ be independent variables with
$\xi_e\sim\operatorname{Bernoulli}(\pi_e)$.  Put
\[
    Z(\xi)
    :=
    \sup_{r\in\mathcal R_G}
    \left|\sum_e\left(\frac{\xi_e}{\pi_e}-1\right)r_e\right|.
\]
Applying \cref{lem:sampling-gaussian-comparison} with
$\mathcal F=\mathcal R_G$ yields
\[
    \mathbb E Z(\xi)
    \le C\mathfrak G_G(\pi)
    \le Cc_{\mathrm{samp}}\varepsilon.
\]
Choose $c_{\mathrm{samp}}$ so that
$C c_{\mathrm{samp}}\varepsilon\le\varepsilon/8$.  Markov's
inequality then gives
\[
    \mathbb P[Z(\xi)>\varepsilon]\le\frac18.
\]
The number $N=\sum_e\xi_e$ of retained arcs has mean $\sum_e\pi_e$, and hence
\[
    \mathbb P\left[N>4\sum_e\pi_e\right]\le\frac14.
\]
With positive probability, both $Z(\xi)\le\varepsilon$ and
$N\le4\sum_e\pi_e$ hold.  Fixing such a realization proves the theorem.
\end{proof}

\section{\texorpdfstring{The $n^{3/2}$ Threshold for Simple Eulerian
Digraphs}{The n to the 3/2 Threshold for Simple Eulerian Digraphs}}
\label{sec:simple-eulerian}

Throughout this section, every digraph is simple.  A digraph is
\emph{Eulerian} if the indegree and outdegree agree at every vertex.
We show that although Eulerian digraphs have cut balance one, their
worst-case sparsifier size for
$Q_G^+$ is $\widetilde\Theta(\min\{m,n^{3/2}\})$ for constant error.  We prove
the upper bound by removing short directed cycles and the lower bound using
cyclic blow-ups.

\subsection{Upper bound by removing short directed cycles}
\label{subsec:eulerian-arc-count}

We first state the row-sampling theorem used for the upper bound and show that
a decomposition into short directed cycles satisfies its hypothesis.  We
then remove cycles until the residual is sparse and apply the theorem to the
removed arcs.

For a matrix $A\in\mathbb R^{N\times d}$ with rows
$a_1^\top,\ldots,a_N^\top$, define
\[
  f_A(x):=\sum_{i=1}^N(a_i^\top x)_+^2=\|(Ax)_+\|_2^2
\]
and
\begin{equation}
  \mu(A):=
  \sup_{Ax\ne0}\frac{\|Ax\|_2^2}{\|(Ax)_+\|_2^2}.
  \label{eq:relu-mu-definition}
\end{equation}
Following Munteanu and Omlor, we call $\mu(A)$ the $\mu$-complexity of $A$.

\begin{lemma}[Sensitivity sampling for squared ReLU losses~\cite{MunteanuOmlor2024}]
\label{lem:relu-row-sampling}
Let $A\in\mathbb R^{N\times d}$ have rank $r$ and finite $\mu(A)$.  For every
$0<\eta\le1/2$, there is a weight vector $w\in\mathbb R_{\ge0}^N$ with
\[
  \left|
    \sum_{i=1}^N w_i(a_i^\top x)_+^2
    -\sum_{i=1}^N(a_i^\top x)_+^2
  \right|
  \le\eta\sum_{i=1}^N(a_i^\top x)_+^2
  \qquad\forall x\in\mathbb R^d
\]
and
\[
  |\supp w|
  \le\widetilde O\left(\frac{r\mu(A)}{\eta^2}\right).
\]
The hidden factors are polylogarithmic in $r$, $\mu(A)$, and $\eta^{-1}$.
\end{lemma}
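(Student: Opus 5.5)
This lemma is the $p=2$ case of the sensitivity-sampling theorem of Munteanu and Omlor for losses $t\mapsto(t)_+^p$, so the plan is to reduce to that theorem and check its hypotheses in our notation. I would not re-derive it. The standard route it packages has four steps.

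\textbf{Step 1: bound sensitivities through the symmetric energy.} For each row $i$ the sensitivity is
\[
  s_i:=\sup_{(Ax)_+\ne0}\frac{(a_i^\top x)_+^2}{\|(Ax)_+\|_2^2}.
\]
If $Ax\ne0$ then $(Ax)_+\ne0$, since otherwise $\mu(A)$ would be infinite. Applying the definition of $\mu(A)$ gives
\[
  s_i\le\mu(A)\sup_{Ax\ne0}\frac{(a_i^\top x)^2}{\|Ax\|_2^2}=\mu(A)\,\ell_i,
\]
where $\ell_i$ is the leverage score of row $i$. Since $\sum_i\ell_i=r$, the total sensitivity is at most $r\mu(A)$.

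\textbf{Step 2: sample by these scores.} Sample rows with probabilities proportional to $\mu(A)\ell_i$, using $m\approx r\mu(A)\eta^{-2}\cdot\mathrm{polylog}$ draws, and reweight each drawn row by the inverse of $m$ times its probability. For a single fixed $x$, a Bernstein bound then gives relative error $\eta$ with high probability.

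\textbf{Step 3: make the guarantee uniform over all $x$.} This is the main obstacle. Everything is $2$-homogeneous and depends on $x$ only through $Ax$, so it suffices to control the normalized set
\[
  \{Ax:\|(Ax)_+\|_2=1\}\subseteq\{y\in\operatorname{col}(A):\|y\|_2^2\le\mu(A)\}.
\]
This lies in a Euclidean ball inside an $r$-dimensional subspace. A net argument over this ball, combined with the $1$-Lipschitz property of $t\mapsto t_+$, would give the bound with extra polylogarithmic factors; a net or chaining argument in the style of Munteanu and Omlor should recover their $\widetilde O$ dependence, avoiding an extra factor of $r$ from a crude union bound. For the squared loss, the Lipschitz step needs care: the difference of squares is controlled by $\|y-y'\|_2(\|y\|_2+\|y'\|_2)$, and the factor $\sqrt{\mu(A)}$ this introduces is absorbed into the logarithms through the choice of net scale.

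\textbf{Step 4: finish.} A successful realization of the sampling yields the weight vector $w$ with the stated support bound, and the hidden factors are polylogarithmic in $r$, $\mu(A)$ and $\eta^{-1}$.

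Within this paper, the only inputs actually needed are the hypotheses of the cited theorem, namely finite $\mu(A)$ and rank $r$, and both are given in the statement. So the proof here should amount to invoking \cite{MunteanuOmlor2024} with $p=2$, together with the sensitivity bound of Step 1.
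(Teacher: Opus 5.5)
Your proposal matches the paper's proof, which likewise only invokes the $p=2$, $h(t)=(t)_+^2$ case of Munteanu--Omlor's theorem (their Theorem~G.1) as a black box; your Steps~1--3 are not needed. In Step~3, your remark that a plain net costs only polylogarithmic factors is inaccurate: a crude union bound over a net in an $r$-dimensional space costs an extra factor of $r$, as you yourself note a clause later.

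The paper adds two bookkeeping steps that you omit. It runs the theorem with error $\eta_0=\min\{\eta,1/4\}$ so that the parameter lies in the theorem's range. It also replaces a rank-deficient $A$ by a full-column-rank matrix with the same column space, which preserves both the function class $x\mapsto(Ax)_+$ and $\mu(A)$, so the support bound is in terms of $r$ rather than the number of columns.
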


\begin{proof}
Put $\eta_0:=\min\{\eta,1/4\}$.  The case $p=2$ and $h(t)=(t)_+^2$ of
\cite[Theorem~G.1]{MunteanuOmlor2024}, applied with error $\eta_0$ and
constant failure probability, independently retains rows and assigns
inverse-probability weights.  Its guarantee with error $\eta_0\le\eta$ is
stronger than the required one, while $\eta_0\ge\eta/2$ keeps the support
bound within a universal constant factor.  If $A$ does not have full column
rank, replace it by a full-column-rank matrix with the same column space; this
preserves both the function class and $\mu(A)$.
\end{proof}

For a digraph $F$, let $A_F\in\mathbb R^{E(F)\times V(F)}$ be its
arc--vertex incidence matrix, with row
$\mathbf 1_u^\top-\mathbf 1_v^\top$ for the arc $u\to v$.  For a scalar $t$,
write $(t)_-:=\max\{-t,0\}$, and apply $(\cdot)_+$ and $(\cdot)_-$
coordinatewise to vectors.  Then
\[
  Q_F^+(x)=\|(A_Fx)_+\|_2^2.
\]

\begin{lemma}[Bounded $\mu$-complexity from short cycles]
\label{lem:bounded-cycle-asymmetry}
If the arc set of $F$ is the disjoint union of directed cycles of length at
most $L$, then $\mu(A_F)\le L$.
\end{lemma}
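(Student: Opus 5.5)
We prove the inequality $\|A_Fx\|_2^2\le L\,\|(A_Fx)_+\|_2^2$ for every $x\in\mathbb R^{V(F)}$. Given this, whenever $A_Fx\ne0$ the denominator $\|(A_Fx)_+\|_2^2$ is positive and the ratio in \eqref{eq:relu-mu-definition} is at most $L$, so $\mu(A_F)\le L$. Since the arc set is a disjoint union of cycles $C_1,\ldots,C_s$, both sides split as sums over the cycles. The rows of $A_F$ are indexed by arcs, and each arc lies in exactly one cycle. It therefore suffices to prove the inequality for a single directed cycle of length $\ell\le L$, with $\ell$ in place of $L$, and then use $\ell\le L$ and sum over the cycles.

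Fix a cycle $v_1\to\cdots\to v_\ell\to v_1$ and put $d_i=x_{v_i}-x_{v_{i+1}}$, with indices taken mod $\ell$. These $d_i$ are exactly the entries of $A_Fx$ on the rows of this cycle. The sum telescopes, so $\sum_i d_i=0$. Writing $P=\sum_i(d_i)_+$ and $N=\sum_i(d_i)_-$, this gives $P=N$.

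Next we bound the two parts of $\sum_i d_i^2=\sum_i(d_i)_+^2+\sum_i(d_i)_-^2$ separately. For the negative part, use $\sum_i(d_i)_-^2\le\bigl(\sum_i(d_i)_-\bigr)^2=N^2$, which holds because the terms are nonnegative. Then $N^2=P^2$, and Cauchy--Schwarz over the indices with $d_i>0$ gives $P^2\le p\sum_i(d_i)_+^2$, where $p$ is the number of positive $d_i$. Hence
\[
\sum_i d_i^2\le(1+p)\sum_i(d_i)_+^2 .
\]

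If there is at least one negative $d_i$, then $p\le\ell-1$, so $1+p\le\ell$ and the cycle inequality holds. If no $d_i$ is negative, then all $d_i$ vanish because their sum is zero and each is nonnegative. In that case both sides are zero and the inequality holds trivially. Summing over the cycles yields $\|A_Fx\|_2^2\le L\,\|(A_Fx)_+\|_2^2$, as required.

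The argument has no real obstacle. The only point needing care is the constant: a naive bound gives $2\ell$, and we get exactly $\ell$ by counting the positive and negative indices separately, since they cannot together exceed $\ell$.
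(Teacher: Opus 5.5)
Your proof is correct and follows the paper's argument essentially verbatim: the telescoping identity $\sum_i d_i=0$, the bound $\sum_i(d_i)_-^2\le(\sum_i(d_i)_-)^2=(\sum_i(d_i)_+)^2$, Cauchy--Schwarz over the positive indices, and summation over the cycles. Your explicit case split, in which the absence of negative $d_i$ forces all $d_i=0$, only makes precise the factor $\ell-1$ that the paper uses without comment.
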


\begin{proof}
On a directed cycle of length $\ell$, put
$d_i=x_{v_i}-x_{v_{i+1}}$, with indices modulo $\ell$.  Since
$\sum_i d_i=0$,
\[
  \sum_i(d_i)_-^2
  \le\left(\sum_i(d_i)_-\right)^2
  =\left(\sum_i(d_i)_+\right)^2
  \le(\ell-1)\sum_i(d_i)_+^2.
\]
Summing over the cycles gives
$\|(A_Fx)_-\|_2^2\le(L-1)\|(A_Fx)_+\|_2^2$, and hence
$\|A_Fx\|_2^2\le L\|(A_Fx)_+\|_2^2$.
\end{proof}

\begin{theorem}[Upper bound for simple Eulerian digraphs]
\label{thm:simple-eulerian-upper-bound}
Let $G$ be a simple Eulerian digraph with $n$ vertices and $m$ arcs.  For every
$0<\varepsilon\le1/2$,
\[
  \operatorname{spn}_{\varepsilon}(G)
  \le
  \widetilde O\left(
    \min\left\{
      m,\frac{n^{3/2}}{\varepsilon}+\frac n{\varepsilon^2}
    \right\}
  \right).
\]
\end{theorem}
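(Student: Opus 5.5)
The bound $\operatorname{spn}_\varepsilon(G)\le m$ is immediate, since $G$ itself (all coefficients equal to one) is a sparsifier. It therefore suffices to show that $\operatorname{spn}_\varepsilon(G)\le\widetilde O(n^{3/2}/\varepsilon+n/\varepsilon^2)$. The plan is to split $E(G)=E(F)\sqcup E(R_{\mathrm{res}})$, where $F$ is a union of short directed cycles and $R_{\mathrm{res}}$ is a residual with at most $M$ arcs. We keep $R_{\mathrm{res}}$ with unit coefficients and sparsify $F$ by \cref{lem:relu-row-sampling}. The energy is additive over arcs, so $Q_G^+=Q_F^+ + Q_{R_{\mathrm{res}}}^+$. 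If $c_F$ satisfies $|Q^+_{F,c_F}-Q_F^+|\le\varepsilon Q_F^+$ pointwise, then the combined coefficient vector gives error at most $\varepsilon Q_F^+\le\varepsilon Q_G^+$, because $Q_{R_{\mathrm{res}}}^+\ge0$.

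\emph{Peeling.} Set $M:=\lceil n^{3/2}/\varepsilon\rceil$. Starting from $G_0=G$, as long as the current digraph $G_t$ has $m_t>M$ arcs, we take a shortest directed cycle $C_t$ in $G_t$ and remove it. Removing a directed cycle keeps every vertex balanced, so $G_t$ stays simple and Eulerian. We apply the theorem of Huang, Ma, Shapira, Sudakov, and Yuster~\cite{HuangMaShapiraSudakovYuster2013} to the subgraph induced on the non-isolated vertices of $G_t$; this subgraph has at most $n$ vertices and $m_t$ arcs, so the theorem yields a cycle of length at most $6n^2/m_t<6n^2/M$. When $m_t>M$ the arc set is nonempty, so a cycle exists. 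Cycles have length at least $2$, and simplicity is only needed as the hypothesis of that theorem. The removed cycles partition $E(F)$ into directed cycles of length at most $L:=\max\{2,\lceil 6n^2/M\rceil\}=O(1+n^{1/2}\varepsilon)$. The process stops with $|E(R_{\mathrm{res}})|\le M$.

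\emph{Sampling $F$.} By \cref{lem:bounded-cycle-asymmetry}, $\mu(A_F)\le L$, which is finite. Also $\operatorname{rank}(A_F)\le n-1$, since $A_F\mathbf 1=0$. Applying \cref{lem:relu-row-sampling} with $\eta=\varepsilon\le1/2$ gives a reweighting of $F$ with support
\[
  \widetilde O\!\left(\frac{nL}{\varepsilon^2}\right)
  =\widetilde O\!\left(\frac{n}{\varepsilon^2}+\frac{n^3}{M\varepsilon^2}\right)
  =\widetilde O\!\left(\frac{n}{\varepsilon^2}+\frac{n^{3/2}}{\varepsilon}\right).
\]
The hidden factors are polylogarithmic in $n$, $L\le n$, and $1/\varepsilon$. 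Adding the at most $M=O(n^{3/2}/\varepsilon)$ residual arcs gives the claimed bound. If $m\le M$, there is nothing to peel and we simply return $G$.

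\emph{Main obstacle.} The only delicate step is invoking the short-cycle theorem correctly at every stage of the peeling. We must check that each intermediate graph is still simple and Eulerian, and that the cycle-length bound is in terms of the current arc count $m_t$ rather than the original $m$. Since $m_t>M$ throughout the peeling, the uniform bound $6n^2/M$ holds for every removed cycle. Everything else is direct bookkeeping of the tradeoff between $M$ and $n^3/(M\varepsilon^2)$.
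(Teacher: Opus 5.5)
Your proposal is correct and follows essentially the same route as the paper's proof. Both peel short directed cycles via the Huang--Ma--Shapira--Sudakov--Yuster theorem until at most $M\asymp n^{3/2}/\varepsilon$ arcs remain, bound $\mu(A_F)$ by the maximum cycle length, apply the Munteanu--Omlor row sampling to $F$, and keep the residual unchanged.

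The one small difference is how the short-cycle theorem is invoked. The paper applies it to a strongly connected component $C$ with $|E(C)|/|V(C)|^2\ge m_t/n^2$, which exists because an Eulerian digraph has no arcs between distinct strongly connected components. This is safe even if the cited result is stated only for connected digraphs. Your restriction to the non-isolated vertices does not address connectivity, but the paper's one-line averaging argument over components closes this.
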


\begin{proof}
Fix $1\le M\le m$.  Starting with $G$, repeatedly remove a directed cycle
while the current residual has more than $M$ arcs.  Removing a directed cycle
preserves the Eulerian property.  If a residual has $m_t$ arcs, one of its
strongly connected components $C$ satisfies
$|E(C)|/|V(C)|^2\ge m_t/n^2$: an Eulerian digraph has no arcs between distinct
strongly connected components.  The theorem of Huang, Ma, Shapira, Sudakov,
and Yuster~\cite[Corollary~1.2]{HuangMaShapiraSudakovYuster2013} therefore gives a
directed cycle of length at most $6n^2/m_t<6n^2/M$.

Let $R$ be the final residual and let $F$ contain the removed arcs.  Then
$|E(R)|\le M$, and the arcs of $F$ are partitioned into directed cycles of
length at most
\[
  L_M:=\left\lceil\frac{6n^2}{M}\right\rceil.
\]
By \cref{lem:bounded-cycle-asymmetry},
$\mu(A_F)\le L_M$, while $\operatorname{rank}(A_F)\le n-1$.
\Cref{lem:relu-row-sampling} gives a $(1\pm\varepsilon)$-spectral sparsifier
$H_F$ of $F$ with $\widetilde O(nL_M/\varepsilon^2)$ arcs.  Retain $R$
without changing its weights and put $H=R\cup H_F$.  Since
$Q_G^+=Q_R^++Q_F^+$, the graph $H$ approximates $Q_G^+$ within
$1\pm\varepsilon$ and
\begin{equation}
  |E(H)|
  \le
  M+\widetilde O\left(
    \frac n{\varepsilon^2}+\frac{n^3}{M\varepsilon^2}
  \right).
  \label{eq:simple-eulerian-tradeoff}
\end{equation}
If $m$ is below a constant multiple of $n^{3/2}/\varepsilon$, retain every
arc.  Otherwise take $M=\Theta(n^{3/2}/\varepsilon)$ in
\eqref{eq:simple-eulerian-tradeoff}.
\end{proof}

\subsection{Lower bound via cyclic blow-ups}

We construct Eulerian digraphs whose spectral sparsifiers must retain most
arcs for constant error.  The witness potentials vary an arbitrary subset of
the arcs from one fixed vertex while keeping all other energy bounded.

Fix integers $k\ge4$ and $b\ge1$, partition the vertices into
\[
  V=V_0\mathbin{\dot\cup}\cdots\mathbin{\dot\cup}V_{k-1},
  \qquad |V_i|=b,
\]
and include all arcs
\[
  V_i\longrightarrow V_{i+1\bmod k}
  \qquad(i\in\mathbb Z_k).
\]
Denote the resulting digraph by $G^{\mathrm{cyc}}_{k,b}$.  It is simple,
strongly connected, and Eulerian: every vertex has $b$ incoming and $b$
outgoing arcs.  It has $n=kb$ vertices and $m=kb^2$ arcs.

By cyclic symmetry, consider the arcs from $V_0$ to $V_1$.  For
$u\in V_0$ and $T\subseteq V_1$, define $x^{u,T}$ by
\begin{align}
  x^{u,T}_u&=1,
  &x^{u,T}_v&=0 &&(v\in V_0\setminus\{u\}),\notag\\
  x^{u,T}_v&=0 &&(v\in T),
  &x^{u,T}_v&=1 &&(v\in V_1\setminus T),\notag\\
  x^{u,T}_v&=\frac{k-r}{k-2}
  &&(v\in V_r,\ 2\le r\le k-1).
  \label{eq:unweighted-witness}
\end{align}
Put
\begin{equation}
  \Gamma_{k,b}
  :=\frac{(k-3)b^2+b(b-1)}{(k-2)^2}
  =\frac{b^2}{k-2}-\frac{b}{(k-2)^2}.
  \label{eq:cyclic-baseline}
\end{equation}

\begin{theorem}[Support lower bound for cyclic blow-ups]
\label{thm:unweighted-eulerian-lower-bound}
Let $0<\varepsilon\le1/2$, and let $H$ be a $(1\pm\varepsilon)$-spectral
sparsifier of $G^{\mathrm{cyc}}_{k,b}$.  Then
\begin{equation}
  |E(H)|
  \ge
  kb\left(
    b-\frac{2\varepsilon}{1-\varepsilon}\Gamma_{k,b}
  \right).
  \label{eq:unweighted-general-lower-bound}
\end{equation}
In particular, if $k=8b+2$, then
\[
  |E(H)|\ge\frac34kb^2=\Omega(n^{3/2}),
  \qquad n=kb.
\]
\end{theorem}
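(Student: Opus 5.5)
We compute the energy of the witness potentials $x^{u,T}$ exactly, compare pairs of them inside the sparsifier, and then sum the resulting bound over all vertices using cyclic symmetry.

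\textbf{Step 1: energy of the witness.} Fix $u\in V_0$ and $T\subseteq V_1$, and go through the layer pairs one at a time.
\begin{itemize}
\item Arcs $V_0\to V_1$: an arc $u\to v$ with $v\in T$ drops by one unit. Every other arc in this layer pair starts at value $0$, or ends at value $1$ when it starts at $u$, so it contributes nothing.
\item Arcs $V_1\to V_2$: since $V_2$ has value $1$ and $V_1$ takes values $0$ and $1$, there is no drop.
\item Arcs $V_r\to V_{r+1}$ with $2\le r\le k-2$: each of these $(k-3)b^2$ arcs drops by $1/(k-2)$.
\item Arcs $V_{k-1}\to V_0$: layer $V_{k-1}$ has value $1/(k-2)$. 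Arcs ending at $u$ rise, and the $b(b-1)$ arcs ending in $V_0\setminus\{u\}$ drop by $1/(k-2)$.
\end{itemize}
Summing the contributions gives $Q_G^+(x^{u,T})=\Gamma_{k,b}+|T|$ with $\Gamma_{k,b}$ as in \eqref{eq:cyclic-baseline}.

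\textbf{Step 2: omitted arcs are few.} Let $H$ be a sparsifier. Since $H$ is a reweighted subgraph, only arcs of $G$ carry weight. Let $Z\subseteq V_1$ be the heads of the arcs $u\to v$ that are not in $E(H)$. The potentials $x^{u,\varnothing}$ and $x^{u,Z}$ differ only on $Z$. Every arc of $G$ incident to a vertex $v\in Z$ is either $u\to v$, which is absent from $H$, or contributes zero under both potentials:
\begin{itemize}
\item arcs $u'\to v$ with $u'\ne u$ start at value $0$;
\item arcs $v\to V_2$ go from $v$ (value $0$ or $1$) to value $1$, so they never drop.
\end{itemize}
Hence $Q_H^+(x^{u,\varnothing})=Q_H^+(x^{u,Z})$. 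Combining this with the sparsifier inequalities gives
\[
(1-\varepsilon)(\Gamma_{k,b}+|Z|)\le Q_H^+(x^{u,Z})=Q_H^+(x^{u,\varnothing})\le(1+\varepsilon)\Gamma_{k,b},
\]
so $|Z|\le\frac{2\varepsilon}{1-\varepsilon}\Gamma_{k,b}$. Consequently $u$ retains at least $b-\frac{2\varepsilon}{1-\varepsilon}\Gamma_{k,b}$ outgoing arcs in $H$.

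\textbf{Step 3: summing over vertices.} The construction is invariant under cyclic rotation of the layers, so the same bound holds for every vertex in every layer. Distinct vertices have disjoint sets of outgoing arcs, so summing over all $kb$ vertices gives \eqref{eq:unweighted-general-lower-bound}.

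\textbf{Step 4: the case $k=8b+2$.} Here $\Gamma_{k,b}<b^2/(k-2)=b/8$. Since $\varepsilon\le1/2$, we have $\frac{2\varepsilon}{1-\varepsilon}\le2$, so the number of omitted arcs per vertex is less than $b/4$. Therefore $|E(H)|\ge\frac34kb^2$. With $n=kb=\Theta(b^2)$ this is $\Theta(n^{3/2})$.

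The only delicate point is Step 2: checking that no arc other than the omitted ones distinguishes the two potentials. The arcs leaving $Z$ into $V_2$ need particular care, and they are harmless precisely because $V_2$ sits at the top value $1$.
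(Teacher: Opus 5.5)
Your proposal is correct and follows the paper's proof essentially step by step. Both arguments compute the exact energy $Q^+(x^{u,T})=\Gamma_{k,b}+|T|$, compare $x^{u,\varnothing}$ with $x^{u,Z}$ inside $H$, use cyclic relabeling to sum over all $kb$ vertices, and use $\Gamma_{k,b}<b^2/(k-2)=b/8$ when $k=8b+2$. Your explicit check that the arcs into $Z$ from $V_0\setminus\{u\}$ and the arcs out of $Z$ into $V_2$ contribute zero under both potentials only spells out a step that the paper asserts in one line.
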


\begin{proof}
For the potential $x^{u,T}$, precisely the arcs $u\to v$ with $v\in T$
contribute between $V_0$ and $V_1$, and the arcs from $V_1$ to $V_2$
contribute zero.  Each of the $k-3$ pairs from $V_2\to V_3$ through
$V_{k-2}\to V_{k-1}$ contributes $b^2/(k-2)^2$, and the arcs from
$V_{k-1}$ to $V_0$ contribute $b(b-1)/(k-2)^2$.  Hence
\begin{equation}
  Q^+_{G^{\mathrm{cyc}}_{k,b}}(x^{u,T})
  =\Gamma_{k,b}+|T|.
  \label{eq:cyclic-witness-energy}
\end{equation}

Fix a layer $V_i$ and $u\in V_i$; by cyclic relabeling take $i=0$.  Let
\[
  Z_u:=\{v\in V_1:\text{the weight of }u\to v\text{ in }H\text{ is zero}\}.
\]
The only arc contributions that change between $x^{u,\varnothing}$ and
$x^{u,Z_u}$ are those of $u\to v$ with $v\in Z_u$, and $H$ assigns weight
zero to all of them.  Therefore
\begin{equation}
  Q_H^+(x^{u,Z_u})=Q_H^+(x^{u,\varnothing}).
  \label{eq:unweighted-equal-output-energy}
\end{equation}
Combining \eqref{eq:cyclic-witness-energy},
\eqref{eq:unweighted-equal-output-energy}, and the approximation inequalities in
\eqref{eq:intro-sparsifier-guarantee} gives
\[
  (1-\varepsilon)(\Gamma_{k,b}+|Z_u|)
  \le(1+\varepsilon)\Gamma_{k,b}.
\]
Thus
\[
  |Z_u|\le\frac{2\varepsilon}{1-\varepsilon}\Gamma_{k,b}.
\]
By cyclically relabeling the layers, the same bound applies to the number of
omitted arcs from any $u\in V_i$ to $V_{i+1\bmod k}$.  Thus every vertex
retains at least
$b-2\varepsilon\Gamma_{k,b}/(1-\varepsilon)$ outgoing arcs, which proves
\eqref{eq:unweighted-general-lower-bound}.

If $k=8b+2$, then $\Gamma_{k,b}<b/8$ and
$2\varepsilon/(1-\varepsilon)\le2$, so every vertex retains at least $3b/4$
arcs.  Since $n=(8b+2)b\le10b^2$, this is $\Omega(n^{3/2})$.
\end{proof}

The upper bound in \cref{thm:simple-eulerian-upper-bound} and the lower bound
for cyclic blow-ups in \eqref{eq:unweighted-general-lower-bound} determine
the worst-case sparsifier size for every value of $m$.

\begin{proof}[Proof of \cref{thm:eulerian-n32-transition}]
The upper bound follows from \cref{thm:simple-eulerian-upper-bound}.
For the lower bound, put $M_\star:=\min\{m,n^{3/2}\}$.

If $M_\star\le32n$, let $\ell=\lfloor\min\{m,n\}\rfloor$.  For sufficiently
large $n$, we have $\ell\ge3$ and $\ell=\Omega(M_\star)$.  Take a directed
cycle of length $\ell$ and pad it with isolated vertices.  Every cycle arc is
indispensable: for a target arc $v_0\to v_1$, assign values one and zero to
$v_0$ and $v_1$, respectively, and increase the potential monotonically from
zero to one along the complementary directed path.  The target is the only
arc with a positive difference.

Suppose now that $M_\star>32n$, and put
\[
  b:=\left\lfloor\frac{M_\star}{16n}\right\rfloor,
  \qquad
  k:=\left\lfloor\frac nb\right\rfloor.
\]
Then
\[
  \frac{M_\star}{32n}\le b\le\frac{\sqrt n}{16},
\]
so $k\ge8b+2$ for sufficiently large $n$.  The graph
$G_0:=G^{\mathrm{cyc}}_{k,b}$, padded with isolated vertices, has at most $n$
vertices, and
\[
  |E(G_0)|=kb^2\le nb\le\frac{M_\star}{16}\le m.
\]
Moreover,
\[
  |E(G_0)|=kb^2\ge(n-b)b\ge\frac{M_\star}{64}.
\]
Since $k\ge8b+2$, \eqref{eq:cyclic-baseline} gives
$\Gamma_{k,b}<b^2/(k-2)\le b/8$.  Applying
\eqref{eq:unweighted-general-lower-bound} with
$2\varepsilon_0/(1-\varepsilon_0)\le2$ shows that every sparsifier of $G_0$
retains at least $3|E(G_0)|/4=\Omega(M_\star)$ arcs.
\end{proof}

\section{Sampling Transitive Tournaments by Endpoint Distance}
\label{sec:transitive-tournaments}

This section constructs spectral sparsifiers of transitive tournaments.
For a relative-error parameter $0<\varepsilon\le1/2$, we show that every
$n$-vertex transitive tournament has a $(1\pm\varepsilon)$-spectral sparsifier
with $\widetilde O(\min\{n^2,n/\varepsilon^3\})$ arcs.

Throughout this section, $C,c>0$ denote universal constants that may
change from line to line.

\Cref{sec:distance-scale-sampling} states an expected Gaussian-width bound
for arcs in one range of endpoint distances (\cref{thm:scale-response}).
Assuming this bound, \cref{sec:tt-sparsifier} chooses inclusion probabilities
for all distance ranges and applies the general sampling criterion in
\cref{thm:qgld-general} to obtain the sparsifier.

To prove the bound for one distance range, \cref{sec:value-tree}
decomposes squared differences into bilinear terms and assigns arcs to
consecutive vertex intervals.  \Cref{sec:scale-assembly} derives energy
constraints for the endpoint vectors and reduces the Gaussian-width bound
to an estimate for their bilinear forms, completing the sum over intervals
under that estimate.  \Cref{sec:prob-tools} proves the estimate in three
steps: a deterministic averaging inequality, elimination of the endpoint
vectors, and moment bounds for the remaining random quantities.

\subsection{A Gaussian-width bound for one endpoint-distance range}
\label{sec:distance-scale-sampling}

This subsection isolates the arcs whose endpoint distances lie in
$[D,2D)$ and samples each arc independently with probability $p=\rho/D$,
where $\rho\in[1,D]$ controls the expected sample size.  We define the
associated Gaussian process and bound its expected supremum by a
polylogarithmic factor times $\rho^{-1/3}$; this estimate will determine the
inclusion probability for the distance range.

For a positive integer $n$, let $\TT_n$ be the tournament on $[n]$ with the
arc $i\to j$ whenever $i<j$.  We call $j-i$ the endpoint distance of this arc.

For every positive integer $D$, define
\[
    E_D:=\{(i,j):D\le j-i<2D\}.
\]
Fix $\rho$ with $1\le\rho\le D$ and put $p:=\rho/D$, so that
$1/D\le p\le1$.  Let $(\xi_e)_{e\in E_D}$ and
$(\xi'_e)_{e\in E_D}$ be two mutually independent families of
Bernoulli$(p)$ random variables, and set
$\zeta_e:=\mathbf 1\{\xi_e\ne\xi'_e\}$.
Specialize $\mathfrak G_G(\pi)$ from \eqref{eq:restricted-gaussian-width}
by taking
$G=\TT_n$, $\pi_e=p$ for $e\in E_D$, and $\pi_e=1$ otherwise.
Arcs outside $E_D$ then have zero disagreement indicators and contribute
nothing.  We denote the resulting expected Gaussian width by
\begin{equation}
    \mathfrak G_{n,D}(\rho)
    :=
    \mathbb E_{\zeta,g}
    \sup_{Q_{\TT_n}^+(x)>0}
    \left|
       \sum_{e\in E_D}
       g_e\frac{\zeta_e}{p}
       \frac{q_e^+(x)}{Q_{\TT_n}^+(x)}
    \right|,
    \label{eq:tt-distance-gaussian-width}
\end{equation}
where $(g_e)_{e\in E_D}$ is a family of independent standard Gaussian random
variables, independent of the two Bernoulli families $(\xi_e)_{e\in E_D}$
and $(\xi'_e)_{e\in E_D}$.
Here the first subscript is the number of vertices, and the second is the
distance parameter in $E_D$.  The inequalities $1\le\rho\le D$ ensure that
$p=\rho/D$ is a valid inclusion probability.  Since $|E_D|\le nD$, each of the
two Bernoulli samples retains at most $n\rho$ arcs in expectation.

The next theorem bounds $\mathfrak G_{n,D}(\rho)$.

\begin{theorem}[Bound on $\mathfrak G_{n,D}(\rho)$]
\label{thm:scale-response}
There are universal constants
$A_0,B_0,\kappa_0>0$ such that, for every integer $D\ge1$ and every
\[
    A_0\log^{3/2}(2n)\le\rho\le D,
\]
one has
\[
    \mathfrak G_{n,D}(\rho)
    \le
    B_0\log^{\kappa_0}(2n)\rho^{-1/3}.
\]
\end{theorem}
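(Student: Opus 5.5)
The proof follows the route sketched in the technical overview: a deterministic reduction to bilinear forms on endpoint blocks, followed by a probabilistic bound on those forms.

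\emph{Reduction to bilinear forms.} Partition $[n]$ into consecutive vertex blocks of length $cD$ for a small constant $c$. Then every arc of $E_D$ joins an earlier block $I$ to a later block $J$ with $\Theta(D)$ vertices strictly between them. For each such pair, let $W$ be the interval spanning $I$, $J$, and everything in between. The intervals $W$ split into $O(1)$ families, each consisting of pairwise disjoint intervals.

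For a fixed potential $x$, build the balanced binary value tree over the gaps between consecutive distinct values of $x$. Applying the identity $(a-b)^2=(a-c)^2+(c-b)^2+2(a-c)(c-b)$ recursively writes each $q_e^+(x)$ as a sum of leaf-gap squares plus cross terms $2(x_i^K-c)_+(c-x_j^K)_+$ over the nodes $K$ whose split value $c$ separates $x_i$ and $x_j$. The leaf-gap squares are handled as degenerate nodes, using the midpoint of the gap.

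Group the Gaussian sum by $(K,I,J)$. Each group is a bilinear form $\alpha^\top X\beta$ with $X_{ij}=\zeta_{ij}g_{ij}/p$ on $E_D\cap(I\times J)$. Define $\mathcal A$ as the smallest random factor such that
\[
|\alpha^\top X\beta|\le\mathcal A\,Q^+_{\TT_n[W]}(x^K)
\]
holds for all $x$, $K$, and $(I,J)$ simultaneously. Because nodes at a fixed depth have disjoint interiors, $\sum_K Q^+_{\TT_n}(x^K)\le Q^+_{\TT_n}(x)$ at each depth. Summing over the $O(\log n)$ depths and the $O(1)$ disjoint families of intervals gives
\[
\mathfrak G_{n,D}(\rho)\le O(\log n)\,\mathbb E\mathcal A .
\]
It remains to prove $\mathbb E\mathcal A\le\log^{O(1)}(2n)\rho^{-1/3}$.

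\emph{Local constraints and averaging.} Normalize the local energy $Q^+_{\TT_n[W]}(x^K)$ to $1$. The energy contains three kinds of constraints.
\begin{itemize}
\item The two-arc paths $i\to k\to j$ through the $\Theta(D)$ intermediate vertices force $\min\{\|\alpha\|_2,\|\beta\|_2\}\lesssim D^{-1/2}$, and more precisely the vector with the smaller support has norm $O(D^{-1/2})$. Each coordinate is also bounded, by $O(1)$ up to the same path argument.
\item The internal arcs in $I$ bound the decreases of $\alpha$: $\sum_{i<i'}(\alpha_i-\alpha_{i'})_+^2\le1$.
\item The internal arcs in $J$ bound the increases of $\beta$.
\end{itemize}
Assume $\alpha$ has the smaller support; the other case is symmetric, swapping the roles of the blocks. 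Set $h=X^\top\alpha$ and split $J$ into $R$ consecutive subblocks. Pair negative entries of $h$ in earlier subblocks with positive entries in later subblocks through a middle-subblock average $\bar\beta$. This gives the deterministic inequality
\[
h^\top\beta\le(\text{increase terms})\cdot(\text{sizes of }h)+\sum(\text{subblock imbalances of }h)\,\bar\beta+(\text{boundary subblocks}).
\]
\begin{itemize}
\item By Cauchy--Schwarz, the increase terms are controlled by the internal energy of $\beta$, weighted by $\ell_2$ norms of $h$ restricted to subblocks of length $D/R$.
\item The imbalance terms are the sums of $h$ over a subblock, i.e.\ $\langle\alpha,X\mathbf 1_{\text{sub}}\rangle$.
\item The boundary subblocks have no comparison partner. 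They cost about $\|h\|$ on $O(D/R)$ coordinates, which yields the $1/R$ term.
\end{itemize}
Applying the same inequality to $-h$ bounds $|h^\top\beta|$, and the bound holds uniformly in $\beta$.

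\emph{Random estimates.} What remains is a supremum over $\alpha$ in a set described by two conditions: its norm (after splitting by support size) and its bounded decreases. Three random quantities must be controlled over this set.
\begin{itemize}
\item The operator norms of the sparse Gaussian submatrices of $X$ on subblocks. Their entries have variance $\approx 2/p$ with density $\approx 2p$, so standard sparse-random-matrix bounds give norm $\lesssim\sqrt{D/(Rp)}\cdot\mathrm{polylog}$.
\item The subblock sums $X\mathbf 1_{\text{sub}}$, whose coordinates are Gaussian with variance $\approx (D/R)/p$.
\item Their inner products with $\alpha$.
\end{itemize}
With $p=\rho/D$ and $\|\alpha\|\lesssim D^{-1/2}$, these combine to $\sqrt{R/\rho}\cdot\mathrm{polylog}$. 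Uniformity over $\alpha$ and $\beta$ comes for free in the operator-norm pieces. For the imbalance pieces, I would take a union bound over a net. The required net covers vectors with bounded one-sided variation, so it is controlled by monotone-rearrangement entropy, which is polylogarithmic per scale.

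Uniformity over all blocks and nodes comes from a union bound. It costs only $\log n$ factors once each Gaussian maximum holds with probability $1-n^{-C}$. This is where the hypothesis $\rho\ge A_0\log^{3/2}(2n)$ enters: it keeps the Bernoulli sparsity concentration valid. Choosing $R\asymp\rho^{1/3}$ balances $\sqrt{R/\rho}$ against $1/R$ and gives the claimed $\rho^{-1/3}$.

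\emph{Main obstacle.} I expect the hardest step to be the uniform control of the imbalance terms over all admissible $\alpha$. The vector $\alpha$ is one-sided constrained rather than norm-bounded in every direction, and it may depend on $X$. The first thing I would try is to decompose $\alpha$ into a decreasing-variation part, which is controlled by a chaining or net argument over monotone vectors, and a bounded-decrease remainder, which is absorbed by the operator-norm estimate.
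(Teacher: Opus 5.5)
Your reduction is the same as the paper's: vertex intervals in $O(1)$ disjoint families, the value tree with leaf midpoints, clipped energies summed over $O(\log n)$ depths, and $\mathfrak G_{n,D}(\rho)\le O(\log n)\,\mathbb E\mathcal A$. The gap is in bounding $\mathcal A$. The constraints you keep from the normalized local energy are too weak, and no argument that uses only them can reach $\rho^{-1/3}$. Here is a pair satisfying all of them. Take $\beta\equiv D^{-1/2}$ on all of $J$, which has no increases and coordinates $\le D^{-1/2}$. Let $S\subseteq I$ be the $a=\lfloor |I|/4\rfloor\le|J|$ rows where $(X\mathbf 1_J)_i$ is largest, and set $\alpha\propto(X\mathbf 1_J)_S$ with $\|\alpha\|_2=D^{-1/2}$. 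Its decreases are at most $|I|\,\|\alpha\|_2^2\le 1/8$. Yet on a block pair all of whose pairs lie in $E_D$ (such pairs exist), $\alpha^\top X\beta=D^{-1}\|(X\mathbf 1_J)_S\|_2\asymp\sqrt{D/\rho}$. The same defect appears in your accounting: nothing on your list controls the middle averages $\bar\beta_r$ that multiply the imbalances. With $\|\alpha\|_1\lesssim\sqrt{a/D}$ and per-coordinate bounds on $\beta$, the imbalance term comes out a factor $\sqrt a$ (possibly $\asymp\sqrt D$) larger than $\sqrt{R/\rho}$.

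The missing ingredient is the full threshold constraint from the intervening vertices. Write $\Psi_s(z)=\inf_{\gamma\ge0}\{\|(z-\gamma\mathbf 1)_+\|_2^2+s\gamma^2\}$. Each $k\in M_W$ contributes at least $\Psi_b(\alpha)+\Psi_a(\beta)$, so the larger-support vector satisfies $D\Psi_a(\beta)\lesssim 1$. Here $a=|\supp\alpha|$ is the support size of the \emph{other} vector. Thus $\beta$ exceeds some level $\gamma\lesssim(aD)^{-1/2}$ only by $\|(\beta-\gamma\mathbf 1)_+\|_2\lesssim D^{-1/2}$. The leftover nonnegative form $w^\top\beta$ has weights $|P_r(h)-N_{r+2}(h)|/L$ on middle subblocks plus the unpaired positive parts. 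It therefore satisfies $w^\top\beta\le\|w\|_2\|(\beta-\gamma\mathbf 1)_+\|_2+\gamma\|w\|_1\lesssim\|h\|_2/\sqrt D+U(h)/\sqrt{aD}$. By homogeneity $U(X^\top\alpha)\le\|\alpha\|_1F(X)$, and $\|\alpha\|_1\le\sqrt{2a/D}$. So the support size cancels, leaving $F(X)/D$ with no $\alpha$ at all.

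This is why the step you flag as the main obstacle disappears. No net is needed, since symmetrization and the contraction principle reduce $F$ to $\max_i|\sum_{j\in T_r}X_{ij}|$. Your proposed net would fail anyway. For the smaller-support vector the decrease constraint already follows from $\|\alpha\|_2\le\sqrt{2/D}$, so the admissible set is a full $\Theta(D)$-dimensional ball with no polylogarithmic entropy.

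Two smaller inaccuracies. First, the imbalance is $P_r(h)-N_{r+2}(h)$, comparing positive and negative parts on two different subblocks; it is not the linear sum $\langle\alpha,X\mathbf 1_{\mathrm{sub}}\rangle$. Second, the subblock submatrices have $|I|\asymp D$ rows, so their norms are of order $\sqrt{D/p}$, not $\sqrt{D/(Rp)}$. The $\sqrt{R/\rho}$ term comes from the full norm $\|X\|_{\mathrm{op}}\lesssim D\sqrt{q/\rho}$ combined with the $1/\sqrt L$ gained by averaging.
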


We defer the proof of \cref{thm:scale-response} to
\cref{sec:scale-assembly}.

\subsection{Combining the distance ranges to obtain a sparsifier}
\label{sec:tt-sparsifier}

We now use \cref{thm:scale-response} to construct a spectral sparsifier of
$\TT_n$.  For each dyadic distance range $[D,2D)$ and inclusion probability
$\rho/D$, that theorem gives
$\mathfrak G_{n,D}(\rho)\le B_0\log^{\kappa_0}(2n)\rho^{-1/3}$ when
$A_0\log^{3/2}(2n)\le\rho\le D$.
Let $0<\eta\le1/2$ be a target upper bound on the expected Gaussian width
$\mathfrak G_{\TT_n}(\pi)$ for the entire tournament.  The following lemma
achieves this target with total inclusion probability
$\widetilde O(\min\{n^2,n/\eta^3\})$.  Taking $\eta$ to be a sufficiently
small multiple of the desired relative error $\varepsilon$ then gives a
sparsifier by \cref{thm:qgld-general}.

\begin{lemma}[Sampling probabilities for a transitive tournament]
\label{lem:tt-sampling-probabilities}
For every $0<\eta\le1/2$, there are inclusion probabilities
$\pi\in(0,1]^{E(\TT_n)}$ such that arcs with the same endpoint distance
receive the same probability,
\begin{equation}
  \sum_{e\in E(\TT_n)}\pi_e
  \le \widetilde O\!\left(
    \min\left\{n^2,\frac{n}{\eta^3}\right\}
  \right)
  \label{eq:tt-sampling-budget}
\end{equation}
and
\begin{equation}
  \mathfrak G_{\TT_n}(\pi)\le\eta.
  \label{eq:tt-sampling-width}
\end{equation}
\end{lemma}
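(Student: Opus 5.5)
Group the arcs of $\TT_n$ into dyadic endpoint-distance ranges $E_D$ with $D=2^s$, $s=0,1,\ldots,\lfloor\log_2 n\rfloor$. There are $S:=\lfloor\log_2 n\rfloor+1=O(\log(2n))$ ranges, and they partition $E(\TT_n)$. Fix
\[
  \rho:=\max\Bigl\{A_0\log^{3/2}(2n),\ \bigl(2SB_0\log^{\kappa_0}(2n)/\eta\bigr)^3\Bigr\}
   =\eta^{-3}\log^{O(1)}(2n).
\]
On $E_D$ set $\pi_e:=1$ if $D<\rho$, and $\pi_e:=\rho/D$ if $D\ge\rho$. This probability depends only on the range, hence only on the endpoint distance.

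For the budget, $|E_D|\le nD$, so each range with $D\ge\rho$ contributes at most $n\rho$. The ranges with $D<\rho$ contribute at most $\sum_{D<\rho}nD\le 2n\rho$. Summing over the $S$ ranges gives
\[
  \sum_e\pi_e\le O(n\rho\log(2n))=\widetilde O(n/\eta^3).
\]
Trivially $\sum_e\pi_e\le|E(\TT_n)|\le n^2$, which gives \eqref{eq:tt-sampling-budget}.

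For the width, arcs with $\pi_e=1$ have $\zeta_e=0$ almost surely and drop out of \eqref{eq:restricted-gaussian-width}. The sum inside the supremum then splits over the ranges with $D\ge\rho$. Use the triangle inequality inside the supremum, and the fact that the supremum of a sum is at most the sum of suprema; note $0\in\mathcal R^0_G$ contributes $0$. This gives
\[
  \mathfrak G_{\TT_n}(\pi)\le\sum_{D\ge\rho}\mathfrak G_{n,D}(\rho).
\]
The point to check is that restricting the disagreement indicators to $E_D$ in each term matches \eqref{eq:tt-distance-gaussian-width}. This holds because the joint law of $(\zeta_e,g_e)_{e\in E_D}$ is the same in both definitions, and expectation is linear. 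Each of these terms satisfies $A_0\log^{3/2}(2n)\le\rho\le D$, so \cref{thm:scale-response} bounds it by $B_0\log^{\kappa_0}(2n)\rho^{-1/3}\le\eta/(2S)$. Summing at most $S$ such terms gives $\mathfrak G_{\TT_n}(\pi)\le\eta/2\le\eta$.

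There is no real obstacle beyond this bookkeeping, because the heavy lifting is in \cref{thm:scale-response}. The only care needed is the case $\rho>n$. Then no range has $D\ge\rho$, all $\pi_e=1$, the width is $0$, and the budget is $|E(\TT_n)|\le n^2$, consistent with the $\min$ in \eqref{eq:tt-sampling-budget}.
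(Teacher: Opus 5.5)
Your proposal is correct and follows essentially the same route as the paper: dyadic distance ranges, a single $\rho=\eta^{-3}\log^{O(1)}(2n)$ chosen so that $\rho\ge A_0\log^{3/2}(2n)$ and each $\mathfrak G_{n,D}(\rho)$ is at most $\eta$ divided by the number of ranges, $\pi_e=\min\{1,\rho/D\}$, the triangle inequality across ranges with \cref{thm:scale-response}, and the $O(n\rho)$-per-range budget count. The only differences are cosmetic: you give explicit constants where the paper uses $C_{\mathrm{sum}},C_\rho$, and you put the boundary case $D=\rho$ on the sampled side.
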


\begin{proof}
Let $\mathcal D$ be the set of dyadic integers $D$ for which $E_D$ is
nonempty.  Let
$C_{\mathrm{sum}}\ge1$ be a universal constant, and put
\[
  \eta_0:=\frac{\eta}{C_{\mathrm{sum}}\log(2n)}.
\]
Choose
\[
  \rho_\star:=C_{\rho}\log^{3\kappa_0}(2n)\eta_0^{-3},
\]
where the universal constant $C_{\rho}$ is large enough that
$\rho_\star\ge A_0\log^{3/2}(2n)$ and
$B_0\log^{\kappa_0}(2n)\rho_\star^{-1/3}\le\eta_0$.
For every $D\in\mathcal D$ and every $e\in E_D$, set
\[
  \pi_e:=\min\left\{1,\frac{\rho_\star}{D}\right\}.
\]

All arcs in $E_D$ are retained when $D\le\rho_\star$, so their disagreement
indicators are zero.  The triangle inequality and
\cref{thm:scale-response}, applied with $\rho=\rho_\star$ on each range
with $D>\rho_\star$, give
\[
  \mathfrak G_{\TT_n}(\pi)
  \le\sum_{\substack{D\in\mathcal D\\D>\rho_\star}}
       \mathfrak G_{n,D}(\rho_\star)
  \le |\mathcal D|\eta_0
  \le\eta,
\]
where the last inequality uses $|\mathcal D|=O(\log(2n))$ and a sufficiently
large choice of $C_{\mathrm{sum}}$.

There are $O(nD)$ arcs in $E_D$.  Summing over the values
$D\in\mathcal D$ with $D\le\rho_\star$ contributes $O(n\rho_\star)$
probabilities, while every $D\in\mathcal D$ with $D>\rho_\star$ contributes
\[
  |E_D|\frac{\rho_\star}{D}=O(n\rho_\star).
\]
Thus
\[
  \sum_e\pi_e
  \le \min\left\{\binom n2,Cn\rho_\star\log(2n)\right\}
  =\widetilde O\!\left(
    \min\left\{n^2,\frac{n}{\eta^3}\right\}
  \right),
\]
which proves \eqref{eq:tt-sampling-budget}.
\end{proof}

\begin{corollary}[Sparsification of transitive tournaments]
\label{cor:tt-sparsification}
For every $0<\varepsilon\le1/2$, the transitive tournament $\TT_n$ has a
$(1\pm\varepsilon)$-spectral sparsifier with
$\widetilde O(\min\{n^2,n/\varepsilon^3\})$ arcs.
\end{corollary}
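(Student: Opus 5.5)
The plan is to combine \cref{lem:tt-sampling-probabilities} with the general sampling criterion \cref{thm:qgld-general}. Given $0<\varepsilon\le1/2$, set $\eta:=\min\{c_{\mathrm{samp}}\varepsilon,1/2\}$, where $c_{\mathrm{samp}}$ is the universal constant of \cref{thm:qgld-general}. Then $0<\eta\le1/2$, so \cref{lem:tt-sampling-probabilities} applies and supplies probabilities $\pi\in(0,1]^{E(\TT_n)}$ with
\[
  \mathfrak G_{\TT_n}(\pi)\le\eta\le c_{\mathrm{samp}}\varepsilon
  \qquad\text{and}\qquad
  \sum_e\pi_e\le\widetilde O\!\left(\min\left\{n^2,\frac{n}{\eta^3}\right\}\right).
\]

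Next, apply \cref{thm:qgld-general} to $G=\TT_n$ with unit weights, which are positive as that theorem requires. It yields a reweighted subgraph $H\subseteq\TT_n$ satisfying $(1-\varepsilon)Q_{\TT_n}^+\le Q_H^+\le(1+\varepsilon)Q_{\TT_n}^+$ and having $|E(H)|\le C\sum_e\pi_e$ arcs.

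It remains to translate the size bound back into $\varepsilon$. Since $\eta=\Theta(\varepsilon)$ for $\varepsilon\le1/2$, we have $n/\eta^3=O(n/\varepsilon^3)$. The universal constants $C$ and $c_{\mathrm{samp}}^{-3}$ are absorbed into the $\widetilde O$, giving $|E(H)|\le\widetilde O(\min\{n^2,n/\varepsilon^3\})$.

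No step here is a real obstacle, since all the analytic work is contained in \cref{thm:scale-response}, through \cref{lem:tt-sampling-probabilities}. The only points to check are that the choice of $\eta$ respects the hypothesis $\eta\le1/2$ of the lemma, and that the constant in front of $\varepsilon$ matches the threshold $c_{\mathrm{samp}}\varepsilon$ in \cref{thm:qgld-general}.
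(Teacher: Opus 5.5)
Your proposal is correct and follows essentially the same route as the paper: apply \cref{lem:tt-sampling-probabilities} with $\eta$ a constant multiple of $\varepsilon$ (the paper takes $\eta=\min\{c_{\mathrm{samp}},1\}\varepsilon$, which is at most $1/2$ like your $\min\{c_{\mathrm{samp}}\varepsilon,1/2\}$), then invoke \cref{thm:qgld-general}. Your check that $n/\eta^3=O(n/\varepsilon^3)$ in both cases of the minimum is exactly the bookkeeping needed.
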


\begin{proof}
Apply \cref{lem:tt-sampling-probabilities} with
$\eta=\min\{c_{\mathrm{samp}},1\}\varepsilon$, where $c_{\mathrm{samp}}$ is
the constant in \cref{thm:qgld-general}.  Then
$\mathfrak G_{\TT_n}(\pi)\le c_{\mathrm{samp}}\varepsilon$, so
\cref{thm:qgld-general} gives a sparsifier with
$O(\sum_e\pi_e)=\widetilde O(\min\{n^2,n/\varepsilon^3\})$ arcs.
\end{proof}

\subsection{Decomposing squared differences and grouping arcs into vertex intervals}
\label{sec:value-tree}

For the arcs $E_D=\{(i,j):D\le j-i<2D\}$ of the $n$-vertex transitive
tournament $\TT_n$, we decompose the random sum
\begin{equation}
    S_D(x):=\sum_{(i,j)\in E_D}\frac{\zeta_{ij}g_{ij}}p(x_i-x_j)_+^2.
    \label{eq:tt-distance-random-sum}
\end{equation}
Here $p=\rho/D$, and $\zeta_{ij},g_{ij}$ are the random variables from
\eqref{eq:tt-distance-gaussian-width}.  Dividing
\eqref{eq:tt-distance-random-sum} by $Q_{\TT_n}^+(x)$ gives the expression
inside the absolute value in that definition.
We use two decompositions, with different roles.  A node $K=[\ell_K,r_K]$
of a tree on the attained values will index a bilinear term in each
squared difference.  A consecutive vertex interval $W\subseteq[n]$ will
contain the endpoints of a group of arcs, together with $\Theta(D)$
intervening vertices.  Thus $K$ is an interval of values, whereas $W$ is
an interval in the fixed vertex order.

For such a value interval $K$, clip each potential value to $K$ and define
the energy on a vertex set $W$ by
\[
    x_i^K:=\min\{r_K,\max\{\ell_K,x_i\}\},
    \qquad E_{K,W}(x):=Q_{\TT_n[W]}^+(x^K).
\]
Our goal is to express $S_D(x)$ as a sum of local bilinear contributions
$Z_{K,W}(x)$ and obtain bounds of the form
\begin{equation}
    |Z_{K,W}(x)|\le C\mathcal A_W E_{K,W}(x),
    \qquad
    \sum_{W,K}E_{K,W}(x)\le C\log(2n)Q_{\TT_n}^+(x).
    \label{eq:tt-local-accounting-plan}
\end{equation}
Here $\mathcal A_W$ will be a random supremum over normalized endpoint
vectors, defined in \eqref{eq:endpoint-bilinear-supremum}, that does not
depend on $x$ or its value tree.  Consequently, a bound on
$\mathbb E\max_W\mathcal A_W$ will control the entire distance range.
The value tree below gives the exact decomposition and the logarithmic
energy bound; the vertex intervals then let us sum these energies with
only constant overlap.

\subsubsection{A binary tree on the attained values}

We express each squared positive difference $(x_i-x_j)_+^2$ as a sum of
products, so that \eqref{eq:tt-distance-random-sum} can be written as a sum
of bilinear forms.

Fix a nonconstant potential $x$ and list its distinct attained values as
\[
    y_0<y_1<\cdots<y_m.
\]
For $1\le r\le m$, let $J_r=(y_{r-1},y_r)$ be the consecutive value gaps.
Use a balanced binary tree with $m$ ordered leaves and height $O(\log(2m))$,
with its branching structure fixed in advance for each $m$.  Associate the
$r$th leaf with the closed interval $[y_{r-1},y_r]$.  Every node $K$ represents
the interval spanned by its descendant leaves.  For an internal node $K$, let
$K^-$ and $K^+$ be the intervals represented by its lower and upper children,
respectively, and let $t_K$ be their common endpoint.  Define
\[
    \alpha_K(a):=\operatorname{length}(K^+\cap(-\infty,a)),
\]
\[
    \beta_K(b):=\operatorname{length}(K^-\cap(b,\infty)).
\]

\begin{lemma}[Exact decomposition of a squared difference]
\label{lem:value-tree-square}
For every two values $a>b$ attained by $x$,
\[
    (a-b)^2
    =
    \sum_{J_r\subset(b,a)}|J_r|^2
    +
    2\sum_{K\text{ internal}}\alpha_K(a)\beta_K(b).
\]
\end{lemma}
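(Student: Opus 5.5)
Since $a$ and $b$ are attained values, the interval $[b,a]$ is exactly the union of the leaf gaps $J_r$ contained in $(b,a)$. We then have $a-b=\sum_{J_r\subset(b,a)}|J_r|$. Squaring this sum gives
\[
(a-b)^2=\sum_r|J_r|^2+2\sum_{r<s}|J_r||J_s|,
\]
where both sums run over gaps inside $(b,a)$. So the task reduces to showing that the cross terms regroup along the tree as $\sum_K\alpha_K(a)\beta_K(b)$.

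For this regrouping, take an unordered pair of distinct leaves $r<s$. There is a unique internal node $K$, their lowest common ancestor, with $J_r\subseteq K^-$ and $J_s\subseteq K^+$. Assigning each pair to this node partitions the cross terms by internal node. For a fixed node $K$, the pairs assigned to it are the pairs (leaf of $K^-$ inside $(b,a)$, leaf of $K^+$ inside $(b,a)$). Their summed product factors as
\[
\operatorname{length}\bigl(K^-\cap(b,a)\bigr)\cdot\operatorname{length}\bigl(K^+\cap(b,a)\bigr).
\]

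It then remains to identify these two factors with $\beta_K(b)$ and $\alpha_K(a)$. This is where the one subtlety lies. If $a\le t_K$, then $K^+\cap(-\infty,a)$ has length zero, so $\alpha_K(a)=0$, and also $K^+\cap(b,a)=\varnothing$; both sides vanish. Symmetrically, if $b\ge t_K$, both sides vanish. Otherwise $b<t_K<a$. In that case $K^+\cap(b,a)=K^+\cap(-\infty,a)$, because $K^+$ lies above $t_K>b$. Likewise $K^-\cap(b,a)=K^-\cap(b,\infty)$, because $K^-$ lies below $t_K<a$. So the factors are exactly $\alpha_K(a)$ and $\beta_K(b)$.

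Summing over internal nodes then gives the identity. The main care is in this boundary bookkeeping: the lengths are unaffected by the choice of open or closed endpoints, and values equal to $t_K$ cause no trouble.
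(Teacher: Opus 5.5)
Your proof is correct and takes the same route as the paper: expanding $\bigl(\sum_{J_r\subset(b,a)}|J_r|\bigr)^2$ and grouping the cross terms by lowest common ancestor is the algebraic form of the paper's partition of the square $(b,a)\times(b,a)$ into gap rectangles. Your case analysis on the position of $t_K$ relative to $a$ and $b$, which identifies $\operatorname{length}(K^+\cap(b,a))$ with $\alpha_K(a)$ and $\operatorname{length}(K^-\cap(b,a))$ with $\beta_K(b)$, spells out the step the paper states without detail.
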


\begin{proof}
Partition the square $(b,a)\times(b,a)$ according to the value gaps
containing its two coordinates.  The squares where both coordinates lie
in the same gap give the first sum.  Group the remaining rectangles by the
lowest common ancestor $K$ of their two distinct gaps.  The higher gap lies
in $K^+$ and the lower gap in $K^-$.  For each $K$, the rectangles in this
group, including both coordinate orders, have total area
$2\alpha_K(a)\beta_K(b)$.  Summing over $K$ gives the second sum.
\end{proof}

For the clipped energies $E_{K,W}(x)=Q_{\TT_n[W]}^+(x^K)$ defined above,
the next lemma gives the logarithmic bound needed in
\eqref{eq:tt-local-accounting-plan}.

\begin{lemma}[Sum of clipped energies over the tree]
\label{lem:subtree-accounting}
For every $W\subseteq[n]$,
\[
    \sum_{K\text{ a tree node}}E_{K,W}(x)
    \le
    C\log(2n)Q_{\TT_n[W]}^+(x).
\]
More generally, for every family $\mathcal W$ of pairwise disjoint subsets of
$[n]$,
\begin{equation}
    \sum_{W\in\mathcal W}\sum_{K\text{ a tree node}}E_{K,W}(x)
    \le
    C\log(2n)Q_{\TT_n}^+(x).
    \label{eq:disjoint-clipped-energy-sum}
\end{equation}
\end{lemma}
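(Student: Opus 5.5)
The approach is to show, for each fixed depth of the value tree, that the clipped energies at that depth sum to at most the unclipped energy, and then to sum over the $O(\log(2m))\le O(\log(2n))$ depths. Since $x$ attains at most $n$ distinct values, $m\le n-1$, so the height is $O(\log(2n))$.

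Fix a depth $h$. The nodes at depth $h$ represent closed value intervals $K$ with pairwise disjoint interiors; this holds at every depth of a fixed binary tree, including for leaves at shallower depth, which we simply do not count again. The key pointwise inequality is, for every arc $(i,j)$ with $i<j$ in $W$,
\[
    \sum_{K\text{ at depth }h}(x_i^K-x_j^K)_+^2\le(x_i-x_j)_+^2 .
\]
To prove it, note first that clipping to $K$ is monotone, so $(x_i^K-x_j^K)_+$ is the length of $K\cap[x_j,x_i]$ when $x_i>x_j$, and is $0$ otherwise. The intervals $K\cap[x_j,x_i]$ have disjoint interiors and lie in $[x_j,x_i]$, so their lengths $\ell_K$ satisfy $\sum_K\ell_K\le x_i-x_j$. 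Hence
\[
    \sum_K\ell_K^2\le\Bigl(\sum_K\ell_K\Bigr)^2\le(x_i-x_j)_+^2 .
\]

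Summing this inequality over the arcs of $\TT_n[W]$ gives
\[
    \sum_{K\text{ at depth }h}E_{K,W}(x)\le Q_{\TT_n[W]}^+(x).
\]
Summing over the depths $h$ then yields the first claim with a factor equal to the tree height plus one, which is at most $C\log(2n)$.

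For \eqref{eq:disjoint-clipped-energy-sum}, apply the first claim to each $W\in\mathcal W$ and use the fact that for pairwise disjoint $W$ the induced subtournaments $\TT_n[W]$ have disjoint arc sets. This gives
\[
    \sum_{W\in\mathcal W}Q_{\TT_n[W]}^+(x)\le Q_{\TT_n}^+(x).
\]

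There is no serious obstacle. The only points to check are that interiors are disjoint within a depth, so the overlap at shared endpoints has measure zero, and that the height bound holds uniformly in terms of $n$. For a constant potential the tree is empty and both sides vanish.
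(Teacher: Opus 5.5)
Your proof is correct and is essentially the paper's argument. You fix a depth, identify each clipped drop with the length of $K\cap[x_j,x_i]$, and use $\sum_K\ell_K^2\le(\sum_K\ell_K)^2\le(x_i-x_j)_+^2$ from the disjoint interiors; you then sum over arcs and the $O(\log(2n))$ depths, and handle the family version via arc-disjointness of the induced subtournaments, with your explicit monotonicity check for clipping filling in a step the paper only asserts.
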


\begin{proof}
Fix an arc whose endpoint values are $a>b$.  At any fixed depth, the tree
nodes represent intervals with disjoint interiors.  If
\[
    d_K:=\operatorname{length}(K\cap(b,a)),
\]
then the contribution of this arc to the clipped energy at node $K$ is
$d_K^2$, and therefore
\[
    \sum_{K\text{ at this depth}}d_K^2
    \le
    \left(\sum_{K\text{ at this depth}}d_K\right)^2
    \le
    (a-b)^2.
\]
The tree has depth $O(\log n)$, so summing over its depths and then over the
arcs of $\TT_n[W]$ proves the first assertion.  The statement for a family
$\mathcal W$ follows by summing the energies of the arc-disjoint induced
subtournaments $\TT_n[W]$, $W\in\mathcal W$.
\end{proof}

\subsubsection{Consecutive vertex intervals for
\texorpdfstring{$E_D$}{E-D}}
\label{sec:topological-windows}

We assign each arc of $E_D$ to a consecutive vertex interval containing
its endpoints.  We will arrange that these intervals form a constant
number of families of pairwise vertex-disjoint intervals, so that their
energies can be summed using \cref{lem:subtree-accounting}.

Assume $32\le D<n$ and put $\ell=\lfloor D/16\rfloor$.  Partition $[n]$
into consecutive blocks of length $\ell$; if a final block has fewer than
$\ell$ vertices, merge it into the preceding block.  Denote the resulting
blocks by $B_1,B_2,\ldots$.  Every block has between $\ell$ and $2\ell$
vertices, so its size lies between $D/32$ and $D/8$.

For every pair $B_s,B_{s+r}$ joined by an arc of $E_D$, define
\[
    W_{s,r}:=B_s\cup B_{s+1}\cup\cdots\cup B_{s+r}.
\]
There are $(r-1)\ell$ or more vertices strictly between the endpoint
blocks.  Since some arc joining them has distance less than $2D$ and
$\ell>D/24$, we have $1\le r\le48$.
Write $\mathcal W_D$ for the family of these intervals.  An interval
$W\in\mathcal W_D$ uniquely determines its first and last blocks, which we
denote by $I_W,J_W$.  Set
\[
    M_W:=W\setminus(I_W\cup J_W),
    \qquad
    \mathcal S_W:=E_D\cap(I_W\times J_W).
\]
The sets $I_W,M_W,J_W$ occur in this order.  The endpoint sizes are
$\Theta(D)$, and $|M_W|=\Theta(D)$: for an arc $(i,j)\in\mathcal S_W$,
the difference between $j-i$ and $|M_W|$ is at most
$|I_W|+|J_W|\le D/4$.  Also $|\mathcal W_D|\le48n$, and the sets
$\mathcal S_W$ partition $E_D$.

For each offset $r$, group the intervals $W_{s,r}$ according to $s$ modulo
$r+1$.  Intervals in the same group are vertex-disjoint.  This gives at
most $\sum_{r=1}^{48}(r+1)=1224$ groups, so the bound on the sum of clipped
energies in \eqref{eq:disjoint-clipped-energy-sum} applies within each group.

\subsection{Reducing the Gaussian-width bound to endpoint bilinear forms}
\label{sec:scale-assembly}

We reduce the Gaussian-width estimate to a uniform bound for bilinear forms
on the endpoint blocks $I_W,J_W$.  The target is the local estimate
$|Z_{K,W}(x)|\le C\mathcal A_WE_{K,W}(x)$ from
\eqref{eq:tt-local-accounting-plan}.  First, the energy on $W$ constrains
the endpoint vectors of each clipped potential.  These constraints define
the normalized supremum $\mathcal A_W$.  We then state its expected bound
and sum the local contributions to prove \cref{thm:scale-response}.

\subsubsection{Energy constraints on the endpoint vectors}
\label{sec:local-completion}

Arcs within an endpoint block constrain ordered differences of its vector.
Paths through the intervening vertices impose an additional quadratic
constraint.  We express both constraints below; they will let us normalize
each bilinear contribution by the clipped energy on $W$.

For a nonnegative vector $z\in\mathbb R^m$ and $s\ge1$, define
\[
    V_m(z):=\sum_{i<j}(z_i-z_j)_+^2,
    \qquad
    \Psi_s(z):=\inf_{\gamma\ge0}
       \left\{\sum_i(z_i-\gamma)_+^2+s\gamma^2\right\}.
\]
For $D\ge1$, put
\begin{equation}
    \cE_{\to,s,D}(z):=V_m(z)+D\Psi_s(z),
    \qquad
    \cE_{\leftarrow,s,D}(z):=V_m^{\rm rev}(z)+D\Psi_s(z),
    \label{eq:ordered-vector-energy}
\end{equation}
where $V_m^{\rm rev}(z)=\sum_{i<j}(z_j-z_i)_+^2$.
In the application, $m$ is the endpoint-block size and $s$ is the support
size on the opposite block.  The term $D\Psi_s$ accounts for paths through
the $\Theta(D)$ intervening vertices, with $\gamma$ measuring a deviation
of an intervening potential value from the split point.

Fix $W\in\mathcal W_D$ and a node $K$ of the value tree for $x$.
For an internal node, let $t_K$ be its split point; for a leaf, let $t_K$
be the midpoint of its value gap.  Using increasing vertex order on each
endpoint block, put
\begin{equation}
    \alpha_i=(x_i^K-t_K)_+\quad(i\in I_W),
    \qquad
    \beta_j=(t_K-x_j^K)_+\quad(j\in J_W),
    \label{eq:endpoint-vectors}
\end{equation}
and write $a=|\supp\alpha|$, $b=|\supp\beta|$.
For an internal node these vectors agree with
$\alpha_K(x_i)$ and $\beta_K(x_j)$ in the squared-difference decomposition
of \cref{lem:value-tree-square}.

\begin{lemma}[Energy constraints on the endpoint vectors]
\label{lem:local-completion}
If $a,b\ge1$, then
\[
    E_{K,W}(x)\ge
    c\bigl[\cE_{\to,b,D}(\alpha)+
           \cE_{\leftarrow,a,D}(\beta)\bigr]
\]
for a universal constant $c>0$.
\end{lemma}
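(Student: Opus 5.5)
The plan is to bound each of the four terms on the right separately by a constant multiple of $E_{K,W}(x)=Q_{\TT_n[W]}^+(x^K)$, using only arcs of $\TT_n[W]$, and then add the four bounds. We may double-count arcs across the four estimates, since each estimate individually is at most $E_{K,W}(x)$. Adding them gives the claim with $c$ of the form $c'/4$, where $c'$ is the constant from $|M_W|\ge c'D$. Throughout we use two facts. First, $\TT_n[W]$ contains every arc $u\to v$ with $u<v$ in $W$. Second, $I_W$, $M_W$, $J_W$ occur in this order.

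\emph{Internal terms.} The map $s\mapsto(s-t_K)_+$ is nondecreasing and $1$-Lipschitz. Hence for $i<i'$ in $I_W$ we have $(\alpha_i-\alpha_{i'})_+\le(x_i^K-x_{i'}^K)_+$. Summing over the arcs inside $I_W$ gives $V_m(\alpha)\le E_{K,W}(x)$. Similarly, $s\mapsto(t_K-s)_+$ is nonincreasing and $1$-Lipschitz, so for $j<j'$ in $J_W$ we have $(\beta_{j'}-\beta_j)_+\le(x_j^K-x_{j'}^K)_+$. This gives $V^{\rm rev}(\beta)\le E_{K,W}(x)$.

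\emph{Path terms.} For each middle vertex $v\in M_W$ we use the two-arc paths $i\to v\to j$ with $i\in I_W$ and $j\in J_W$. Put $\gamma_v:=(x_v^K-t_K)_+\ge0$. We bound the two halves of each path separately.
\begin{itemize}
\item For $i\in I_W$, monotonicity gives $(x_i^K-x_v^K)_+\ge(\alpha_i-\gamma_v)_+$.
\item For $j\in\supp\beta$, we have $x_j^K\le t_K-\beta_j<t_K$. If $\gamma_v>0$, then $x_v^K=t_K+\gamma_v$, so $(x_v^K-x_j^K)_+\ge\gamma_v+\beta_j\ge\gamma_v$. If $\gamma_v=0$, this bound is trivial.
\end{itemize}
Summing the first bound over $i\in I_W$ and the second over the $b$ vertices of $\supp\beta$, the arcs incident to $v$ contribute at least
\[
\sum_i(\alpha_i-\gamma_v)_+^2+b\gamma_v^2\ \ge\ \Psi_b(\alpha).
\]
Distinct $v$ use disjoint arc sets, so summing over $v\in M_W$ gives $|M_W|\,\Psi_b(\alpha)\le E_{K,W}(x)$. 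Since $|M_W|\ge c'D$ by the construction in \cref{sec:topological-windows}, this yields $D\Psi_b(\alpha)\le E_{K,W}(x)/c'$.

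The symmetric argument with $\delta_v:=(t_K-x_v^K)_+$ handles $\beta$. For $j\in J_W$, the arcs $v\to j$ give $(x_v^K-x_j^K)_+\ge(\beta_j-\delta_v)_+$. For $i\in\supp\alpha$, the arcs $i\to v$ give $(x_i^K-x_v^K)_+\ge\alpha_i+\delta_v\ge\delta_v$. Arguing as before, this yields $D\Psi_a(\beta)\le E_{K,W}(x)/c'$.

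\emph{Role of the hypothesis.} The assumption $a,b\ge1$ is used only to make these supports nonempty, so that the $b\gamma_v^2$ and $a\delta_v^2$ terms are genuinely present. The only place where a constant depends on the geometry is the lower bound $|M_W|=\Omega(D)$, and I expect that to be the single point needing care. It was already established in \cref{sec:topological-windows}, since for any arc in $\mathcal S_W$ the quantity $|M_W|$ differs from its length $j-i\ge D$ by at most $D/4$.
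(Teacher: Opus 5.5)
Your proof is correct and follows essentially the same argument as the paper: the arcs inside $I_W$ and $J_W$ give the ordered-difference terms, and the two-arc paths $i\to v\to j$ through each of the $\Theta(D)$ middle vertices, with the deviation of $x_v^K$ from $t_K$ serving as the threshold $\gamma$, give the $D\Psi$ terms. The only difference is bookkeeping. The paper extracts $\Psi_b(\alpha)+\Psi_a(\beta)$ from a single use of the arcs at each middle vertex, by splitting on the sign of $x_k^K-t_K$ and using $(\gamma+\beta_j)^2\ge\gamma^2+\beta_j^2$ together with $\Psi_a(\beta)\le\|\beta\|_2^2$. You instead bound the two $\Psi$ terms separately and count those arcs twice, which costs only a constant factor.
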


\begin{proof}
Arcs within $I_W$ and $J_W$ contribute at least $V_{|I_W|}(\alpha)$ and
$V_{|J_W|}^{\rm rev}(\beta)$, respectively.
For $k\in M_W$ with $x_k^K\ge t_K$, put $\gamma=x_k^K-t_K$.
The arcs from $I_W$ to $k$ and from $k$ to $J_W$ contribute at least
\[
    \sum_i(\alpha_i-\gamma)_+^2+b\gamma^2+\|\beta\|_2^2
    \ge \Psi_b(\alpha)+\Psi_a(\beta),
\]
where $\Psi_a(\beta)\le\|\beta\|_2^2$ follows by choosing threshold zero.
If $x_k^K<t_K$, the same argument with the roles of $\alpha,\beta$
interchanged gives this lower bound.  Summing over the $\Theta(D)$
vertices of $M_W$ proves the assertion.  This argument applies to both
internal nodes and leaves.
\end{proof}

We will use the following consequence of the threshold constraint to
bound the norm of whichever endpoint vector has smaller support.

\begin{lemma}[Norm bound from the threshold constraint]
\label{lem:completion-order}
If a nonnegative vector $z$ has at most $s$ nonzero coordinates, where
$s\ge1$, then
$\Psi_s(z)\ge\frac12\|z\|_2^2$.
\end{lemma}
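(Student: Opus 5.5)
Fix $\gamma\ge0$ and write $S=\supp z$, with $|S|\le s$. The plan is to show that the objective in the definition of $\Psi_s$ is already at least $\frac12\|z\|_2^2$ for this single $\gamma$; taking the infimum over $\gamma$ then gives the lemma. Coordinates outside $S$ add nothing to $\|z\|_2^2$ and only nonnegative terms to the objective, so we may restrict to $S$.

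The key step is a pointwise inequality for each $i\in S$:
\[
    z_i^2\le 2(z_i-\gamma)_+^2+2\gamma^2 .
\]
If $z_i\le\gamma$, then $z_i^2\le\gamma^2$. If $z_i>\gamma$, the bound is the elementary $(u+v)^2\le2u^2+2v^2$ with $u=z_i-\gamma$ and $v=\gamma$.

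Summing over $i\in S$ gives
\[
    \|z\|_2^2\le2\sum_{i}(z_i-\gamma)_+^2+2|S|\gamma^2\le2\Bigl[\sum_i(z_i-\gamma)_+^2+s\gamma^2\Bigr],
\]
where the second inequality uses $|S|\le s$. This is exactly the required bound for the fixed $\gamma$.

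No step here is a real obstacle; the only point to get right is that the penalty $s\gamma^2$ covers the at most $s$ supported coordinates.
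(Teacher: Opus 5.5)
Your proof is correct and is essentially the paper's argument. The paper writes $z$ as the part above $\gamma$ plus at most $\gamma$ on each of the at most $s$ support coordinates, applies the triangle inequality $\|z\|_2\le\|(z-\gamma\one)_+\|_2+\sqrt{s}\,\gamma$, squares with $(u+v)^2\le2(u^2+v^2)$, and minimizes over $\gamma$. You apply the same splitting and the same elementary inequality coordinatewise before summing over the support.
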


\begin{proof}
For every $\gamma\ge0$,
$\|z\|_2\le\|(z-\gamma\one)_+\|_2+\sqrt{s}\gamma$.
Square this inequality, use $(u+v)^2\le2(u^2+v^2)$, and minimize over
$\gamma$.
\end{proof}

In particular, if $a\le b$ and $\cE_{\to,b,D}(\alpha)\le1$, then
\begin{equation}
    \|\alpha\|_2\le\sqrt{2/D},
    \qquad
    \|\alpha\|_1\le\sqrt{2a/D}.
    \label{eq:smaller-support-norms}
\end{equation}
The case $b<a$ gives the analogous bounds for $\beta$.
These estimates control the vector with smaller support; they need not
hold for the other vector.  The averaging argument in
\cref{lem:one-sided-averaging} will use the ordered differences and threshold
constraint of that other vector.

\subsubsection{A uniform bound for the endpoint bilinear forms}
\label{sec:local-theorem}

Let $p=\rho/D$ be the inclusion probability, and use the disagreement
indicators $\zeta_{ij}$ and Gaussian variables $g_{ij}$ from
\eqref{eq:tt-distance-gaussian-width}.  The $\zeta_{ij}$ are independent
Bernoulli$(2p(1-p))$ variables, and the $g_{ij}$ are independent standard
Gaussian variables, independent of $\zeta$.
For each interval $W\in\mathcal W_D$, let $X^W$ be the matrix on
$I_W\times J_W$ with entries $\zeta_{ij}g_{ij}/p$ on $\mathcal S_W$
and zero elsewhere.
In the supremum below, $a=|\supp\alpha|$ and $b=|\supp\beta|$:
\begin{equation}
    \mathcal A_W:=
    \sup_{\substack{
       \alpha\in\mathbb R_{\ge0}^{I_W},\
       \beta\in\mathbb R_{\ge0}^{J_W}\\
       a,b\ge1\\
       \cE_{\to,b,D}(\alpha)\le1,\
       \cE_{\leftarrow,a,D}(\beta)\le1}}
       |\alpha^\top X^W\beta|.
    \label{eq:endpoint-bilinear-supremum}
\end{equation}
Thus $\mathcal A_W$ optimizes over both vectors, including their support
sets, after the matrix is realized.  The next theorem bounds the expected
maximum over the intervals as well.

\begin{theorem}[Uniform bound for the endpoint bilinear forms]
\label{thm:local-quenched}
There are universal constants $B_1,C_1,\kappa_1>0$ such that, for all
integers $32\le D<n$ and all
$B_1\log^{3/2}(2n)\le\rho\le D$,
\[
    \mathbb E\max_{W\in\mathcal W_D}\mathcal A_W
    \le C_1\log^{\kappa_1}(2n)\rho^{-1/3}.
\]
\end{theorem}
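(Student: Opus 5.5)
By the symmetry between $(\alpha,X,\beta)$ and $(\beta,X^\top,\alpha)$ with the vertex order reversed, I will treat the supremum only over pairs with $a\le b$. In that case \eqref{eq:smaller-support-norms} gives $\|\alpha\|_2\le\sqrt{2/D}$ and $\|\alpha\|_1\le\sqrt{2a/D}$. I write the bilinear form as $h^\top\beta$ with $h=(X^W)^\top\alpha$. The only information available on $\beta$ is $V^{\rm rev}(\beta)\le1$, which controls the increases of $\beta$ along the vertex order, together with $D\Psi_a(\beta)\le1$.

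\textbf{Deterministic step.} I split $J_W$ into $R$ consecutive subblocks $J_1,\dots,J_R$, each of size $\Theta(D/R)$, and aim for a one-sided averaging inequality. In it, each negative coefficient of $h$ in a subblock $J_t$ is paired with positive mass in a later subblock $J_{t+2}$, through the average $\bar\beta$ on $J_{t+1}$. Using the identity from the overview,
\[
u\beta_j-v\beta_i=u(\beta_j-\bar\beta)+v(\bar\beta-\beta_i)+(u-v)\bar\beta .
\]
The controlled differences are bounded by Cauchy--Schwarz against the increases of $\beta$. This gives a term of order $\|h\|_2\sqrt{R/D}$ per pair of subblocks, using $V^{\rm rev}(\beta)\le 1$. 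The imbalance term is bounded by $\sum_t|\sum_{j\in J_t}h_j|\cdot\bar\beta_{t+1}$. The averages $\bar\beta$ are bounded using $\Psi_a(\beta)\le 1/D$: outside a threshold $\gamma\lesssim(aD)^{-1/2}$, the excess $(\beta-\gamma)_+$ has $\ell_2$ norm at most $D^{-1/2}$. A constant number of boundary subblocks have no comparison partner; there I fall back on the crude bound $\|h|_{J_t}\|_2\|\beta|_{J_t}\|_2$, and these terms carry the $1/R$ loss. Applying the same argument to $-h$ bounds $|h^\top\beta|$, so that $\beta$ disappears from the problem. What remains is a supremum over $\alpha$ alone of three quantities built from $h$:
\begin{enumerate}[label=(\roman*)]
\item $\|h\|_2$;
\item the subblock sums $|\sum_{j\in J_t}h_j|$;
\item the restricted norms $\|h|_{J_t}\|_2$.
\end{enumerate}

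\textbf{Probabilistic step.} I eliminate $\alpha$ by decomposing it into $O(\log n)$ dyadic level sets in both value and support size, so that each piece is comparable to a scaled indicator of a set $S\subseteq I_W$ with $|S|\le a$. I then need bounds for $\|X_{S,\cdot}^\top\mathbf 1_S\|_2$ and for the sums $|\mathbf 1_S^\top X\mathbf 1_{J_t}|$, uniformly over such $S$ and over $W$. Conditionally on $\zeta$, these are Gaussian quantities with variance $\sum\zeta/p^2$. Expected row and column degrees are $\zeta$-counts of order $pD=\rho$, so the relevant variances are about $|S|\,|J_t|\,p/p^2=|S|D^2/(R\rho)$. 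Entrywise, a union bound over supports costs $e^{O(|S|\log n)}$. Combining these with $\|\alpha\|_1\le\sqrt{2a/D}$, the scales cancel and I expect a contribution of $\widetilde O(\sqrt{R/\rho})$ from the comparison and fluctuation terms. The boundary terms need operator-norm bounds for sparse Bernoulli-Gaussian matrices on $\Theta(D/R)$ columns, which is where the appendix estimates come in. The hypothesis $\rho\ge B_1\log^{3/2}(2n)$ is needed so that the degree concentration and the union bounds over supports and over the $O(n)$ intervals $W$ hold with only polylogarithmic loss. Altogether,
\[
\mathbb E\max_W\mathcal A_W\lesssim\log^{O(1)}(2n)\bigl(\sqrt{R/\rho}+1/R\bigr),
\]
and the choice $R\asymp\rho^{1/3}$ gives the claimed bound.

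\textbf{Main obstacle.} I expect the hardest part to be the boundary and large-support regime. When $\beta$ has large support, the vector $\beta$ is not normalized in $\ell_2$, and the $1/R$ fraction must come only from $\Psi_a(\beta)$ combined with the smallness of $\alpha$. Making this rigorous uniformly over $\alpha$, without losing a factor $\sqrt a$ in the union bound, is delicate. My first attempt would be to bound the boundary term by $\|\alpha\|_2\,\|X_{\cdot,J_t}\|_{\rm op}\,\|\beta|_{J_t}\|_2$. Here $\|\beta|_{J_t}\|_2\lesssim \gamma\sqrt{D/R}+D^{-1/2}$, and the norm $\|X_{\cdot,J_t}\|_{\rm op}$ is controlled by the sparse Gaussian matrix estimates in the appendix.
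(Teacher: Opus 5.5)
\textbf{The imbalance term is misidentified, and the resulting deterministic inequality is false.} Summing your three-block identity over subblocks leaves $\sum_t\bar\beta_{t+1}\bigl(P_{t+2}(h)-N_t(h)\bigr)$, where $P_s(h)=\sum_{j\in J_s}(h_j)_+$ and $N_s(h)=\sum_{j\in J_s}(-h_j)_+$. This compares positive mass on one subblock with negative mass on a \emph{different} subblock. It is not controlled by the signed sums $\sum_{j\in J_t}h_j=P_t(h)-N_t(h)$. A within-block comparison would need $|\beta_j-\bar\beta|$ on a single block, and the decreases of $\beta$ are uncontrolled.

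Concretely, let $\beta$ decrease linearly from $\gamma_0=(aD)^{-1/2}$ to $0$ across $J_W$. This is feasible, since $V^{\rm rev}(\beta)=0$ and $D\Psi_a(\beta)\le Da\gamma_0^2=1$. Let $h=u$ on the first half and $h=-u$ on the second half of each of $J_3,\dots,J_{R-3}$, and $h=0$ elsewhere. Every subblock sum vanishes, $h$ vanishes on the boundary subblocks, and $(\beta-\gamma_0)_+=0$, so your remaining term is $\|h\|_2\sqrt{R/D}\asymp u\sqrt R$. Yet $h^\top\beta\asymp uL\gamma_0\asymp u\sqrt{D/a}/R$, which is far larger for $a=1$ and $D\gg R^3\asymp\rho$.

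So the probabilistic step targets the wrong object. What must be bounded uniformly over $\alpha$ is the nonlinear quantity $\sup_{u\ge0,\|u\|_1\le1}\bigl|\sum_{j\in T_r}(\langle X_{\cdot j},u\rangle)_+-\sum_{j\in T_{r+2}}(-\langle X_{\cdot j},u\rangle)_+\bigr|$. Its smallness comes only from the columns on the two subblocks being identically distributed. The paper therefore does three things:
\begin{itemize}
\item it fills the entries of $X^W$ outside $\mathcal S_W$ with independent copies of $\zeta g/p$, justified by conditional Jensen; the original zero pattern breaks identical distribution;
\item it exchanges paired columns to insert Rademacher signs;
\item it applies the contraction principle for $t\mapsto(t)_+$, which reduces the quantity to $\max_i|\sum_{j\in T_r}X_{ij}|$.
\end{itemize}

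\textbf{The boundary bound loses a polynomial factor.} Your $\ell_2$--$\ell_2$ bound fails because $h=X^\top\alpha$ can be sparse. A row of $X$ has only about $Lp$ retained entries on a subblock, each of size of order $1/p$, so Cauchy--Schwarz loses roughly $p^{-1/2}=\sqrt{D/\rho}$.

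Take $a=1$, $\alpha=D^{-1/2}\mathbf 1_{i_0}$ with $i_0$ the last vertex of $I_W$, and $\beta=D^{-1/2}\mathbf 1_{J_W}$; both are feasible. On a boundary subblock where row $i_0$ has $\asymp Lp$ retained entries, $\|h|_{J_t}\|_2\|\beta|_{J_t}\|_2\asymp\sqrt{L/\rho}\cdot\sqrt{L/D}\asymp\sqrt{D/\rho}/R$. This is polynomially large when $D\gg\rho$, and your operator-norm version is at least as large. The true contribution, by contrast, is $O(1/R)$ or smaller.

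The fix is to charge unpaired subblocks as threshold times positive mass, $\gamma P_t(h)+\|h\|_2\|(\beta-\gamma\mathbf 1)_+\|_2$, using $P_t(X^\top\alpha)\le\|\alpha\|_1\max_i\sum_{j\in T_t}|X_{ij}|\lesssim\|\alpha\|_1L\sqrt q$. Writing both the imbalances and the boundary terms this way gives $U(X^\top\alpha)\le\|\alpha\|_1F(X)$, with $F$ independent of $\alpha$. The bound $\|\alpha\|_1/\sqrt{aD}\le\sqrt2/D$ then cancels the support size. This removes the obstacle you anticipate, and no union bound over supports or dyadic level sets is needed.
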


We prove \cref{thm:local-quenched} in \cref{sec:prob-tools}.

\subsubsection{Summing the clipped contributions}

By homogeneity, the bilinear supremum $\mathcal A_W$ bounds the contribution
of each tree node by a constant times $\mathcal A_WE_{K,W}(x)$.
The squared-difference decomposition in \cref{lem:value-tree-square} and
the clipped-energy sum in \eqref{eq:disjoint-clipped-energy-sum} now give
the bound for the entire distance range.

\begin{proof}[Proof of \cref{thm:scale-response}]
If $D\ge n$, then $E_D$ is empty.  Otherwise, choose $A_0\ge B_1$ large
enough that $\rho\ge A_0\log^{3/2}(2n)$ and $D\ge\rho$ imply $D\ge32$.
Use the vertex intervals $\mathcal W_D$ constructed in
\cref{sec:topological-windows}.

Fix the random matrices $X^W$ and a potential $x$ of positive energy.
For each tree node $K$ and interval $W$, use the vectors $\alpha,\beta$
in \eqref{eq:endpoint-vectors} and set
\[
    Z_{K,W}(x):=
    \begin{cases}
       2\alpha^\top X^W\beta,&K\text{ is internal},\\
       4\alpha^\top X^W\beta,&K\text{ is a leaf}.
    \end{cases}
\]
The internal-node coefficient is from \cref{lem:value-tree-square}.
For a leaf gap of length $\Delta$, the nonzero coordinates of $\alpha$
and $\beta$ equal $\Delta/2$, so $4\alpha_i\beta_j=\Delta^2$ exactly when
the endpoint values straddle that gap.
Since the sets $\mathcal S_W$ partition $E_D$, the decomposition gives
\begin{equation}
    \sum_{W\in\mathcal W_D}\sum_K Z_{K,W}(x)
    =\sum_{(i,j)\in E_D}\frac{\zeta_{ij}g_{ij}}p(x_i-x_j)_+^2.
    \label{eq:interval-sum-is-gaussian-process}
\end{equation}

If either vector has empty support, then $Z_{K,W}(x)=0$.
Otherwise, put $u=\cE_{\to,b,D}(\alpha)$ and
$v=\cE_{\leftarrow,a,D}(\beta)$, where $a,b$ are their support sizes.
These quantities are positive.  Normalizing the two vectors by
$\sqrt u,\sqrt v$ in \eqref{eq:endpoint-bilinear-supremum} and using
\cref{lem:local-completion} gives
\[
    |Z_{K,W}(x)|
    \le4\mathcal A_W\sqrt{uv}
    \le2\mathcal A_W(u+v)
    \le C\mathcal A_WE_{K,W}(x).
\]
The intervals form at most $1224$ families of pairwise disjoint sets.
Summing the clipped energies within each family by
\eqref{eq:disjoint-clipped-energy-sum}, we obtain
\[
    \sum_W\sum_K|Z_{K,W}(x)|
    \le C\log(2n)\left(\max_W\mathcal A_W\right)Q_{\TT_n}^+(x).
\]
This realizes the two bounds in \eqref{eq:tt-local-accounting-plan}.
The value tree may depend on $x$, but $\max_W\mathcal A_W$ does not.
Dividing by $Q_{\TT_n}^+(x)$, taking the supremum over potentials of
positive energy, and then taking expectation therefore gives
\begin{equation}
    \mathfrak G_{n,D}(\rho)
    \le C\log(2n)\,\mathbb E\max_W\mathcal A_W.
    \label{eq:tt-width-reduction}
\end{equation}
\Cref{thm:local-quenched} proves the desired bound after
choosing $\kappa_0\ge\kappa_1+1$ and $B_0$ sufficiently large.
\end{proof}

\subsection{Bounding the endpoint bilinear forms}
\label{sec:prob-tools}

It remains to bound $\mathbb E\max_W\mathcal A_W$ in
\eqref{eq:tt-width-reduction}.  We will divide an endpoint vector into $R$
consecutive subblocks of length $L\asymp D/R$.  Keeping $R$ free will give
\begin{equation}
    \|\mathcal A_W\|_{L_q}
    \le C\sqrt q\left(\sqrt{R/\rho}+1/R\right),
    \qquad q:=\lceil\log(2n)\rceil,
    \label{eq:bilinear-R-tradeoff}
\end{equation}
for integers $R\ge4$ satisfying \eqref{eq:subblock-parameter-conditions}
below.  Here
$\|Y\|_{L_q}:=(\mathbb E|Y|^q)^{1/q}$; this logarithmic moment will also
control the maximum over $W$.  The first term comes from comparing
coordinates across subblocks and from fluctuations in their coefficient
totals.  The second comes from the constant number of subblocks left
without a comparison partner.  Balancing them at the end will give
$R\asymp\rho^{1/3}$ and the desired $\rho^{-1/3}$ bound.

We first prove a deterministic averaging inequality.  Applying the
endpoint constraints will then remove the vectors and their support
sizes, leaving three kinds of random quantities to estimate.

\subsubsection{A deterministic averaging inequality}
\label{sec:one-sided-averaging}

In the application, suppose that $\alpha$ has the smaller support, and
reverse $\beta$ and the matching columns of the endpoint matrix.  Write
$X$ for the matrix in this column order and $z\in\mathbb R^k$ for the
reversed vector.  In this order the energy controls
$V_k(z)=\sum_{i<j}(z_i-z_j)_+^2$: increases of $\beta$ in the original
vertex order become decreases of $z$.  For fixed $\alpha$, the bilinear
form is $h^\top z$, where $h=X^\top\alpha$ is its coefficient vector.
The inequality below treats arbitrary $h$ and uses only these controlled
decreases of $z$, not a bound on its total variation.

Let $R\ge4$ and partition $[k]$ into consecutive subblocks
$T_1,\ldots,T_R$, the first $R-1$ of size $L$ and the last of size between
$L$ and $2L$, where $1\le L\le D$.  Fix $h\in\mathbb R^k$ and $z\ge0$.
For $r\le R-3$, use the mean
$c_r:=L^{-1}\sum_{t\in T_{r+1}}z_t$ on the middle subblock to compare a
positive coefficient $h_i=u\ge0$ in $T_r$ with a negative coefficient
$h_j=-v\le0$ in $T_{r+2}$:
\begin{equation}
    uz_i-vz_j
    =\underbrace{u(z_i-c_r)+v(c_r-z_j)}_{\text{controlled differences}}
      +\underbrace{(u-v)c_r}_{\text{coefficient imbalance}}.
    \label{eq:three-block-comparison}
\end{equation}
Both deviations in the first term are bounded above by averages of
forward decreases, as illustrated in \cref{fig:three-block-averaging}.
The last term records the difference between the coefficient magnitudes.

To collect these comparisons over whole subblocks, define the positive
and negative coefficient totals
\[
    P_r(h):=\sum_{j\in T_r}(h_j)_+,
    \qquad N_r(h):=\sum_{j\in T_r}(-h_j)_+.
\]
\begin{figure}[htbp]
  \centering
  \begin{tikzpicture}[x=1cm,y=1cm,font=\small,>=stealth]
  \draw[->,thick] (0,2.0) -- (14.6,2.0)
    node[midway,above] {reversed vertex order on $J_W$};
  \draw[draw=blue!65!black,fill=blue!4,rounded corners=2pt]
    (0,0) rectangle (4.2,1.4);
  \draw[draw=black!60,fill=black!3,rounded corners=2pt]
    (5.2,0) rectangle (9.4,1.4);
  \draw[draw=blue!65!black,fill=blue!4,rounded corners=2pt]
    (10.4,0) rectangle (14.6,1.4);
  \node at (2.1,1.1) {$T_r$};
  \node at (2.1,0.65) {positive coefficient};
  \node at (2.1,0.23) {$h_i=u$, value $z_i$};
  \node at (7.3,1.1) {$T_{r+1}$};
  \node at (7.3,0.65) {$c_r=L^{-1}\sum_{t\in T_{r+1}}z_t$};
  \node at (7.3,0.23) {comparison value};
  \node at (12.5,1.1) {$T_{r+2}$};
  \node at (12.5,0.65) {negative coefficient};
  \node at (12.5,0.23) {$h_j=-v$, value $z_j$};
  \draw[->,thick,blue!65!black] (2.1,-0.4) -- (7.3,-0.4)
    node[midway,below,align=center]
      {average of $(z_i-z_t)_+$\\bounds $z_i-c_r$};
  \draw[->,thick,blue!65!black] (7.3,-0.4) -- (12.5,-0.4)
    node[midway,below,align=center]
      {average of $(z_t-z_j)_+$\\bounds $c_r-z_j$};
  \node[text=orange!65!black] at (7.3,-1.55)
    {remaining coefficient imbalance: $c_r\bigl(P_r(h)-N_{r+2}(h)\bigr)$};
\end{tikzpicture}
  \caption{Comparison through a middle subblock, in the reversed vertex
  order.  The signs shown are those of the coefficients $h_i,h_j$;
  the vector $z$ is nonnegative and need not be monotone.  Averaging the
  indicated decreases over $t\in T_{r+1}$ bounds the two deviations in
  \eqref{eq:three-block-comparison}.  The coefficient imbalance remains
  to be estimated.}
  \label{fig:three-block-averaging}
\end{figure}

Replacing each coordinate $z_j$ on these two subblocks by $z_j-c_r$
leaves $c_r(P_r(h)-N_{r+2}(h))$ in addition to the difference terms.
We collect the
absolute differences of these totals and the unpaired positive terms in
\[
    U(h):=\sum_{r=1}^{R-3}|P_r(h)-N_{r+2}(h)|
           +P_{R-2}(h)+P_{R-1}(h)+P_R(h).
\]
The last three subblocks occur for a specific reason.  The two-subblock
shift leaves the positive terms on $T_{R-1},T_R$ without partners.  We
also omit the comparison from $T_{R-2}$ to $T_R$, because $T_R$ may be
larger than $L$; the probabilistic argument will compare equal-sized
subblocks.  Any unused negative terms can be discarded in an upper bound
since $z\ge0$.

\begin{lemma}[Averaging under a one-sided energy constraint]
\label{lem:one-sided-averaging}
For the partition above, every $a\ge1$, every $h\in\mathbb R^k$, and
every $z\ge0$ satisfying $V_k(z)+D\Psi_a(z)\le1$ obey
\begin{equation}
    h^\top z\le C\left(\frac{\|h\|_2}{\sqrt L}
                       +\frac{U(h)}{\sqrt{aD}}\right).
    \label{eq:block-average-bound}
\end{equation}
\end{lemma}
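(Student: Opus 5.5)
The plan is to first strip off the part of $z$ above a threshold, so that the remaining vector is uniformly bounded, and only then carry out the three-block comparison \eqref{eq:three-block-comparison}. Since $D\Psi_a(z)\le1$, pick $\gamma\ge0$ attaining (or nearly attaining) the infimum in $\Psi_a(z)$. Then
\[
\sum_i(z_i-\gamma)_+^2\le 1/D
\qquad\text{and}\qquad
a\gamma^2\le 1/D,
\]
so $\gamma\le 1/\sqrt{aD}$. Write $z=y+w$ with $y:=\min\{z,\gamma\one\}$ and $w:=(z-\gamma\one)_+$.

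The $w$-part is handled by Cauchy--Schwarz. We have
\[
h^\top w\le\|h\|_2\|w\|_2\le\|h\|_2/\sqrt D\le\|h\|_2/\sqrt L,
\]
using $L\le D$. For the $y$-part, note that $t\mapsto\min\{t,\gamma\}$ is nondecreasing and $1$-Lipschitz. Hence $(y_i-y_j)_+\le(z_i-z_j)_+$ for all $i<j$, so $V_k(y)\le V_k(z)\le1$. We also have $0\le y\le\gamma$. It therefore suffices to show that every $y$ with $0\le y\le\gamma$ and $V_k(y)\le1$ satisfies
\[
h^\top y\le C\|h\|_2/\sqrt L+\gamma U(h).
\]

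To prove this reduced bound, split $h^\top y$ into positive and negative parts over the subblocks. Discard the negative terms on $T_1,T_2$, which is allowed since $y\ge0$. Keep the positive terms on $T_{R-2},T_{R-1},T_R$ unpaired; each is at most $\gamma P_r(h)$. For $r\le R-3$, pair the positive part on $T_r$ with the negative part on $T_{r+2}$. Let $c_r$ be the mean of $y$ on $T_{r+1}$; this subblock has size exactly $L$ because $r+1\le R-2$. Summing \eqref{eq:three-block-comparison} over $i\in T_r$ and $j\in T_{r+2}$ gives three kinds of terms:
\begin{itemize}
\item $\sum_{i\in T_r}(h_i)_+(y_i-c_r)$,
\item $\sum_{j\in T_{r+2}}(h_j)_-(c_r-y_j)$,
\item the imbalance term $c_r(P_r(h)-N_{r+2}(h))\le\gamma|P_r(h)-N_{r+2}(h)|$.
\end{itemize}
Together with the unpaired terms, the imbalance terms sum to at most $\gamma U(h)\le U(h)/\sqrt{aD}$.

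For the deviation terms, every $t\in T_{r+1}$ satisfies $i<t$ for $i\in T_r$ and $t<j$ for $j\in T_{r+2}$. Averaging over $t$ and applying Cauchy--Schwarz gives
\[
y_i-c_r\le\Bigl(L^{-1}\sum_{t\in T_{r+1}}(y_i-y_t)_+^2\Bigr)^{1/2},
\qquad
c_r-y_j\le\Bigl(L^{-1}\sum_{t\in T_{r+1}}(y_t-y_j)_+^2\Bigr)^{1/2}.
\]
A second Cauchy--Schwarz over $i\in T_r$, and then over $r$, bounds the total of all deviation terms by
\[
\|h\|_2\Bigl(\tfrac{C}{L}\sum_{i<t}(y_i-y_t)_+^2\Bigr)^{1/2}.
\]
Here each ordered pair $(i,t)$ lies in two consecutive subblocks and so is used at most twice. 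Since $V_k(y)\le1$, this total is at most $C\|h\|_2/\sqrt L$.

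Combining the $w$-part with the reduced bound for $y$ gives \eqref{eq:block-average-bound}. The step I expect to need care is the truncation. Applying the comparison directly to $z$ does not work, because $c_r$ is then not bounded by $1/\sqrt{aD}$ and the imbalance term cannot be controlled by $U(h)$. Truncating at $\gamma$ resolves this, since it preserves the one-sided constraint $V_k$ and moves the excess into $w$, whose $\ell_2$ norm the threshold constraint controls. The remaining bookkeeping is the overlap count in the Cauchy--Schwarz step and the observation that only equal-sized middle blocks $T_{r+1}$ with $r+1\le R-2$ are ever used for averaging.
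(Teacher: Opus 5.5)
Your proof is correct, but it orders the two main steps differently from the paper.

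\textbf{The paper's route.} The paper applies the three-block comparison \eqref{eq:three-block-comparison} to $z$ itself. It then collects the imbalance terms $c_r(P_r(h)-N_{r+2}(h))$ and the unpaired positive terms into one nonnegative linear form $w^\top z$. Here $w$ equals the constant $|P_r(h)-N_{r+2}(h)|/L$ on $T_{r+1}$, plus $(h_j)_+$ on the last three subblocks. Only at this stage does the paper use the threshold, through
\[
w^\top z\le\|w\|_2\|(z-\gamma\one)_+\|_2+\gamma\|w\|_1 .
\]
It then needs $\|w\|_1=U(h)$ and $\|w\|_2\le C\|h\|_2$, the latter by Cauchy--Schwarz on each subblock. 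The argument finishes with a weighted Cauchy--Schwarz and minimization over $\gamma$.

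\textbf{Your route.} You split $z$ at the minimizing threshold before comparing. You observe that truncating by the nondecreasing $1$-Lipschitz map $t\mapsto\min\{t,\gamma\}$ does not increase $V_k$. This lets you bound each imbalance directly by $\gamma|P_r(h)-N_{r+2}(h)|$. The excess $(z-\gamma\one)_+$ is then handled by a single Cauchy--Schwarz against $h$.

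\textbf{What each buys.} Your version avoids estimating $\|w\|_2$ and the weighted Cauchy--Schwarz step. The paper's version avoids the truncation-monotonicity observation. Both use the same ingredients and give the same bound.

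\textbf{Two small points.} Your closing remark that comparing on $z$ directly ``does not work'' is inaccurate. The imbalance term is itself a nonnegative linear form in $z$, so it can be split at the threshold afterward, which is exactly what the paper does. Also, the negative terms on $T_R$ are unpaired and must be discarded along with those on $T_1,T_2$. This is harmless since $y\ge0$.
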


\begin{proof}
For $i\in T_r$ and $j\in T_{r+2}$, the middle average satisfies
\[
    z_i-c_r\le\frac1L\sum_{t\in T_{r+1}}(z_i-z_t)_+,
    \qquad
    c_r-z_j\le\frac1L\sum_{t\in T_{r+1}}(z_t-z_j)_+.
\]
Multiply these inequalities by $(h_i)_+$ and $(-h_j)_+$, respectively,
and sum over $r\le R-3$.  Cauchy--Schwarz bounds the difference terms by
$C\|h\|_2\sqrt{V_k(z)/L}$: each adjacent pair of subblocks occurs at most
twice, and each coefficient is repeated $L$ times with a factor $1/L$.
Bounding the coefficient imbalances by their absolute values and
discarding the unused negative terms gives
\begin{equation}
    h^\top z\le C\|h\|_2\sqrt{V_k(z)/L}+w^\top z,
    \label{eq:averaging-difference-remainder}
\end{equation}
where $w\ge0$ puts the constant $|P_r(h)-N_{r+2}(h)|/L$ on $T_{r+1}$
for each $r\le R-3$, and adds $(h_j)_+$ on the last three subblocks.
Thus $\|w\|_1=U(h)$ and $\|w\|_2\le C\|h\|_2$, by Cauchy--Schwarz
on each subblock.

The threshold term $\Psi_a(z)$ controls this remaining nonnegative
linear form.  For every $\gamma\ge0$,
\[
    w^\top z\le C\|h\|_2\|(z-\gamma\one)_+\|_2+\gamma U(h).
\]
Weighted Cauchy--Schwarz and minimization over $\gamma$ give
\[
    w^\top z\le
    C\left(\|h\|_2+\frac{U(h)}{\sqrt a}\right)\sqrt{\Psi_a(z)}
    \le C\left(\frac{\|h\|_2}{\sqrt D}
                   +\frac{U(h)}{\sqrt{aD}}\right).
\]
Combining this estimate with \eqref{eq:averaging-difference-remainder},
$V_k(z)\le1$, and $L\le D$ proves
\eqref{eq:block-average-bound}.
\end{proof}

\subsubsection{Eliminating the endpoint vectors and support sizes}
\label{sec:bilinear-deterministic-reduction}

We now apply \cref{lem:one-sided-averaging} to a bilinear form.  For this
step, keep the partition $T_1,\ldots,T_R$ above, let
$X\in\mathbb R^{m\times k}$ be any matrix, and let
$\alpha\ge0,z\ge0$ satisfy, for some integer $a\ge1$,
\begin{equation}
    \|\alpha\|_2\le\sqrt{2/D},\qquad
    \|\alpha\|_1\le\sqrt{2a/D},\qquad
    V_k(z)+D\Psi_a(z)\le1.
    \label{eq:bilinear-norm-constraints}
\end{equation}
For the normalized endpoint vectors, \eqref{eq:smaller-support-norms}
gives the first two bounds when $a=|\supp\alpha|\le|\supp\beta|$;
reversing $\beta$ turns its energy constraint into the third bound.
We will derive a bound involving only $X$, valid
simultaneously for every such pair and every $a$.

The first term of \eqref{eq:block-average-bound}, with
$h=\pm X^\top\alpha$, is bounded by
$\sqrt{2/D}\,\|X\|_{\rm op}/\sqrt L$.
To handle $U(h)$, define, for $r\le R-3$,
\[
    B_r(X):=\sup_{\substack{u\ge0\\\|u\|_1\le1}}
       \left|\sum_{j\in T_r}(\langle X_{\cdot j},u\rangle)_+
             -\sum_{j\in T_{r+2}}(-\langle X_{\cdot j},u\rangle)_+\right|,
\]
and, for $r=R-2,R-1,R$, define
\[
    S_r(X):=\max_i\sum_{j\in T_r}|X_{ij}|.
\]
The quantities $B_r$ bound the differences of coefficient totals
uniformly over nonnegative vectors of unit $\ell_1$ norm; the quantities
$S_r$ bound the unpaired terms.  Put
\[
    F(X):=\sum_{r=1}^{R-3}B_r(X)+\sum_{r=R-2}^R S_r(X).
\]
Homogeneity and the triangle inequality give
$U(X^\top\alpha)\le\|\alpha\|_1F(X)$.
The dependence on the support size cancels because
\[
    \frac{\|\alpha\|_1}{\sqrt{aD}}
    \le\frac{\sqrt{2a/D}}{\sqrt{aD}}=\frac{\sqrt2}{D}.
\]
Applying \eqref{eq:block-average-bound} to both signs of $X^\top\alpha$
therefore yields
\begin{equation}
    |\alpha^\top Xz|
    \le C\left[\frac{\|X\|_{\rm op}}{\sqrt{DL}}
               +\frac{F(X)+F(-X)}D\right].
    \label{eq:bilinear-matrix-bound}
\end{equation}
The right side contains neither endpoint vector nor $a$.  It remains to
bound the matrix norm, the differences $B_r$, and the boundary sums $S_r$.
No enumeration of support sets or sizes is needed.

\subsubsection{Moment bounds and the choice of the number of subblocks}
\label{sec:bilinear-moment-bounds}

We apply \eqref{eq:bilinear-matrix-bound} to the random endpoint matrices.
After filling their missing entries by a moment comparison, we will have
a matrix $X$ whose entries are independent copies of $\zeta g/p$.
For $L\asymp D/R$ and $Lp\gtrsim q$, the three estimates to be proved are
\[
\begin{array}{lll}
    \text{Contribution} & \text{Random quantity} & L_q\text{ bound}\\[3pt]
    \text{Comparison errors}
      & \|X\|_{\rm op}/\sqrt{DL} & C\sqrt{qR/\rho}\\[3pt]
    \text{Coefficient imbalances}
      & D^{-1}\sum_{r=1}^{R-3}B_r(X) & C\sqrt{qR/\rho}\\[3pt]
    \text{Unpaired subblocks}
      & D^{-1}\sum_{r=R-2}^{R}S_r(X) & C\sqrt q/R.
\end{array}
\]
The same estimates apply to $-X$.  Their sum will give
\eqref{eq:bilinear-R-tradeoff}.

We use two elementary moment estimates, proved in \cref{app:prob-tools}.
For any $t\ge1$, write $\|Y\|_{L_t}=(\mathbb E|Y|^t)^{1/t}$;
a subscript such as $L_t(g\mid\zeta)$ specifies which variables are
averaged and which are held fixed.

\begin{lemma}[Maximum of binomially dominated random variables]
\label{lem:binomial-max}
Let $M$ be a positive integer and $\mu\ge0$.
Let $Z_1,\ldots,Z_M$ be nonnegative random variables, each stochastically
dominated by a binomial random variable with mean at most $\mu$.
Then, for every $t\ge1$,
\[
    \left\|\max_{1\le r\le M}Z_r\right\|_{L_t}
    \le C\bigl(\mu+\log(2M)+t\bigr).
\]
\end{lemma}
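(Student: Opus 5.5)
Replace each $Z_r$ by a binomial variable that dominates it. Each such variable can be written as $Y_r\sim\operatorname{Bin}(N_r,p_r)$ with $N_rp_r\le\mu$. Stochastic domination preserves the distribution of each coordinate. The maximum, however, depends on the joint law, and the $Z_r$ may be dependent. The proof therefore works only with marginal tail bounds and a union bound, which never use independence. Write $Z^\ast:=\max_r Z_r$. The plan is to bound the tail $\Prob[Z^\ast>s]$ for $s\ge C(\mu+\log(2M))$ and then integrate to get the $L_t$ norm.

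The first step is a Chernoff tail for one binomial. For $Y\sim\operatorname{Bin}(N,p)$ with $Np\le\mu$ and every $\lambda>0$,
\[
  \mathbb E e^{\lambda Y}=(1+p(e^\lambda-1))^N\le\exp\bigl(\mu(e^\lambda-1)\bigr).
\]
Take $\lambda=1$ and use Markov's inequality. This gives $\Prob[Z_r\ge s]\le\exp(2\mu-s)$ for every $s\ge0$, since $e-1\le2$.

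The second step is the union bound:
\[
  \Prob[Z^\ast\ge s]\le M\exp(2\mu-s)=\exp\bigl(2\mu+\log M-s\bigr).
\]
Put $s_0:=2\mu+\log(2M)$. Then $\Prob[Z^\ast\ge s_0+u]\le e^{-u}$ for all $u\ge0$. In other words, $(Z^\ast-s_0)_+$ is stochastically dominated by a standard exponential random variable $\mathcal E$.

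The third step is the moment computation. Minkowski's inequality gives
\[
  \|Z^\ast\|_{L_t}\le s_0+\|(Z^\ast-s_0)_+\|_{L_t}\le s_0+\|\mathcal E\|_{L_t}=s_0+\Gamma(t+1)^{1/t}\le s_0+t,
\]
using $\Gamma(t+1)\le t^t$ for $t\ge1$. Altogether, $\|Z^\ast\|_{L_t}\le 2\mu+\log(2M)+t$, which is the claimed bound with $C=2$.

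There is no real obstacle. The only point needing care is that the $Z_r$ are not assumed independent or jointly dominated, so the argument must use marginal tails only. The union bound does exactly that. The Chernoff bound also has to hold uniformly in $N$ and $p$ given only $Np\le\mu$, and it does, because $1+x\le e^x$.
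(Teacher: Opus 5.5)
Your proof is correct and follows the same route as the paper. You bound the marginal tail of each $Z_r$ exponentially via stochastic domination, take a union bound over the $M$ variables (which, as you note, needs no independence), and integrate the resulting tail to control the $L_t$ norm. The differences are cosmetic: you use the Chernoff moment-generating-function bound at $\lambda=1$ where the paper invokes Bernstein's inequality, and you do the integration by dominating $(Z^\ast-s_0)_+$ with a standard exponential and applying Minkowski, which gives the explicit constant $C=2$.
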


\begin{lemma}[Gaussian matrix norm after Bernoulli restriction]
\label{lem:masked-gaussian-norm}
Let $m,k,D,n$ be positive integers with $m,k\le D\le n$, and put
$q=\lceil\log(2n)\rceil$.
For $0<\rho\le D$, put $p=\rho/D$ and let $X\in\mathbb R^{m\times k}$
have entries $X_{ij}=\zeta_{ij}g_{ij}/p$, where the $\zeta_{ij}$ are independent
Bernoulli$(2p(1-p))$ variables and the $g_{ij}$ are independent standard
Gaussian variables, independent of $\zeta$.
If $\rho\ge Cq$, then
\[
    \bigl\|\|X\|_{\rm op}\bigr\|_{L_q}\le C D\sqrt{q/\rho}.
\]
\end{lemma}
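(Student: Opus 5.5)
We would prove \cref{lem:masked-gaussian-norm} by conditioning on the Bernoulli mask $\zeta$ and applying a sharp Gaussian matrix-norm inequality to the conditionally Gaussian matrix. Then we would average the resulting mask-dependent quantities over $\zeta$ using \cref{lem:binomial-max}.

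Given $\zeta$, the matrix $pX$ has independent centered Gaussian entries with variances $\zeta_{ij}\in\{0,1\}$. We would use the Bandeira--van Handel bound for Gaussian matrices with inhomogeneous variances, in its $L_q$ form:
\[
    \bigl\|\|pX\|_{\rm op}\bigr\|_{L_q(g\mid\zeta)}
    \le C\Bigl(\max_i\sqrt{\textstyle\sum_j\zeta_{ij}}
        +\max_j\sqrt{\textstyle\sum_i\zeta_{ij}}
        +\sqrt{\log(m+k)}+\sqrt q\Bigr).
\]
The Gaussian concentration of the Lipschitz function $\|\cdot\|_{\rm op}$ adds only $C\sqrt q\max_{ij}\zeta_{ij}\le C\sqrt q$, so the dependence on $q$ is harmless. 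The right-hand side involves only the maximal row and column counts of the mask. It does not involve the full count $mk\cdot 2p(1-p)$, which is the point of using this inequality instead of a crude Frobenius bound.

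Next, we would take $L_q$ norms over $\zeta$. Each row count $\sum_j\zeta_{ij}$ is binomial with mean at most $2pk\le 2pD=2\rho$, and the same holds for the column counts. There are $M=m+k\le 2D\le 2n$ of them. \Cref{lem:binomial-max} with $t=q/2$ applied to the maximum of these counts gives
\[
    \bigl\|\max(\text{row and column counts})\bigr\|_{L_{q/2}}
    \le C(\rho+\log(4n)+q)\le C\rho,
\]
where the last step uses the hypothesis $\rho\ge Cq$ together with $\log(4n)\le Cq$. Taking square roots, the $L_q$ norm of the square root of this maximum is at most $C\sqrt\rho$. Combining with the conditional bound via $\|Y\|_{L_q}=\bigl\|\,\|Y\|_{L_q(g\mid\zeta)}\bigr\|_{L_q(\zeta)}$ yields
\[
    \bigl\|\|pX\|_{\rm op}\bigr\|_{L_q}\le C\bigl(\sqrt\rho+\sqrt{\log(2n)}+\sqrt q\bigr)\le C\sqrt\rho.
\]
Dividing by $p=\rho/D$ gives $\bigl\|\|X\|_{\rm op}\bigr\|_{L_q}\le CD/\sqrt\rho$, which is at most $CD\sqrt{q/\rho}$ since $q\ge1$. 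This proves the claim, with room to spare.

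The main obstacle is justifying the conditional Gaussian estimate with the correct maximal-row/column form. If we want to avoid citing Bandeira--van Handel, we would first try a moment-method fallback: bound $\mathbb E\operatorname{tr}\bigl((pX)(pX)^\top\bigr)^{q'}$ with $q'\asymp q$ by counting paths weighted by the mask. This gives $\|pX\|_{\rm op}\lesssim\sqrt{q}\,(\text{max row/column count})^{1/2}$ up to a factor $(mk)^{1/(2q')}=O(1)$. The extra $\sqrt q$ this loses is exactly absorbed by the stated bound $CD\sqrt{q/\rho}$, which explains the form of the lemma. The rest of the argument is routine bookkeeping of moments.
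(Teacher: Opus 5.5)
Your proof is correct, but its main line takes a different route from the paper's. The paper's proof is exactly your fallback. Conditional on $\zeta$, it bounds $\|X\|_{\rm op}^{2q}$ by $\operatorname{tr}(A^{2q})$ for the symmetric dilation $A$. Wick's formula then reduces the Gaussian expectation of each closed walk to $p^{-2q}$ times the number of compatible pairings. For each pairing, the paper counts at most $(m+k)\Delta^q$ walks: a starting vertex, then at most $\Delta$ choices at the first use of each pair. This gives $\mathbb E_g\|X\|_{\rm op}^{2q}\le(m+k)(2q-1)!!\,\Delta^q/p^{2q}$, and averaging with $\|\Delta\|_{L_q}\le C\rho$ yields $CD\sqrt{q/\rho}$. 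The $\sqrt q$ factor comes from $((2q-1)!!)^{1/(2q)}$.

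Your primary route instead applies the Bandeira--van Handel inequality conditionally, adds Gaussian concentration (the Lipschitz constant is $\max_{ij}\zeta_{ij}\le1$), and controls $\|\sqrt{\Delta}\|_{L_q}$ through \cref{lem:binomial-max} at order $q/2$. All of this is valid and gives the sharper bound $CD/\sqrt\rho$. The gain is the removed $\sqrt q$; the cost is dependence on a deep external theorem, whereas the paper's argument is self-contained in a few lines.

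The sharper bound is not needed. In the proof of \cref{thm:local-quenched}, the coefficient-imbalance and unpaired-subblock estimates already carry a factor $\sqrt q$, so the final bound $C\sqrt q\,\rho^{-1/3}$ on $\|\mathcal A_W\|_{L_q}$ would not change. Even the non-commutative Khintchine inequality, which loses a factor $\sqrt{\log(m+k)}\le C\sqrt q$, would suffice.
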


\begin{proof}[Proof of \cref{thm:local-quenched}]
Fix $n,D,\rho$ as in the theorem, put $p=\rho/D$, and set
$q=\lceil\log(2n)\rceil\ge2$.  We take $B_1$ sufficiently large below.
It suffices to prove $\|\mathcal A_W\|_{L_q}\le C\sqrt q\,\rho^{-1/3}$
for every $W$, since $|\mathcal W_D|\le48n$ and
\begin{equation}
    \mathbb E\max_W\mathcal A_W
    \le (48n)^{1/q}\max_W\|\mathcal A_W\|_{L_q},
    \label{eq:interval-maximum-from-moments}
\end{equation}
where $(48n)^{1/q}$ is universally bounded.

\paragraph{Preparing the random matrix.}
Fix $W$.  Fill the zero entries of $X^W$ outside $\mathcal S_W$ with
independent copies of $\zeta g/p$, and denote the resulting matrix and
bilinear supremum by $\widehat X^W$ and $\widehat{\mathcal A}_W$.
This makes all entries identically distributed, which will allow us to
compare the coefficient totals on equal-sized subblocks.
The added entries have conditional mean zero.  The $q$th power of the
supremum is convex in the entries, since the feasible vectors form a
deterministic set.  Conditional Jensen gives the moment comparison
\begin{equation}
    \|\mathcal A_W\|_{L_q}
    \le\|\widehat{\mathcal A}_W\|_{L_q}.
    \label{eq:matrix-filling-comparison}
\end{equation}

Consider first feasible pairs with $a=|\supp\alpha|\le|\supp\beta|=b$.
Reverse the columns of $\widehat X^W$, writing $X$ for the resulting
matrix and $z$ for the reversed vector $\beta$.  The entries of $X$ are
still independent copies of $\zeta g/p$, and the endpoint constraints in
\eqref{eq:endpoint-bilinear-supremum} imply
\eqref{eq:bilinear-norm-constraints}.

\paragraph{Keeping the number of subblocks free.}
Choose an integer $R\ge4$ such that
\begin{equation}
    R^2\le D/64,\qquad \rho/R\ge Cq,
    \label{eq:subblock-parameter-conditions}
\end{equation}
where $C$ is a sufficiently large universal constant.  We will check
these conditions for the final choice of $R$ at the end.
Put $k=|J_W|$ and $L=\lfloor k/R\rfloor$.
Since $D/32\le k\le D/8$, we have $L\ge R$ and $L\asymp D/R$.
Partition $[k]$ into $R$ consecutive subblocks, the first $R-1$ of size
$L$.  The last has size $L+(k-RL)$, which lies between $L$ and $2L$.
Also $Lp\asymp\rho/R\ge cCq$.  These are precisely the size and sampling
conditions used in the estimates below.
The deterministic bound \eqref{eq:bilinear-matrix-bound} now applies
simultaneously to all the pairs with $a\le b$, even when they depend on $X$.

\paragraph{The matrix norm: comparison errors.}
By \cref{lem:masked-gaussian-norm},
\begin{equation}
    \left\|\frac{\|X\|_{\rm op}}{\sqrt{DL}}\right\|_{L_q}
    \le \frac{CD\sqrt{q/\rho}}{\sqrt{DL}}
    \le C\sqrt{qR/\rho}.
    \label{eq:comparison-error-moment}
\end{equation}

\paragraph{Differences of coefficient totals.}
For $r\le R-3$, the columns on $T_r$ and the negated columns on
$T_{r+2}$ are independent copies of the same $L$-column matrix.
Exchanging each pair of columns allows independent Rademacher signs to be
inserted into their difference.  Apply the contraction inequality for the
1-Lipschitz function $t\mapsto(t)_+$ to each copy
\cite[Theorem~4.12]{LedouxTalagrand1991}, then use the triangle inequality.
The signs can be absorbed into the symmetric columns of $X$.  Hence
\begin{equation}
    \|B_r(X)\|_{L_q}
    \le C\left\|\max_i\left|\sum_{j\in T_r}X_{ij}\right|\right\|_{L_q}.
    \label{eq:subblock-contraction}
\end{equation}
Here the supremum of $|\langle Y,u\rangle|$ over $u\ge0$,
$\|u\|_1\le1$, is exactly $\|Y\|_\infty$.

Write $X_{ij}=\zeta_{ij}g_{ij}/p$ and let
$\Delta=\max_{i,r}\sum_{j\in T_r}\zeta_{ij}$.
Each count has mean at most $4Lp$, and there are at most $n^2$ counts.
Since $Lp\ge cCq$, \cref{lem:binomial-max} gives
\begin{equation}
    \|\Delta\|_{L_q}\le CLp.
    \label{eq:subblock-retained-count-moment}
\end{equation}
Conditional on $\zeta$, each row sum in \eqref{eq:subblock-contraction}
is Gaussian with variance at most $\Delta/p^2$.
The scalar Gaussian tail bound and a union bound over the rows give
\[
    \left\|\max_i\left|\sum_{j\in T_r}X_{ij}\right|
    \right\|_{L_q(g\mid\zeta)}
    \le C\sqrt{q\Delta}/p.
\]
Averaging over $\zeta$ using \eqref{eq:subblock-retained-count-moment},
applying \eqref{eq:subblock-contraction}, and then summing over $r$ yields
\begin{equation}
\begin{aligned}
    \|B_r(X)\|_{L_q}&\le C\sqrt{qL/p},\\
    \left\|\frac1D\sum_{r=1}^{R-3}B_r(X)\right\|_{L_q}
    &\le\frac{CR}{D}\sqrt{qL/p}
    \le C\sqrt{qR/\rho}.
\end{aligned}
    \label{eq:coefficient-imbalance-moment}
\end{equation}

\paragraph{The unpaired subblocks.}
Let $G=\max_{i,j}|g_{ij}|$.  Pointwise,
$S_r(X)\le\Delta G/p$.
The scalar Gaussian tail bound gives $\|G\|_{L_q}\le C\sqrt q$,
and $G$ is independent of $\Delta$.  Together with the count bound
\eqref{eq:subblock-retained-count-moment}, this gives
\begin{equation}
\begin{aligned}
    \|S_r(X)\|_{L_q}&\le CL\sqrt q,\\
    \left\|\frac1D\sum_{r=R-2}^{R}S_r(X)\right\|_{L_q}
    &\le\frac{CL\sqrt q}{D}\le\frac{C\sqrt q}{R}.
\end{aligned}
    \label{eq:unpaired-subblock-moment}
\end{equation}
There are only three such terms, so this contribution decreases as $R$
increases.

\paragraph{Balancing the bounds.}
Combining the matrix-norm estimate \eqref{eq:comparison-error-moment},
the coefficient-imbalance estimate \eqref{eq:coefficient-imbalance-moment},
and the unpaired-subblock estimate \eqref{eq:unpaired-subblock-moment} gives
\[
    \left\|\frac{\|X\|_{\rm op}}{\sqrt{DL}}
             +\frac{F(X)+F(-X)}D\right\|_{L_q}
    \le C\sqrt q\left(\sqrt{R/\rho}+1/R\right),
\]
using symmetry for $F(-X)$.  By \eqref{eq:bilinear-matrix-bound}, this
bounds the supremum over all feasible pairs with $a\le b$.
If $b<a$, transpose $\widehat X^W$ and reverse both coordinate orders;
this exchanges the two energy constraints in
\eqref{eq:endpoint-bilinear-supremum} and preserves the entry law.
The same argument applies with the endpoint sizes interchanged.
Taking the maximum of the two suprema, and then using
\eqref{eq:matrix-filling-comparison}, proves
\eqref{eq:bilinear-R-tradeoff}.

We now choose $R=\lceil\rho^{1/3}\rceil$.  For sufficiently large $B_1$,
the hypotheses $B_1\log^{3/2}(2n)\le\rho\le D$ ensure $R\ge4$,
$R^2\le D/64$, and
\[
    \rho/R\asymp\rho^{2/3}\ge Cq.
\]
Thus the conditions in \eqref{eq:subblock-parameter-conditions} hold, so
the moment bound \eqref{eq:bilinear-R-tradeoff} applies.  The terms
$\sqrt{R/\rho}$ and $1/R$ are now both $O(\rho^{-1/3})$,
giving $\|\mathcal A_W\|_{L_q}\le C\sqrt q\,\rho^{-1/3}$.
Substituting this moment estimate into
\eqref{eq:interval-maximum-from-moments} bounds
$\mathbb E\max_{W\in\mathcal W_D}\mathcal A_W$ and establishes
\cref{thm:local-quenched}.
\end{proof}

\section{From Transitive Tournaments to Arbitrary Tournaments}
\label{sec:arbitrary-tournaments}

We prove that every $n$-vertex tournament has a
$(1\pm\varepsilon)$-spectral sparsifier with
$\widetilde O(\min\{n^2,n/\varepsilon^3\})$ arcs.  Given a tournament $T$, we
order its vertices by nondecreasing outdegree and compare $T$ with the
transitive tournament $R$ determined by this order.  The energy $Q_T^+$ of $T$
is an exact sum of arc energies of $R$ with coefficients in $\{-1,+1\}$ and
the undirected energy of the pairs whose orientations disagree.  The coefficients record
whether the two orientations agree.  We extend the bound on
$\mathfrak G_{\TT_n}(\pi)$ from \cref{lem:tt-sampling-probabilities}
to the first term in
\cref{lem:signed-tt-sampling}.  We then combine this sampling with
effective-resistance sampling for the second term, using common indicators
so that the resulting sampled subgraph of $T$ has nonnegative weights.

\subsection{Energy domination after ordering by outdegree}

The purpose of this subsection is to prove that the directed energy of the
outdegree-ordered transitive tournament is at most an $O(\log n)$ factor
larger than that of the original tournament.  We do so by routing each
transitive-tournament arc along one- or two-arc paths of the original
tournament and bounding their congestion.

Let $T=(V,A)$ be an $n$-vertex tournament, and let $d_T^+(v)$ denote the
outdegree of $v$.  Fix an ordering
\begin{equation}
  d_T^+(v_1)\le d_T^+(v_2)\le \cdots\le d_T^+(v_n).
  \label{eq:score-order}
\end{equation}
breaking ties arbitrarily, and let $R$ be the transitive tournament with arcs
\begin{equation}
  v_j\longrightarrow v_i \qquad (i<j).
  \label{eq:score-backbone}
\end{equation}
Thus $R$ points from larger outdegree to smaller outdegree.  For every positive
integer $r$, write
\[
  H_r:=\sum_{k=1}^r \frac1k.
\]

The following reciprocal-degree inequality bounds the congestion of the
two-arc paths used in the proof of \cref{thm:score-order-domination}.

\begin{lemma}[Reciprocal degrees in a tournament]
\label{lem:tournament-reciprocal-degrees}
If $S$ is a tournament on $r$ vertices, then
\[
  \sum_{z\in V(S)}\frac{1}{d_S^+(z)+1}\le 2H_r,
  \qquad
  \sum_{z\in V(S)}\frac{1}{d_S^-(z)+1}\le 2H_r.
\]
\end{lemma}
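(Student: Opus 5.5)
By the symmetry that reverses every arc (which swaps outdegree and indegree), the second inequality follows from the first. So it suffices to bound $\sum_z 1/(d_S^+(z)+1)$.

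The plan is to compare the outdegree sequence of $S$ with that of a transitive tournament. Sort the vertices so that $d_S^+(z_1)\le\cdots\le d_S^+(z_r)$. For each $k$, the subtournament induced by $z_1,\ldots,z_k$ contains $\binom k2$ arcs. Each of these arcs is counted in the outdegree of one of these $k$ vertices within the subtournament, and hence in their outdegrees in $S$. This gives Landau's condition
\[
  \sum_{i=1}^k d_S^+(z_i)\ge\binom k2 .
\]
Since the sequence is sorted, $d_S^+(z_k)$ is at least the average of the first $k$ terms. Therefore
\[
  d_S^+(z_k)\ge \frac{k-1}{2}\quad\text{for every }k.
\]
It follows that
\[
  \frac{1}{d_S^+(z_k)+1}\le\frac{2}{k+1}.
\]
Summing over $k=1,\ldots,r$ gives at most $2(H_{r+1}-1)\le 2H_r$, which is the claimed bound.

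The only nontrivial step is the averaging step, and it is elementary. I do not expect any real obstacle; the one point to check is that the indexing constants yield exactly $2H_r$. They do, since $\sum_{k=1}^r 2/(k+1)=2H_{r+1}-2<2H_r$.
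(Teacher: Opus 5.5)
Your proposal is correct and follows the paper's proof essentially step for step. Both sort the outdegrees, use $k s_k\ge\sum_{i\le k}s_i\ge\binom k2$ to get $s_k\ge(k-1)/2$, sum $2/(k+1)$ to reach $2H_r$, and obtain the indegree bound by reversing every arc.
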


\begin{proof}
Order the outdegrees as $s_1\le \cdots\le s_r$.  The first $k$ vertices span
$\binom{k}{2}$ arcs, and hence
\[
  k s_k\ge \sum_{i=1}^k s_i\ge \binom{k}{2}.
\]
Therefore $s_k\ge (k-1)/2$, and
\[
  \sum_{z\in V(S)}\frac{1}{d_S^+(z)+1}
  \le \sum_{k=1}^r \frac{2}{k+1}
  \le 2H_r.
\]
The indegree estimate follows by reversing every arc.
\end{proof}

\begin{theorem}[Energy domination for the outdegree order]
\label{thm:score-order-domination}
For every $x\in\mathbb R^V$,
\begin{equation}
  Q_R^+(x)\le \Lambda_n Q_T^+(x),
  \qquad
  \Lambda_n:=2(1+4H_n)=O(\log(2n)).
  \label{eq:score-domination}
\end{equation}
\end{theorem}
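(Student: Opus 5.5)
Each arc of $R$ will be charged to one- or two-arc directed paths of $T$, the two-arc paths weighted fractionally, and then the load on every arc of $T$ will be bounded by \cref{lem:tournament-reciprocal-degrees}. Fix an arc $v_j\to v_i$ of $R$, so $i<j$ and $d_T^+(v_i)\le d_T^+(v_j)$.

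\emph{Agreeing pairs.} If $T$ contains $v_j\to v_i$, the term $(x_{v_j}-x_{v_i})_+^2$ appears verbatim in $Q_T^+$. Each arc of $T$ is used this way at most once.

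\emph{Disagreeing pairs.} Otherwise $v_i\to v_j$ in $T$. Put $P_{ij}:=N_T^+(v_j)\cap N_T^-(v_i)$. Both sets avoid $v_i$ and $v_j$. Their sizes are $d_T^+(v_j)$ and $n-1-d_T^+(v_i)$, and these sum to at least $n-1$. Since only $n-2$ other vertices exist, $P_{ij}\neq\varnothing$. Thus every $z\in P_{ij}$ gives a path $v_j\to z\to v_i$ in $T$. The inequality \eqref{eq:overview-two-arc}, i.e.\ $(a-c)_+^2\le 2[(a-b)_+^2+(b-c)_+^2]$, averaged uniformly over $z\in P_{ij}$, gives
\[
  (x_{v_j}-x_{v_i})_+^2\le \frac{2}{|P_{ij}|}\sum_{z\in P_{ij}}\bigl[(x_{v_j}-x_z)_+^2+(x_z-x_{v_i})_+^2\bigr].
\]
Summing over all arcs of $R$, $Q_R^+(x)\le\sum_{e\in A}(\mathbf 1+2\,\mathrm{load}(e))\,q_e^+(x)$. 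It therefore suffices to show that every arc carries load at most $4H_n$, which yields $Q_R^+\le(1+8H_n)Q_T^+\le\Lambda_n Q_T^+$.

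\emph{Load as a first arc.} Fix an arc $a\to b$ of $T$ used as the first arc, so $a=v_j$ and $z=b$. The vertices $v_i$ it serves satisfy $b\to v_i$ and $v_i\to a$, so they lie in $U:=N^+_T(b)\cap N^-_T(a)$; they also satisfy $d_T^+(v_i)\le d_T^+(a)$. The key step, which I expect to be the main obstacle, is the lower bound $|P_{ia}|\ge d_U^+(v_i)+1$. To get it, write $N_T^+(a)=P_{ia}\,\dot\cup\,(N_T^+(a)\cap N_T^+(v_i))$. The set $N_T^+(v_i)$ contains $a$, and it contains the $d_U^+(v_i)$ out-neighbours of $v_i$ inside $U$. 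All of these lie in $N_T^-(a)\cup\{a\}$, hence outside $N_T^+(a)$. Therefore $|N_T^+(a)\cap N_T^+(v_i)|\le d_T^+(v_i)-1-d_U^+(v_i)$, and
\[
  |P_{ia}|\ge d_T^+(a)-d_T^+(v_i)+1+d_U^+(v_i)\ge d_U^+(v_i)+1 .
\]
Consequently the first-arc load is at most $\sum_{v\in U}1/(d_U^+(v)+1)\le 2H_n$ by \cref{lem:tournament-reciprocal-degrees}.

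\emph{Load as a second arc.} For an arc $b\to a$ used as the second arc, so $z=b$ and $v_i=a$, the served vertices $v_j$ lie in $U':=N^+_T(a)\cap N^-_T(b)$. The symmetric counting, now with in-neighbourhoods, gives $|P_{aj}|\ge d_{U'}^-(v_j)+1$. The indegree version of \cref{lem:tournament-reciprocal-degrees} then bounds this load by $2H_n$.

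Adding the two loads gives at most $4H_n$ per arc, which completes the bound. The only delicate points are the exact disjointness bookkeeping in the $|P|$ lower bounds and the use of $d_T^+(v_i)\le d_T^+(v_j)$ there; everything else is routine.
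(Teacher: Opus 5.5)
Your proposal is correct and is essentially the paper's proof: you route each disagreeing $R$-arc fractionally over all two-arc paths through $N_T^+(v_j)\cap N_T^-(v_i)$, and you bound the congestion on each $T$-arc by applying \cref{lem:tournament-reciprocal-degrees} to the subtournament of endpoints served by that arc. The differences are cosmetic. You obtain nonemptiness and the bound $|P_{ia}|\ge d_U^+(v_i)+1$ by direct neighbourhood counting rather than from the outdegree-difference identity \eqref{eq:score-difference}, you sum over the superset $U$ rather than the served set, and you get the marginally sharper constant $1+8H_n$ by not doubling the direct terms.
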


\begin{proof}
We route every arc of $R$ through $T$.  Consider an arc $u\to v$ of $R$.
If $u\to v$ also belongs to $T$, route its unit demand directly.  Otherwise
$T$ contains $v\to u$.  For a vertex $z$, let $N_T^+(z)$ and $N_T^-(z)$
denote its sets of outneighbors and inneighbors in $T$, respectively.  Define
\[
  C(u,v):=N_T^+(u)\cap N_T^-(v),
  \qquad
  \overline C(u,v):=N_T^+(v)\cap N_T^-(u).
\]
Comparing the outdegrees of $u$ and $v$ gives the exact identity
\begin{equation}
  d_T^+(u)-d_T^+(v)
  =|C(u,v)|-|\overline C(u,v)|-1.
  \label{eq:score-difference}
\end{equation}
Since $u$ appears after $v$ in the order in \eqref{eq:score-order}, the
left-hand side is
nonnegative.  Thus
\[
  |C(u,v)|\ge |\overline C(u,v)|+1\ge 1.
\]
For such a pair on which $T$ and $R$ disagree, route $1/|C(u,v)|$ units on
every path
$u\to w\to v$ with $w\in C(u,v)$.

We bound the congestion of this simultaneous routing.  Fix an arc $e=a\to b$
of $T$.  Let $X$ be the set of vertices $v$ for which $a\to b\to v$ is used
as a routed path for the $R$-arc $a\to v$.  Thus $v\to a$ in $T$, $v$ precedes
$a$ in the order in \eqref{eq:score-order}, and $b\to v$.  If $v,w\in X$ and
$v\to w$, then
$v\to w\to a$, so $w\in\overline C(a,v)$.  Consequently,
\[
  |C(a,v)|\ge d_{T[X]}^+(v)+1.
\]
Here $T[X]$ denotes the subtournament of $T$ induced by $X$.
The load placed on $e$ as a first leg is therefore at most
\[
  \sum_{v\in X}\frac{1}{|C(a,v)|}
  \le \sum_{v\in X}\frac{1}{d_{T[X]}^+(v)+1}
  \le 2H_n
\]
by Lemma~\ref{lem:tournament-reciprocal-degrees}.

For the second-leg load, let $Y$ be the set of vertices $u$
for which $u\to a\to b$ is used to route the $R$-arc $u\to b$.  If $u,w\in Y$
and $w\to u$, then $b\to w\to u$, and hence
$w\in\overline C(u,b)$.  Thus
\[
  |C(u,b)|\ge d_{T[Y]}^-(u)+1,
\]
and the second-leg load is also at most $2H_n$.  Direct routing contributes at
most one additional unit.  Hence every arc of $T$ has total routing load at
most
\begin{equation}
  1+4H_n.
  \label{eq:routing-congestion}
\end{equation}

Finally, for every directed two-arc path $u\to w\to v$,
\[
  (x_u-x_v)_+^2
  \le 2(x_u-x_w)_+^2+2(x_w-x_v)_+^2.
\]
Average this inequality over the paths used for each $R$-arc and sum over all
$R$-arcs.  The congestion bound in
\eqref{eq:routing-congestion} yields
\[
  Q_R^+(x)\le 2(1+4H_n)Q_T^+(x),
\]
as claimed.
\end{proof}

\subsection{An energy identity using pairs whose orientations disagree}

This subsection decomposes the energy of the original tournament into a
fixed-coefficient sum of arc energies of the ordered transitive tournament
and an undirected correction.  We also bound that correction by the original
directed energy, using $Q_R^+(x)\le\Lambda_n Q_T^+(x)$ from
\eqref{eq:score-domination}.

For each unordered pair, index the unique arc of $T$ by the arc of $R$ on the
same pair.  For $i<j$, put
\[
  q^R_{ij}(x):=(x_{v_j}-x_{v_i})_+^2,
  \qquad
  k_{ij}(x):=(x_{v_i}-x_{v_j})^2.
\]
Thus $q^R_{ij}$ is the contribution of the $R$-arc $v_j\to v_i$; for
$e=(v_j,v_i)\in E(R)$, we also write
$q_e^R(x):=q_{ij}^R(x)$ and $k_e(x):=k_{ij}(x)$.  Let
\[
  E_{\mathrm{dis}}
  :=\bigl\{\{v_i,v_j\}:i<j\text{ and }v_i\to v_j\text{ in }T\bigr\}
\]
be the set of unordered pairs on which $T$ and $R$ disagree, and let
$\mathcal F:=(V,E_{\mathrm{dis}})$ be the corresponding undirected graph.
Define
\[
  \sigma_{ij}:=
  \begin{cases}
    +1,&v_j\to v_i\text{ in }T,\\
    -1,&v_i\to v_j\text{ in }T.
  \end{cases}
\]
Finally, let
\[
  K_{\mathcal F}(x)
  :=\sum_{\{v_i,v_j\}\in E_{\mathrm{dis}}}k_{ij}(x)
\]
be the ordinary undirected quadratic energy of $\mathcal F$.

\begin{lemma}[An identity for $Q_T^+$ in terms of $q^R_{ij}$ and
$K_{\mathcal F}$]
\label{lem:tournament-feedback-decomposition}
For every $x\in\mathbb R^V$,
\begin{equation}
  Q_T^+(x)
  =\sum_{i<j}\sigma_{ij}q^R_{ij}(x)+K_{\mathcal F}(x).
  \label{eq:feedback-decomposition}
\end{equation}
Moreover,
\begin{equation}
  K_{\mathcal F}(x)
  \le Q_T^+(x)+Q_R^+(x)
  \le (\Lambda_n+1)Q_T^+(x).
  \label{eq:feedback-domination}
\end{equation}
\end{lemma}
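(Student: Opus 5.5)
The identity \eqref{eq:feedback-decomposition} is a sum of pairwise identities, so I will verify it one unordered pair $\{v_i,v_j\}$ with $i<j$ at a time. The only tool needed is the scalar identity $(t)_+^2+(-t)_+^2=t^2$, equivalently $(a-b)_+^2=(a-b)^2-(b-a)_+^2$.

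\emph{Agreeing pairs.} If $\sigma_{ij}=+1$, the arc of $T$ on this pair is $v_j\to v_i$, which is the $R$-arc. Its contribution to $Q_T^+(x)$ is then exactly $q^R_{ij}(x)$. The pair does not lie in $E_{\mathrm{dis}}$, so it contributes nothing to $K_{\mathcal F}$, and the two sides agree.

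\emph{Disagreeing pairs.} If $\sigma_{ij}=-1$, the arc of $T$ is $v_i\to v_j$ and contributes $(x_{v_i}-x_{v_j})_+^2$. On the right-hand side the pair contributes $-q^R_{ij}(x)+k_{ij}(x)=-(x_{v_j}-x_{v_i})_+^2+(x_{v_i}-x_{v_j})^2$. This equals $(x_{v_i}-x_{v_j})_+^2$ by the scalar identity applied with $t=x_{v_i}-x_{v_j}$.

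Summing over all pairs, and using that each pair carries exactly one arc of $T$ and one arc of $R$, gives \eqref{eq:feedback-decomposition}.

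For \eqref{eq:feedback-domination}, I rewrite each term of $K_{\mathcal F}$ with the same scalar identity. For a disagreeing pair,
\[
k_{ij}(x)=(x_{v_i}-x_{v_j})_+^2+(x_{v_j}-x_{v_i})_+^2 .
\]
The first summand is the contribution of the $T$-arc $v_i\to v_j$, and the second is the contribution of the $R$-arc $v_j\to v_i$. Summing over $E_{\mathrm{dis}}$ gives the $T$-contributions and $R$-contributions of the disagreeing pairs only. Every discarded term is nonnegative, so $K_{\mathcal F}(x)\le Q_T^+(x)+Q_R^+(x)$. \Cref{thm:score-order-domination} then gives $Q_R^+(x)\le\Lambda_nQ_T^+(x)$, which yields the second inequality.

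There is no real obstacle here. The only care required is keeping the index conventions consistent: $R$ points from $v_j$ to $v_i$ for $i<j$, and $q^R_{ij}$ is the term $(x_{v_j}-x_{v_i})_+^2$.
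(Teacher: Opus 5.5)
Your proposal is correct and follows the paper's proof: you verify the identity pair by pair using $(a-b)_+^2=(a-b)^2-(b-a)_+^2$ on disagreeing pairs. You then bound $K_{\mathcal F}$ by writing each $k_{ij}$ as the sum of the $T$-arc and $R$-arc contributions on that pair, dropping the remaining nonnegative terms, and applying \cref{thm:score-order-domination}.
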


\begin{proof}
If $T$ and $R$ agree on $\{v_i,v_j\}$, then the contribution of this pair to
$Q_T^+$ is $q^R_{ij}$.  If they disagree, then
\begin{equation}
  (x_{v_i}-x_{v_j})_+^2
  =(x_{v_i}-x_{v_j})^2-(x_{v_j}-x_{v_i})_+^2
  =k_{ij}-q^R_{ij}.
  \label{eq:disagreement-pair-energy}
\end{equation}
Summing these pairwise identities proves
\eqref{eq:feedback-decomposition}.  On every edge of $\mathcal F$,
$k_{ij}=(x_{v_i}-x_{v_j})_+^2+q^R_{ij}$.  Hence
\[
  K_{\mathcal F}(x)
  \le Q_T^+(x)+Q_R^+(x).
\]
The domination $Q_R^+\le\Lambda_n Q_T^+$ in
\eqref{eq:score-domination} gives the final inequality in
\eqref{eq:feedback-domination}.
\end{proof}

\subsection{Sampling the sum with fixed signs}

We next approximate the term $\sum_{i<j}\sigma_{ij}q^R_{ij}(x)$ in
\eqref{eq:feedback-decomposition}.  Its signs are fixed by the orientations
of $T$ and $R$ before sampling.  We use the probabilities constructed in
\cref{lem:tt-sampling-probabilities} and show that its bound on the expected
Gaussian width also holds after multiplying each arc contribution by a fixed
sign.  Let $0<\eta\le1/2$ be
the error parameter for this approximation.  The following lemma bounds the
expected supremum of the absolute sampling error normalized by $Q_R^+(x)$,
over all $x$ with $Q_R^+(x)>0$, by $O(\eta)$.  We continue to write
$q_e^R(x)=(x_{v_j}-x_{v_i})_+^2$ for the contribution of an arc
$e=(v_j,v_i)$ of $R$.

\begin{lemma}[Sampling with coefficients in $\{-1,+1\}$]
\label{lem:signed-tt-sampling}
Let $R$ be the $n$-vertex transitive tournament with arcs
$v_j\to v_i$ for $1\le i<j\le n$, and call $j-i$ the endpoint distance of
this arc.  There is a universal constant $C>0$ such
that, for every function $\sigma:E(R)\to\{-1,+1\}$ and every
$0<\eta\le 1/2$, there are inclusion probabilities
$a=(a_e)_{e\in E(R)}\in(0,1]^{E(R)}$ such that arcs with the same endpoint
distance receive the same probability and, writing $\sigma_e:=\sigma(e)$,
\begin{equation}
  \sum_{e\in E(R)}a_e
  \le \widetilde O\!\left(
    \min\left\{n^2,\frac{n}{\eta^3}\right\}
  \right)
  \label{eq:signed-tt-budget}
\end{equation}
and, for independent $\xi_e\sim\operatorname{Bernoulli}(a_e)$,
\begin{equation}
  \mathbb E\sup_{Q_R^+(x)>0}
  \frac{
    \left|
      \sum_{e\in E(R)}
      \sigma_e\left(\frac{\xi_e}{a_e}-1\right)q_e^R(x)
    \right|
  }{Q_R^+(x)}
  \le C\eta.
  \label{eq:signed-tt-error}
\end{equation}
\end{lemma}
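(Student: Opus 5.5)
The plan is to reduce everything to the transitive-tournament case already handled in \cref{lem:tt-sampling-probabilities}. First, relabel $v_i\mapsto n+1-i$. This identifies $R$ with $\TT_n$: the arc $v_j\to v_i$ with $i<j$ becomes $(n+1-j)\to(n+1-i)$, and the endpoint distance $j-i$ is unchanged. It also identifies $q_e^R$ with $q_e^+$ and $Q_R^+$ with $Q_{\TT_n}^+$. Under this identification, we take $a_e:=\pi_e$, where $\pi$ is given by \cref{lem:tt-sampling-probabilities} with the same $\eta$. This $\pi$ depends only on endpoint distance, and its total satisfies the budget \eqref{eq:tt-sampling-budget}, which is exactly \eqref{eq:signed-tt-budget}. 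Notably, the probabilities do not depend on $\sigma$.

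Next, we bound the expected error by applying \cref{lem:sampling-gaussian-comparison} to the bounded family
\[
  \mathcal F_\sigma:=\{(\sigma_e r_e)_{e\in E(R)}:r\in\mathcal R_R\}\subseteq[-1,1]^{E(R)}.
\]
For each $x$ with $Q_R^+(x)>0$, the quantity inside the supremum in \eqref{eq:signed-tt-error} is $\bigl|\sum_e(\xi_e/a_e-1)f_e\bigr|$ with $f=\sigma\odot r_x$. So the left side of \eqref{eq:signed-tt-error} is at most
\[
  C\,\mathbb E_{\zeta,g}\sup_{r\in\mathcal R_R^0}\Bigl|\sum_e g_e\sigma_e\frac{\zeta_e}{a_e}r_e\Bigr|.
\]
Note that $\sigma\odot 0=0$, so $\mathcal F_\sigma\cup\{0\}$ corresponds to $\mathcal R_R^0$.

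The key observation is that $(\sigma_e g_e)_e$ has the same joint law as $(g_e)_e$. It is a vector of independent standard Gaussians, because each $\sigma_e\in\{-1,+1\}$ is deterministic and the standard Gaussian is symmetric. It is also independent of $\zeta$. Hence the last expectation equals $\mathfrak G_R(a)=\mathfrak G_{\TT_n}(\pi)\le\eta$, which gives \eqref{eq:signed-tt-error} with $C$ the constant of \cref{lem:sampling-gaussian-comparison}.

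The only points needing care are bookkeeping. The first is checking that the relabeling preserves both the endpoint-distance classes and the energy functionals, so that \cref{lem:tt-sampling-probabilities} transfers verbatim. The second is making sure the supremum over $x$ matches the supremum over $\mathcal R_R$ after dividing by $Q_R^+(x)$. The substantive work, namely the Gaussian-width bound for $\TT_n$, is already done in \cref{thm:scale-response}. The step I would double-check most is the distributional-invariance argument. It must be applied to the expectation of the supremum, not termwise. This is legitimate because the supremum is a measurable function of the whole vector $(\zeta_e,g_e)_e$, and that vector's law is unchanged by the sign flips.
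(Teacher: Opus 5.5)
Your proposal is correct and follows the paper's proof essentially step for step. You relabel $v_j\mapsto n+1-j$ to identify $R$ with $\TT_n$ and reuse the probabilities from \cref{lem:tt-sampling-probabilities}, then apply \cref{lem:sampling-gaussian-comparison} to the signed normalized vectors. Finally, you absorb the fixed signs $\sigma_e$ into the Gaussians by distributional invariance, which gives $C\,\mathfrak G_R(a)=C\,\mathfrak G_{\TT_n}(\pi)\le C\eta$.
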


\begin{proof}
Relabeling $v_j$ as $n+1-j$ identifies $R$ with $\TT_n$ and preserves
endpoint distances and directed energies.  Take the probabilities $\pi$
from \cref{lem:tt-sampling-probabilities} with target width $\eta$, and let
$a_e$ be the corresponding probability on each arc of $R$.  Then
\eqref{eq:tt-sampling-budget} gives \eqref{eq:signed-tt-budget}, and
\[
  \mathfrak G_R(a)=\mathfrak G_{\TT_n}(\pi)\le\eta.
\]
Let $(\xi'_e)_{e\in E(R)}$ be an independent copy of
$(\xi_e)_{e\in E(R)}$, put $\zeta_e:=\mathbf 1\{\xi_e\ne\xi'_e\}$, and
let $(g_e)_{e\in E(R)}$ be independent standard Gaussian variables,
independent of the two Bernoulli families $(\xi_e)_{e\in E(R)}$ and
$(\xi'_e)_{e\in E(R)}$.
Apply \cref{lem:sampling-gaussian-comparison} to the vectors
$(\sigma_e q_e^R(x)/Q_R^+(x))_{e\in E(R)}$ with $Q_R^+(x)>0$.
The left-hand side of \eqref{eq:signed-tt-error} is at most
\[
  C\mathbb E_{\zeta,g}\sup_{Q_R^+(x)>0}
  \left|
    \sum_{e\in E(R)}g_e\frac{\zeta_e}{a_e}
    \sigma_e\frac{q_e^R(x)}{Q_R^+(x)}
  \right|.
\]
Multiplying the independent standard Gaussian variables $g_e$ by the fixed
signs $\sigma_e$ preserves their joint distribution and independence from
$\zeta$.  Thus this expression equals $C\mathfrak G_R(a)\le C\eta$,
which proves \eqref{eq:signed-tt-error}.
\end{proof}

\subsection{Probability monotonicity and undirected sampling}

This subsection provides two tools for using common inclusion probabilities
for the signed sum $\sum_{i<j}\sigma_{ij}q^R_{ij}(x)$ and the undirected
energy $K_{\mathcal F}(x)$ in \eqref{eq:feedback-decomposition}.
First, increasing independent inclusion probabilities does not increase the
expectation of a convex function of the centered coefficients
$(\xi_e/p_e-1)_e$.  This lets us raise the probabilities from
\cref{lem:signed-tt-sampling} to meet the requirements of the undirected
sampling as well.  Second, we give a sampling guarantee that approximates
$K_{\mathcal F}$ uniformly over all potentials.

\begin{lemma}[Monotonicity under probability increase]
\label{lem:probability-increase}
Let $0<a_e\le p_e\le1$ for every $e$ in a finite set $E$, and let
$\Phi:\mathbb R^E\to[0,\infty]$ be convex.  If
$\xi_e^{(a)}\sim\operatorname{Bernoulli}(a_e)$ and
$\xi_e^{(p)}\sim\operatorname{Bernoulli}(p_e)$ each form an independent family,
then
\begin{equation}
  \mathbb E\,\Phi\!\left(
    \left(\frac{\xi_e^{(p)}}{p_e}-1\right)_{e\in E}
  \right)
  \le
  \mathbb E\,\Phi\!\left(
    \left(\frac{\xi_e^{(a)}}{a_e}-1\right)_{e\in E}
  \right).
  \label{eq:probability-monotonicity}
\end{equation}
\end{lemma}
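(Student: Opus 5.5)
The plan is a coupling argument that realizes the $p$-sampled centered coefficients as a conditional expectation of the $a$-sampled ones, followed by conditional Jensen. Since $\Phi$ is convex, it suffices to construct, on one probability space, independent families $\xi^{(a)}$ and $\xi^{(p)}$ with the correct marginals such that
\[
  \mathbb E\!\left[\frac{\xi_e^{(a)}}{a_e}-1\;\middle|\;\xi^{(p)}\right]
  =\frac{\xi_e^{(p)}}{p_e}-1
  \qquad\text{for every }e\in E.
\]
Given this identity, conditional Jensen applied to $\Phi$ gives
\[
  \Phi\bigl((\xi_e^{(p)}/p_e-1)_e\bigr)
  \le\mathbb E\bigl[\Phi\bigl((\xi_e^{(a)}/a_e-1)_e\bigr)\bigm|\xi^{(p)}\bigr],
\]
and taking expectations yields \eqref{eq:probability-monotonicity}. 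Jensen is legitimate for convex $\Phi$ with values in $[0,\infty]$, because such a function is a supremum of affine minorants on its domain, and the vectors involved take finitely many values.

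The coupling is built by thinning, independently over $e$. Let $\xi_e^{(p)}\sim\operatorname{Bernoulli}(p_e)$, and let $\chi_e\sim\operatorname{Bernoulli}(a_e/p_e)$ be independent of everything else. Set $\xi_e^{(a)}:=\xi_e^{(p)}\chi_e$. Then $\xi_e^{(a)}\sim\operatorname{Bernoulli}(a_e)$, and the pairs $(\xi_e^{(p)},\xi_e^{(a)})$ are independent across $e$, so both families are independent with the right marginals. The conditional mean is $\mathbb E[\xi_e^{(a)}\mid\xi^{(p)}]=\xi_e^{(p)}a_e/p_e$. Dividing by $a_e$ and subtracting $1$ gives exactly the displayed identity. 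Independence across coordinates ensures that conditioning on the whole vector $\xi^{(p)}$ affects coordinate $e$ only through $\xi_e^{(p)}$.

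The only point needing care is the Jensen step when $\Phi$ can take the value $+\infty$. If the right-hand side of \eqref{eq:probability-monotonicity} is infinite, there is nothing to prove. Otherwise $\Phi$ is finite at every atom of the $a$-sampled vector that has positive probability. Each conditional expectation is then a finite convex combination of such atoms, so it lies in the (convex) domain of $\Phi$, and the finite-dimensional Jensen inequality for convex combinations applies directly. I expect no substantial obstacle; the whole argument is the coupling construction.
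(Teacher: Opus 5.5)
Your proposal is correct and is essentially the paper's proof: the same thinning coupling $\xi_e^{(a)}=\xi_e^{(p)}\chi_e$ with independent $\chi_e\sim\operatorname{Bernoulli}(a_e/p_e)$, the identity $\mathbb E[\xi_e^{(a)}/a_e-1\mid\xi^{(p)}]=\xi_e^{(p)}/p_e-1$, and conditional Jensen followed by averaging over $\xi^{(p)}$. Your extra handling of the value $+\infty$ via finite convex combinations is a welcome refinement; note that your remark that a convex function is a ``supremum of affine minorants'' would need lower semicontinuity, but the finite-Jensen argument already covers this case.
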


\begin{proof}
First sample $\xi_e^{(p)}$.  Conditional on $\xi_e^{(p)}=1$, retain the
coordinate once more with probability $a_e/p_e$, using independent thinning
variables for the different coordinates; if $\xi_e^{(p)}=0$, discard it.
Denote the resulting indicator by $\xi_e^{(a)}$.  Then
\[
  \mathbb E\!\left[
    \left.\frac{\xi_e^{(a)}}{a_e}-1\,\right|\,\xi^{(p)}
  \right]
  =\frac{\xi_e^{(p)}}{p_e}-1.
\]
Apply conditional Jensen's inequality and then average over $\xi^{(p)}$.
\end{proof}

To approximate the undirected energy, we use independent sampling with
probabilities determined by effective resistances
\cite{SpielmanSrivastava2011}.  For an unweighted undirected graph
$F=(V,E_F)$ with Laplacian matrix $L_F$, the effective resistance of an
edge $e=\{u,v\}$ is
\begin{equation}
  \ell_e:=(\mathbf 1_u-\mathbf 1_v)^\top
    L_F^\dagger(\mathbf 1_u-\mathbf 1_v),
  \label{eq:undirected-effective-resistance}
\end{equation}
where $L_F^\dagger$ is the Moore--Penrose pseudoinverse of $L_F$.
Equivalently, with unit resistance on each edge of $F$, $\ell_e$ is the
voltage difference between $u$ and $v$ when one unit of current enters at
$u$ and leaves at $v$.  The next lemma uses these quantities to choose
edge inclusion probabilities.

\begin{lemma}[Independent spectral sampling]
\label{lem:independent-spectral-sampling}
There is a universal constant $C>0$ such that the following holds.  Let
$F=(V,E_F)$ be an unweighted undirected graph, put $n:=|V|$, and let
$0<\eta\le1/2$.
There are probabilities $b_e\in(0,1]$, $e\in E_F$, with
\begin{equation}
  \sum_{e\in E_F}b_e
  \le C\frac{n\log(2n)}{\eta^2},
  \label{eq:spectral-budget}
\end{equation}
such that the following holds.  Suppose that $b_e\le p_e\le1$.  Retain each
edge $e$ independently with probability $p_e$, and let $\xi_e$ denote its
indicator.  Then, with probability at least $7/8$,
\begin{equation}
  \left|
    \sum_{e=\{u,v\}\in E_F}
    \left(\frac{\xi_e}{p_e}-1\right)(x_u-x_v)^2
  \right|
  \le \eta\sum_{e=\{u,v\}\in E_F}(x_u-x_v)^2
  \qquad\forall x\in\mathbb R^V.
  \label{eq:spectral-error}
\end{equation}
\end{lemma}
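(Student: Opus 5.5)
We will prove \cref{lem:independent-spectral-sampling} by the standard effective-resistance argument of Spielman and Srivastava, phrased as a matrix concentration bound with an arbitrary increase of the probabilities. We may work on each connected component of $F$ separately, or more simply on the range of $L_F$. Set
\[
  b_e:=\min\{1,\,C_0\eta^{-2}\log(2n)\,\ell_e\},
\]
with $C_0$ a large universal constant. Foster's theorem gives $\sum_{e}\ell_e=\operatorname{rank}(L_F)\le n-1$, which yields the budget \eqref{eq:spectral-budget}.

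For the error bound, write $y_e:=L_F^{\dagger/2}(\mathbf 1_u-\mathbf 1_v)$ for $e=\{u,v\}$, so that $\|y_e\|_2^2=\ell_e$ and $\sum_e y_ey_e^\top=\Pi$, the projection onto $\operatorname{range}(L_F)$. The statement \eqref{eq:spectral-error} for all $x$ is equivalent to
\[
  \Bigl\|\sum_e\Bigl(\frac{\xi_e}{p_e}-1\Bigr)y_ey_e^\top\Bigr\|_{\rm op}\le\eta,
\]
because every $x$ splits into a component in $\operatorname{range}(L_F)$ and a component in the kernel, and the kernel component contributes nothing to either side. This reduction holds because each $\mathbf 1_u-\mathbf 1_v$ lies in $\operatorname{range}(L_F)$.

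The matrix is a sum of independent, mean-zero, self-adjoint summands $Y_e:=(\xi_e/p_e-1)y_ey_e^\top$. Edges with $p_e=1$ give $Y_e=0$. For every other edge, $p_e\ge b_e=C_0\eta^{-2}\log(2n)\ell_e$, and we use the following bounds.
\begin{enumerate}[label=(\roman*)]
  \item Each summand satisfies $\|Y_e\|_{\rm op}\le\ell_e/p_e\le\eta^2/(C_0\log(2n))$.
  \item The variance satisfies
  \[
    \Bigl\|\sum_e\mathbb E Y_e^2\Bigr\|_{\rm op}
    =\Bigl\|\sum_e\frac{1-p_e}{p_e}\ell_e\,y_ey_e^\top\Bigr\|_{\rm op}
    \le\frac{\eta^2}{C_0\log(2n)}\|\Pi\|_{\rm op}
    \le\frac{\eta^2}{C_0\log(2n)}.
  \]
\end{enumerate}
These bounds hold for every choice $p_e\ge b_e$, so the argument does not depend on how the probabilities were raised. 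Matrix Bernstein (Tropp) in dimension at most $n$ then gives
\[
  \Pr\bigl[\|\textstyle\sum_eY_e\|_{\rm op}>\eta\bigr]
  \le 2n\exp\!\Bigl(-\frac{\eta^2/2}{\sigma^2+K\eta/3}\Bigr)
  \le 2n\exp(-cC_0\log(2n)),
\]
where $\sigma^2$ is the variance bound in (ii) and $K$ is the summand bound in (i). Since $\eta\le1/2$, we have $K\eta/3\le\eta^2/(C_0\log(2n))$, which justifies the last inequality. Taking $C_0$ large makes this probability at most $1/8$.

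The only point that needs care is that the statement allows any $p_e\ge b_e$ rather than exactly $b_e$. Bounds (i) and (ii) hold monotonically in $p_e$, so this causes no difficulty. Alternatively, one could invoke \cref{lem:probability-increase} with the convex function $\Phi$ equal to the operator norm, but that route controls only expectations, so the direct Bernstein bound is cleaner. We therefore expect no genuine obstacle; the remaining work is routine bookkeeping of the constants.
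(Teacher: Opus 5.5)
Your proposal is correct and follows essentially the same route as the paper: effective-resistance probabilities $b_e=\min\{1,C\eta^{-2}\log(2n)\ell_e\}$, Foster's identity for the budget, and matrix Bernstein applied to the whitened rank-one summands $(\xi_e/p_e-1)y_ey_e^\top$. As in the paper, your summand bound $\tau$ and variance bound $\tau\Pi$ use only $p_e\ge b_e$, so they cover raised probabilities directly without invoking \cref{lem:probability-increase}.
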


\begin{proof}
Using $\ell_e$ from \eqref{eq:undirected-effective-resistance}, take
$b_e=\min\{1,C\eta^{-2}\log(2n)\ell_e\}$.  Since $F$ is unweighted,
$\sum_{e\in E_F}\ell_e=\operatorname{rank}(L_F)\le n-1$, which gives
\eqref{eq:spectral-budget}.

For each $e=\{u,v\}$, choose either order of its endpoints and put
$v_e:=(L_F^\dagger)^{1/2}(\mathbf 1_u-\mathbf 1_v)$.  If $P$ is the
orthogonal projector onto the image of $L_F$, then
$\sum_e v_ev_e^\top=P$ and $\|v_e\|_2^2=\ell_e$.
Set $\tau:=\eta^2/(C\log(2n))$ and
\[
  Y_e:=\left(\frac{\xi_e}{p_e}-1\right)v_ev_e^\top.
\]
These matrices are independent and centered.  If $p_e=1$, then $Y_e=0$;
otherwise $p_e\ge b_e$ implies $\ell_e/p_e\le\tau$.  Consequently,
\[
  \|Y_e\|_{\mathrm{op}}\le\tau,
  \qquad
  \sum_e\mathbb E Y_e^2
  =\sum_e\frac{1-p_e}{p_e}\ell_e v_ev_e^\top
  \preceq\tau P,
\]
where $\preceq$ denotes the positive semidefinite order.  Applying the
matrix Bernstein inequality \cite[Theorem~1.4]{Tropp2012} to $(Y_e)_e$ and
$(-Y_e)_e$ gives
\[
  \Pr\!\left[\left\|\sum_eY_e\right\|_{\mathrm{op}}>\eta\right]
  \le 2n\exp\!\left(-\frac{\eta^2}{2\tau+(2/3)\tau\eta}\right)
  \le\frac18
\]
for sufficiently large $C$.  On the complementary event, evaluating the
quadratic form of $\sum_eY_e$ on $L_F^{1/2}x$ gives
\eqref{eq:spectral-error} for every $x\in\mathbb R^V$.  This argument uses
only $p_e\ge b_e$, so it also covers coordinatewise increases in the
sampling probabilities.
\end{proof}

\subsection{Proof of the tournament theorem}

The identity \eqref{eq:feedback-decomposition} writes $Q_T^+(x)$ as
$\sum_{i<j}\sigma_{ij}q^R_{ij}(x)+K_{\mathcal F}(x)$.
We approximate the signed sum using \cref{lem:signed-tt-sampling}, with
error measured relative to $Q_R^+$, and the undirected term using
\cref{lem:independent-spectral-sampling}, with error measured relative to
$K_{\mathcal F}$.  The bounds \eqref{eq:score-domination} and
\eqref{eq:feedback-domination} compare both energies with $Q_T^+$, losing
only an $O(\log n)$ factor.  The construction below uses a common Bernoulli
indicator for the signed and undirected terms of each pair, producing a
$(1\pm\varepsilon)$-spectral sparsifier with
$\widetilde O(\min\{n^2,n/\varepsilon^3\})$ retained arcs.

\begin{proof}[Proof of \cref{thm:tournament-sparsification}]
Let $T=(V,A)$ be the tournament in the theorem, order its vertices as in
\eqref{eq:score-order}, and let $R$ be the transitive tournament in
\eqref{eq:score-backbone}.  Let
$\mathcal F=(V,E_{\mathrm{dis}})$ be the undirected graph whose edges are the
pairs on which $T$ and $R$ disagree.  For each
$e=(v_j,v_i)\in E(R)$, let $e_T$ be the unique arc of $T$ with endpoints
$v_i,v_j$, let $\bar e:=\{v_i,v_j\}$, and set $\sigma_e:=\sigma_{ij}$.
The map $e\mapsto\bar e$ is a bijection from $E(R)$ to the unordered vertex
pairs.  Put
\begin{equation}
  \eta:=\frac{\varepsilon}{C_{\mathrm{tour}}(\Lambda_n+1)},
  \label{eq:tournament-eta}
\end{equation}
where $C_{\mathrm{tour}}$ is a sufficiently large universal constant.

Apply Lemma~\ref{lem:signed-tt-sampling} to $R$ with signs $\sigma_e$ and error
parameter $\eta$, and obtain probabilities $a_e$.  Apply
Lemma~\ref{lem:independent-spectral-sampling} to the undirected graph
$\mathcal F$, obtaining probabilities $b_f$ for $f\in E_{\mathrm{dis}}$.
Set $b_f:=0$ for every other unordered pair $f$.  For every $e\in E(R)$, set
\begin{equation}
  p_e:=\max\{a_e,b_{\bar e}\}.
  \label{eq:common-probabilities}
\end{equation}
For each $e\in E(R)$, independently sample
$\xi_e\sim\operatorname{Bernoulli}(p_e)$.  If $\xi_e=1$, retain $e_T$ with
weight $1/p_e$.  Define the resulting random energy by
\[
  \widehat Q_T^+(x)
  :=\sum_{e\in E(R)}\frac{\xi_e}{p_e}q_{e_T}^+(x).
\]

Define
\[
  Z_R(x):=
  \sum_{e\in E(R)}
  \sigma_e\left(\frac{\xi_e}{p_e}-1\right)q_e^R(x),
\]
and
\[
  Z_{\mathcal F}(x):=
  \sum_{\substack{e\in E(R)\\ \bar e\in E_{\mathrm{dis}}}}
  \left(\frac{\xi_e}{p_e}-1\right)k_e(x).
\]
The functional
\[
  z\longmapsto
  \sup_{Q_R^+(x)>0}
  \frac{|\sum_{e\in E(R)}\sigma_e z_e q_e^R(x)|}{Q_R^+(x)}
\]
is convex.  Since $p_e\ge a_e$, Lemmas~\ref{lem:signed-tt-sampling} and
\ref{lem:probability-increase} imply
\[
  \mathbb E\sup_{Q_R^+(x)>0}
  \frac{|Z_R(x)|}{Q_R^+(x)}
  \le C\eta.
\]
Hence, by Markov's inequality, with probability at least $7/8$,
\begin{equation}
  |Z_R(x)|\le 8C\eta Q_R^+(x)
  \qquad\forall x.
  \label{eq:signed-good-event}
\end{equation}
Because $p_e\ge b_{\bar e}$ whenever $\bar e\in E_{\mathrm{dis}}$ and the
variables $\xi_e$ are independent,
Lemma~\ref{lem:independent-spectral-sampling} also gives, with probability at
least $7/8$,
\begin{equation}
  |Z_{\mathcal F}(x)|\le \eta K_{\mathcal F}(x)
  \qquad\forall x.
  \label{eq:spectral-good-event}
\end{equation}

For each $e\in E(R)$ whose orientation disagrees with $T$, multiply the
pairwise identity \eqref{eq:disagreement-pair-energy} by $\xi_e/p_e$ on
both sides.  On each agreeing pair, use $q_{e_T}^+(x)=q_e^R(x)$ with the
same coefficient $\xi_e/p_e$.
Consequently,
\begin{equation}
  \widehat Q_T^+(x)-Q_T^+(x)
  =Z_R(x)+Z_{\mathcal F}(x).
  \label{eq:sampled-feedback-decomposition}
\end{equation}
On the intersection of the events
\eqref{eq:signed-good-event} and
\eqref{eq:spectral-good-event}, the energy comparisons
\eqref{eq:score-domination} and \eqref{eq:feedback-domination} give
\begin{align*}
  |\widehat Q_T^+(x)-Q_T^+(x)|
  &\le 8C\eta Q_R^+(x)+\eta K_{\mathcal F}(x)\\
  &\le \eta\bigl(8C\Lambda_n+\Lambda_n+1\bigr)Q_T^+(x)\\
  &\le \varepsilon Q_T^+(x),
\end{align*}
where the last inequality uses the definition of $\eta$ in
\eqref{eq:tournament-eta} and a sufficiently large $C_{\mathrm{tour}}$.

It remains to count arcs.  The definition \eqref{eq:common-probabilities}
and the sampling budgets \eqref{eq:signed-tt-budget} and
\eqref{eq:spectral-budget} give
\begin{align*}
  \sum_{e\in E(R)}p_e
  &\le \sum_{e\in E(R)}a_e+\sum_{f\in E_{\mathrm{dis}}}b_f\\
  &\le \widetilde O\!\left(\frac{n}{\eta^3}\right)
       +O\!\left(\frac{n\log(2n)}{\eta^2}\right)\\
  &=\widetilde O\!\left(\frac{n}{\varepsilon^3}\right),
\end{align*}
because $\Lambda_n=O(\log(2n))$.  Markov's inequality shows that the number of
retained arcs is at most a universal constant times this expectation with
probability at least $7/8$.  The events in \eqref{eq:signed-good-event} and
\eqref{eq:spectral-good-event}, together with the event that the number of
retained arcs is at most a universal constant times its expectation, therefore
occur simultaneously with positive probability.  Fix such a realization.
Finally,
retaining every tournament arc gives the cap by $\binom n2$.
\end{proof}

\paragraph{Acknowledgements}

We thank Kam Chuen Tung for helpful discussions at an early stage of this work during his internship at the National Institute of Informatics.

\bibliographystyle{plain}
\bibliography{references}

\appendix

\section{Probabilistic Estimates for Bernoulli-Restricted Gaussian Matrices}
\label{app:prob-tools}

This appendix proves the moment estimates for the maximum retained-entry
count (\cref{lem:binomial-max}) and the operator norm of a
Bernoulli-restricted Gaussian matrix (\cref{lem:masked-gaussian-norm}).

\begin{proof}[Proof of \cref{lem:binomial-max}]
Stochastic domination and Bernstein's inequality give
$\Prob[Z_r>C(\mu+u)]\le e^{-u}$ for $u\ge0$, after changing $C$.
A union bound yields
\[
    \Prob\left[\max_r Z_r>C(\mu+\log(2M)+u)\right]\le e^{-u}.
\]
Integrating this tail using
$\mathbb E Y^t=t\int_0^\infty u^{t-1}\Prob[Y>u]\,du$
for $Y=\max_r Z_r$ proves the assertion.
\end{proof}

\begin{proof}[Proof of \cref{lem:masked-gaussian-norm}]
Let $\Delta$ be the maximum number of retained entries in any row or
column.  Since $m,k\le D$ and $p=\rho/D$, each count has mean at most
$2\rho$.  \Cref{lem:binomial-max} gives
\begin{equation}
    \|\Delta\|_{L_q}\le C(\rho+\log(2n)+q)\le C\rho.
    \label{eq:appendix-retained-count-moment}
\end{equation}
Condition on $\zeta$ and form the symmetric matrix
\[
    A=\begin{pmatrix}0&X\\X^\top&0\end{pmatrix}.
\]
We bound $\|X\|_{\rm op}^{2q}=\|A\|_{\rm op}^{2q}$ by
$\operatorname{tr}(A^{2q})$ and expand the trace as a sum over closed
walks of length $2q$ on the bipartite graph of retained entries.
Independence and the scalar identities
$\mathbb E g^{2s+1}=0$, $\mathbb E g^{2s}=(2s-1)!!$ show that the expected
product along a walk is $p^{-2q}$ times the number of pairings of its
$2q$ steps in which the two steps in each pair use the same edge.
There are $(2q-1)!!$ pairings.  For a fixed pairing, choose the starting
vertex in at most $m+k$ ways.  At the first occurrence of each pair,
choose an incident edge in at most $\Delta$ ways; at its second
occurrence, the edge is already fixed.  Thus there are at most
$(m+k)\Delta^q$ compatible walks for each pairing, and
\[
    \mathbb E_g\|X\|_{\rm op}^{2q}
    \le (m+k)(2q-1)!!\,\Delta^q/p^{2q}.
\]
Average over $\zeta$, take the $2q$th root, and use the bound on
$\|\Delta\|_{L_q}$ in \eqref{eq:appendix-retained-count-moment}, together with
$(2q-1)!!\le(2q)^q$ and $(m+k)^{1/(2q)}\le e^{1/2}$ to obtain
\[
    \bigl\|\|X\|_{\rm op}\bigr\|_{L_q}
    \le\bigl\|\|X\|_{\rm op}\bigr\|_{L_{2q}}
    \le C\sqrt{q\rho}/p=CD\sqrt{q/\rho}.
\]
\end{proof}

\end{document}